\theoremstyle{thmstyleone}%
\newtheorem{theorem}{Theorem}
\newtheorem{lemma}[theorem]{Lemma}
\theoremstyle{thmstyletwo}%
\newtheorem{remark}{Remark}%
\theoremstyle{thmstylethree}%
\newtheorem{definition}{Definition}%
\begin{document}

\title{Spacetime games subsume causal contextuality scenarios}

\author*[1]{\fnm{Ghislain} \sur{Fourny}}\email{ghislain.fourny@inf.ethz.ch}

\affil*[1]{\orgdiv{Department of Computer Science}, \orgname{ETH Zurich}, \orgaddress{\street{Stampfenbachstrasse 114}, \city{Zürich}, \postcode{8092}, \country{Switzerland}}}

\abstract{We show that a category of causal contextuality scenarios with no cycles, unique causal bridges, and causally secured covers is equivalent to a category containing a subclass of the formerly published spacetime games, which generalize game theory to decisions arbitrarily located in Minkowski spacetime. This insight leads to certain constructs and proofs being shorter, simpler, and more intuitive when expressed in the spacetime game framework than in the causal contextuality scenario framework. The equivalence of categories and the modular structure of causal contextuality theory also implies that it is possible to build (pure) strategy sheaves, mixed strategy presheaves (equivalent to distribution presheaves) and empirical models on top of spacetime games: the obstruction to a global section in the presence of contextuality corresponds to the non-existence of a mixed strategy in the sense of the Nash game theory framework. This shows that the insights of both frameworks taken together can contribute positively to advancing the field of quantum foundations. }

\keywords{Game theory, Quantum theory, Contextuality, Locality, Bell inequalities, Special relativity}

\maketitle

\maketitle

\section{Introduction}

Quantum theory has been challenging the physics community for a century now. The question remains open of whether there exists an extension theory, as envisioned by Einstein, Podolsky, and Rosen (EPR) \cite{Einstein1935}, which is able not only to predict the same statistical distributions as the Born rule for repeated experiments---they are observed in Nature with very high precision---but also to predict each \emph{individual} measurement outcomes.

But the way such a theory can look like is constrained. Indeed, two assumptions \citep{Bell1964} are made for Bell inequalities to hold\footnote{Determinism is derived from these two with a short argument made by Bell \cite{Bell1964}, thus the companion adjective ``local-realist.''}:

\begin{itemize}
\item Freedom of choice: the choice of a measurement setting is independent of anything that could not have been caused by it. This implies \cite{Sengupta2025} (i) measurement independence: the choice of a measurement setting, by the experimenter, is exogeneous and, as such, independent of a hypothetical hidden variable, and (ii) parameter independence: a measurement outcome does not depend on a remote choice of measurement setting.
\item Local determinism\footnote{Bell in fact derives this assumption from parameter independence and of the application of the Born rule, which gives a probability of one when both settings are identical.}: a measurement outcome is completely determined by what happened in its past light cone.
\end{itemize}

Since there is experimental evidence \citep{Aspect1982} that Bell inequalities are broken by Nature, it follows that one of these two assumptions does not hold. Most of the quantum physics literature so far has focused on assuming freedom of choice and dropping local determinism. Little, but not nothing, has been published on the second possibility, namely, that freedom of choice does not hold.

Perhaps this is related to its name: ``free choice'' or ``free will'' \cite{Renner2011}, which can lead to strong or emotional philosophical opinions on the matter: we are scientists, but we are also human beings. Yet, it is so that freedom of choice is an assumption that can be formalized mathematically in terms of the statistical independence and/or the counterfactual independence (the two are distinct but related in subtle ways \cite{Fourny2020mp}) of random variables, and the scientific approach dictates that we must consider the possibility that freedom of choice does not hold with as much reason as the possibility that it does. The risk of failing to do so is that an extension theory exists but remains undiscovered because we do not commit any resources to finding it.

One of the most interesting developments in the quantum foundations is the realization that many surprising and disturbing aspects of quantum theory can be abstracted away from physical experiments, and defined purely in terms of which experiments a physicist decides to carry out, and which outcomes they observe. Furthermore, one does not need infinite lists of choices: EPR setups are already interesting with two agents, two settings per agent, and two outcomes per agent and setting (2-2-2). In 2019, we demonstrated \citep{Fourny2019} that one can model and view an experimental setup as a game between observers and the universe, Nature being an economic agent that minimizes action (principle of least action).

Later, Baczyk and Fourny \cite{Baczyk2023} gave a qualitative argument that the Nash equilibrium \cite{Nash1951} paradigm, which has been the mainstream framework of game theory for 75 years, corresponds to the conditions that lead to Bell inequalities:

\begin{itemize}
\item locality corresponds to imperfect information (dashed lines);
\item realism corresponds to the strategic form that can be built from a game in extensive form;
\item freedom of choice corresponds to unilateral deviations;
\item a local hidden-variable model corresponds to a mixed strategy (which is a statistical mixture of pure strategies) in the strategic form.
\end{itemize}

With this insight, one learns that any extension theory of quantum physics, with this game interpretation, cannot follow the Nash paradigm, and must follow instead a non-Nashian game theory paradigm, such as the Perfect Prediction Equilibrium or the Perfectly Transparent Equilibrium \cite{Fourny2020mp}.

Not being aware of our previous work \citep{Fourny2019}\cite{Fourny2020} of 2019 and 2020 as an antecedent, Abramsky et al. \cite{Abramsky2024} independently rediscovered in 2024 the idea that quantum experimental setups spread in spacetime can be seen as games between observers and the universe. They adapt a previous paper of 2011 \cite{Abramsky2011}, which we consider a true masterpiece of mathematics, on sheaf-theoretic structures from the flat case to the causal case. The central, original contribution of \cite{Abramsky2024} is that empirical models can be generalized to the causal case in such a way that non-local or more generally contextual quantum experiments exhibit obstructions to a global section, that is, also when measurements causally dependent on each other. This is done by composing a distribution functor with a strategy presheaf instead of the event sheaf. This provides a considerably stronger and timely \emph{quantitative} counterpart to our qualitative argument \cite{Baczyk2023}.

In this paper, we show that our framework is expressive enough to capture the game structures underlying \cite{Abramsky2024}, called causal contextuality scenarios, under some assumptions\footnote{In future work, we plan to generalize the equivalence of categories by relaxing some of these assumptions. In this paper we decided to focus on this already generic case.}: no cycles, unique causal bridges, and causally secured covers. Formally, under these assumptions, causal contextuality scenarios are equivalent, as a category, to a subcategory of spacetime games that we call alternative spacetime games. We provide the corresponding witness functor.

We conclude by showing that the bridge between the two formalisms is mutually beneficial and leads to a clearer understanding of a decision-theoretical view of quantum physics. In particular, we discuss how certain seemingly complex constructs, reasonings, or proofs involving causal contextuality scenarios become one-line arguments in the spacetime game framework. This includes a short argument that, under the assumptions made in this paper, the strategy presheaf is a sheaf since strategies can be constructed as sets of events using the strategic form \cite{Kuhn1950} similar to the flat case. Likewise, in the opposite direction, the major contribution of \cite{Abramsky2024} on empirical models, thanks to its modular functor-based structure, can be applied on top of alternating spacetime games, since they are structurally equivalent to the causal contextuality scenarios fulfilling our three assumptions.

\section{Background}

But before we do so, we are first going through the spacetime game framework with a few examples.

\subsection{Spacetime games and their extensive form}

Game theory has two main game structures \cite{Rubinstein1994}: the normal form and the extensive form. As explained by Baczyk and Fourny \cite{Fourny2020}\cite{Baczyk2023}, the normal form corresponds to spacelike-separated decisions, while the extensive form corresponds to timelike-separated decisions and even adaptive measurements. A general framework combining both, i.e., for decisions with arbitrary positions in spacetime, did not exist before 2019 to the best of our knowledge. If we are to consider generic quantum experimental setups with labs running local experiments placed at arbitrary positions in spacetime, and sending quantum states to one another---this corresponds to a fixed causal order in the process matrix framework \cite{Oreshkov2012}---, we need an extension and unification of the normal form and extensive form.

Thus, a new class of games called spacetime games with perfect information was introduced \cite{Fourny2019} \cite{Fourny2020}. It extends the semantics of game theory to special relativity. In spacetime games, the agents make decisions at various positions in Minkowski spacetime. Spacetime games can be seen as the least common denominator of games in normal form on the one hand (spacelike separation), and games in extensive form with perfect information on the other hand (timelike separation).

\begin{figure}
\begin{center}
\includegraphics[width=\textwidth]{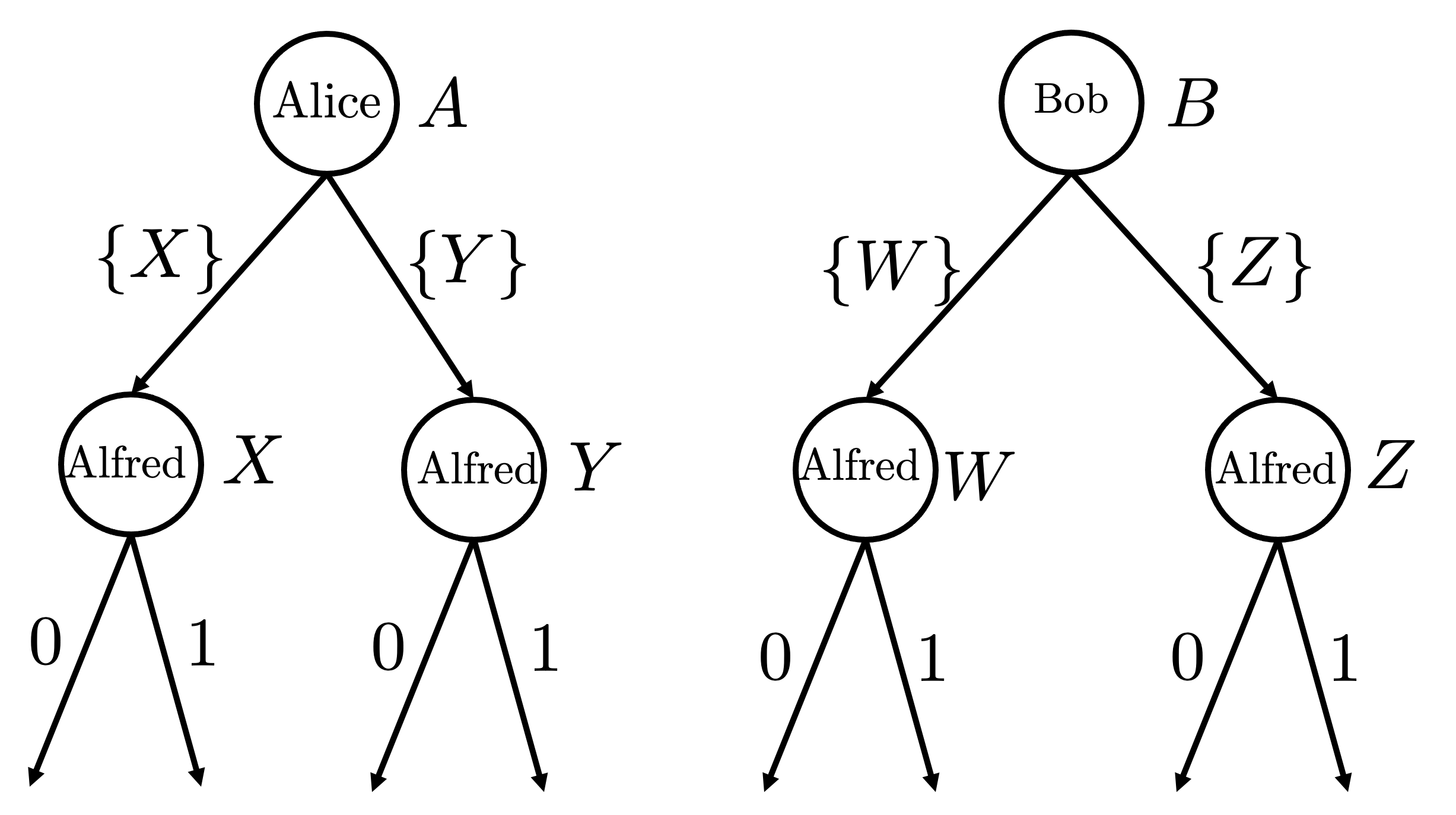}
\end{center}
\caption{A spacetime game with perfect information that represents a Bell experiment with two spacelike-separated observers. Alice and Bob each select a choice of measurement context (here, only singleton contexts). Alice can pick X or Y, and Bob can pick W or Z. Then, for each one of the two measurements selected, Alfred selects an outcome (0 or 1). Perfect information in spacetime games is suitable for the study of non-locality, a special case of contextuality.}
\label{fig1}
\end{figure}

The same documents explain how these games can be used to model quantum experiments as a game played between observers and nature. An example of a game corresponding to the Bell experiment is shown in Figure \ref{fig1}, where Alice and Bob are the observers, and Alfred, whom we like to visualize as a cat, represents nature.

A canonical injection of the class of spacetime games with perfect information into the class of games in extensive form with imperfect information is also given. Figure \ref{fig2} shows the game in extensive form with imperfect information that corresponds to the game of Figure \ref{fig1}. The two have equivalent semantics.

This is important because it means that spacetime games, when converted to games in extensive form with imperfect information, inherit several decades' worth of research, constructs, and proofs carried out on games in extensive form with imperfect information for the past 75 years. This is key to our claim that several constructs and proofs behind causal contextuality scenarios can be significantly simplified.  In other words, spacetime games provide an intuitive visual representation for quantum experimental setups but also a conversion algorithm to represent such setups as games in extensive form with imperfect information.

\subsection{Imperfect information and the dashed lines}

Imperfect information is visualized with dashed lines, which group nodes into equivalence classes called \emph{information sets}. The meaning of these dashed lines is that the agent who makes the decision at these nodes \emph{does not know} at which one of these nodes they are making their decision: they are not fully informed about the decisions made at ancestor nodes\footnote{These ancestor nodes need not all be in the past, they can also be spacelike-separated in the spacetime game. Since information cannot propagate faster than light, it is not possible to be informed about decisions made at spacelike-separated locations.} in the extensive form. This is why it is called imperfect information. Concretely, in Figure \ref{fig2}, Bob is not informed of Alice's decision ($\{X\}$ or $\{Y\}$), and this is why his two nodes are connected with a dashed line. This has to do with relativity theory: although Alice may have already made her decision, the information has not reached Bob yet because his decision is spacelike-separated from Alice's, as can be seen in the spacelike form of Figure \ref{fig1}, which is a causal dependency graph in the sense of relativity theory. Thus, this corresponds to a form of no-signaling.

The information sets of the extensive form in Figure \ref{fig2} correspond exactly to those of the spacetime game in Figure \ref{fig1} (which are singletons, one for each node because the spacetime game has perfect information). This is always the case for any spacetime games and their extensive form. The generalization of this canonical injection to all spacetime games, even spacetime games with imperfect information, is straightforward and explained in this paper.

\subsection{Imperfect information, locality, and non-contextuality}

The semantics of imperfect information and dashed lines have a subtle difference depending on whether one looks at the spacetime game or its extensive form. In a spacetime game with imperfect information, dashed lines physically represent \emph{non-contextuality}, connecting identical decisions made in different contexts, whereas in a game in extensive form with imperfect information (at least those built from a spacetime game), dashed lines represent either \emph{locality} or \emph{non-contextuality}. The spacetime game framework makes explicit that non-locality is a special case of contextuality: in non-local scenarios, the spacetime game has perfect information and the extensive form has imperfect information, while in more general contextuality scenarios, both of them have imperfect information.

\begin{figure}
\begin{center}
\includegraphics[width=\textwidth]{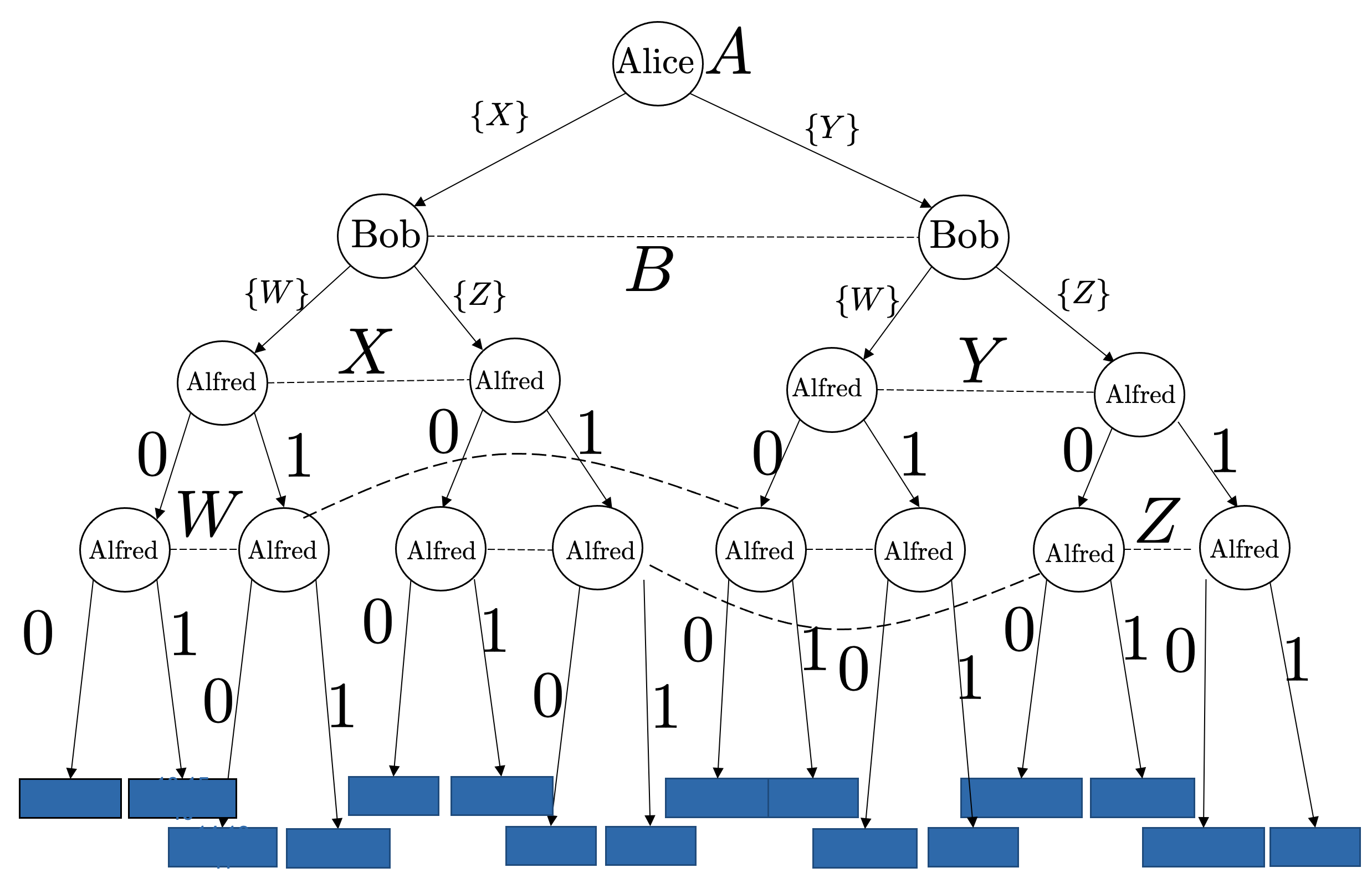}
\end{center}
\caption{The game in extensive form with imperfect information that corresponds to the spacetime shown in Figure \ref{fig1}. Alice and Bob have each one information set (A and B) for their choice of measurement setting. Alfred has four information sets X, Y, W, and Z, one for each measurement. For scenarios of interest to the quantum foundations, i.e., exhibiting non-locality or contextuality, the game in extensive form will typically have imperfect information, regardless of whether the spacetime game has perfect information (for the study of non-locality) or imperfect information (for the more general study of contextuality).}
\label{fig2}
\end{figure}

Because of the canonical injection of spacetime games into the class of games in extensive form with imperfect information, many well-studied concepts in the game theory literature apply to spacetime games. This includes flattening the strategic space with the (reduced) strategic form of a spacetime game \cite{Fourny2020}\citep{Kuhn1950}, computing Nash equilibria \cite{Nash1951}, mixed strategies (which we will see are also global sections of empirical models), etc.

\subsection{Causal contextuality scenarios}

Subsequent work by Abramsky et al. \cite{Abramsky2024} suggests modeling a subclass of such games, which they call causal contextuality scenarios, in terms of strategy presheaves. Thanks to a carefully designed modular construction based on a composition of the strategy presheaf with functors, a distribution presheaf is built, with (unsigned or signed) probabilistic or possibilistic semantics, depending on the selection of an underlying semi-ring. Then, contextuality can be witnessed by the obstruction to the existence of a global section in the distribution presheaf, given a set of compatible local distributions with respect to some cover.

The work \cite{Abramsky2024} is itself an adaptation of earlier work on the flat case \cite{Abramsky2011}, which brings causality into the picture.

As we show in this paper, certain causal contextuality scenarios---meaning, the underlying structure giving birth to the strategy presheaf---are equivalent to a subclass of spacetime games, which we call alternating spacetime games. This means that spacetime games recover causal contextuality scenarios as a special case. By transitivity (based on the claims made in \cite{Abramsky2024}), this means that by applying the functor connecting the associated strategy presheaf to the distribution presheaf, they can be used to recover the framework of Gogioso and Pinzani as a special case \citep{Gogioso2021}, as well as Hardy models \citep{Hardy1993}, PR boxes \citep{Popescu1994}, and GHZ models \citep{Greenberger1989}.

We also show that a flattened definition of the strategy presheaf based on strategic forms \citep{Kuhn1950} emerges naturally and that the sheaf property holds\footnote{for the strategy presheaf, not for the distribution presheaf}, and that temporal contextuality scenarios are subsumed by games in extensive form with perfect information.

\subsection{Spacetime games}

\subsubsection{Definition}

We now give a formal definition of spacetime games. It is equivalent to that previously published \citep{Fourny2019} \citep{Fourny2020}, but in a more compact form (see also \citep{Baczyk2023}). In this definition, we do not require perfect information. Imperfect information has the meaning proposed by Kuhn \cite{Kuhn1950} and extends naturally to spacetime games without affecting the injection into the class of games in extensive form with imperfect information\footnote{This is because merging nodes into information sets in the spacetime game translates into merging the corresponding information sets in the extensive form.} Spacetime games with perfect information allow for the study of non-locality, while spacetime games with imperfect information allow for the more general study of contextuality.

\begin{definition}[Spacetime game]
A spacetime game $\mathcal{G}$ is a tuple $( \mathcal{N}, \mathcal{R},\mathcal{P},\mathcal{A}, \rho, \chi, $ $\sigma, \mathcal{I}, \mathcal{Z}, u)$ where

\begin{itemize}
\item $(\mathcal{N}, \mathcal{R})$ is a directed acyclic graph of decision nodes, $\mathcal{N}$ being the  nodes and $\mathcal{R}$ being the edges.
\item $\mathcal{P}$ is a set of players and $\mathcal{A}$ is a set of actions (in our context, actions are measurement settings and measurement outcomes)
\item Nodes are labeled with players: $\rho\in\mathcal{P}^\mathcal{N}$
\item Nodes are labeled with sets of actions: $\chi\in\mathcal{P}(\mathcal{A})^{\mathcal{N}}$
\item Edges are labeled with actions: $\sigma\in\mathcal{A}^\mathcal{R}$ labels each edge $(N,M)$ with an action $\sigma(N,M)\in\chi(N)$.
\item A partition of $\mathcal{N}$ into information sets: a set $\mathcal{I}$ of subsets of $\mathcal{N}$ that form a partition, in a way that is compatible with $\rho$ and $\chi$, meaning that $$\forall i\in\mathcal{I},\forall N,M\in i, \rho(N)=\rho(M) \wedge \chi(N)=\chi(M)$$
\item $\mathcal{Z}\in(\mathcal{A}\cup\{\bot\})^\mathcal{I}$ is a set of outcomes, also called complete histories, which are partial functions from information sets to actions. They must obey some constraints implied by the game structure. These constraints are defined further down in Section \ref{section-histories}. The domain of definition, also called path, of an outcome is the set of information sets not mapped to $\bot$, which is the symbol used for unassigned actions.
\item $u=(u_i)_{i\in P}$ with, for each $i$, $u_i\in\mathbb{R}^\mathcal{Z}$ is a family of utility functions mapping outcomes to payoffs, with one payoff for each outcome and player.
\end{itemize}
\end{definition}

The payoffs associated with the outcomes are part of a game's semantics but are not the focus of the present paper, which focuses on the game structure. They are also not part of any causal contextuality scenario and are ignored for the purpose of the equivalence of categories\footnote{In the sense that games with the same structure and different payoffs are isomorphic.}.

\subsubsection{Further examples}

Figure \ref{fig3} shows an example of a spacetime game with imperfect information (the dashed lines have the same meaning as in Figure \ref{fig2}) and Figure \ref{fig4} shows the corresponding extensive form with imperfect information.

\begin{figure}
\begin{center}
\includegraphics[width=\textwidth]{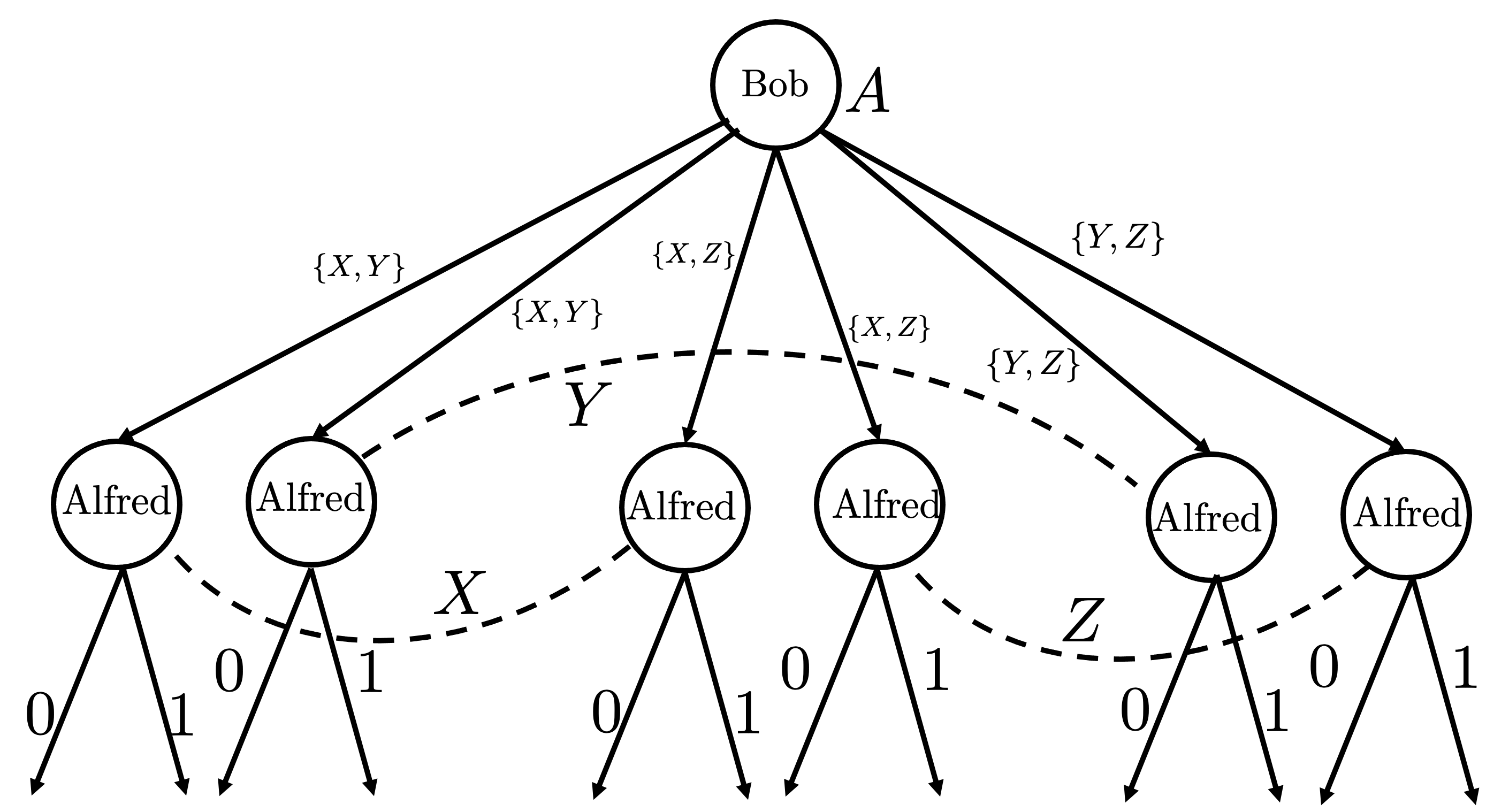}
\end{center}
\caption{A spacetime game with imperfect information that represents a non-contextuality experiment with two settings to pick out of three. Bob has one information set with three actions (the three contexts XY, XZ, and YZ). Alfred has 3 information sets X, Y, and Z for each measurement, with two actions each, 0 and 1. This is also known in the literature as a cyclic system of rank 3. Imperfect information in spacetime games captures the more general case of non-contextuality than the special case of locality.}
\label{fig3}
\end{figure}

Note that, to ease visualization, actions on the leaf nodes are shown as ``virtual'' edges with no destination node and labeled with the possible actions at these leaf nodes, but these edges are purely visual and do not belong to $\mathcal{R}$; instead, the edges leaving a leaf node $N$ correspond to the elements of $\chi(N)$.

\begin{figure}
\begin{center}
\includegraphics[width=\textwidth]{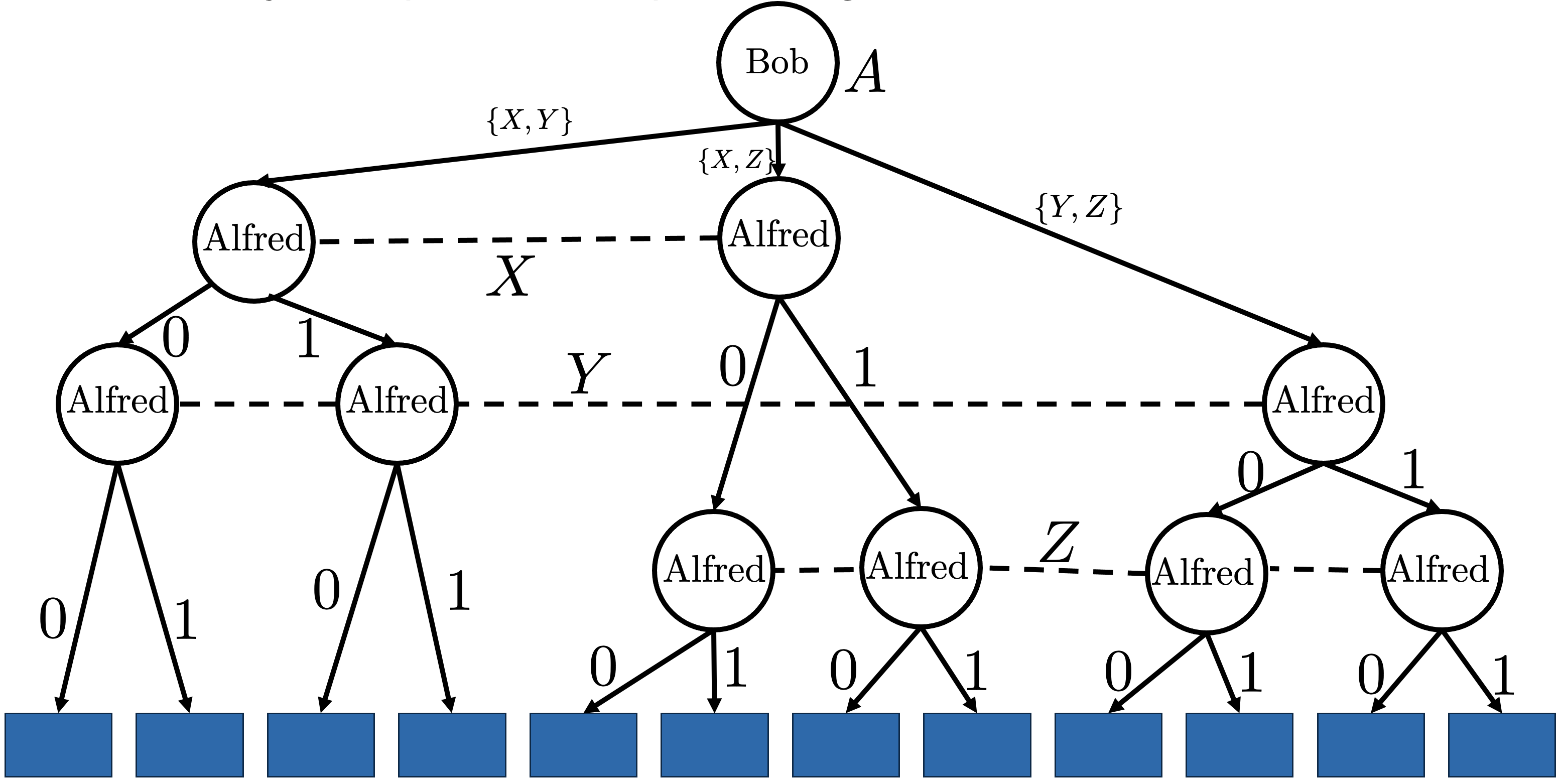}
\end{center}
\caption{The game in extensive form with imperfect information that corresponds to the spacetime game shown in Figure \ref{fig3}. It has the same information sets: XY, YZ, and XZ for Bob, and X, Y, and Z for Alfred.}
\label{fig4}
\end{figure}

\clearpage
Figure \ref{fig5} shows an example of a spacetime game with perfect information that has a causal bridge (adaptive measurement) and the corresponding extensive form is shown in Figure \ref{fig6}. Spacetime games can be much more complex and mix spacelike-separation as in Figure \ref{fig1}, timelike-separation as in Figure \ref{fig5}, and imperfect information for explicit contexts as shown in Figure \ref{fig3}.

\begin{figure}
\begin{center}
\includegraphics[width=\textwidth]{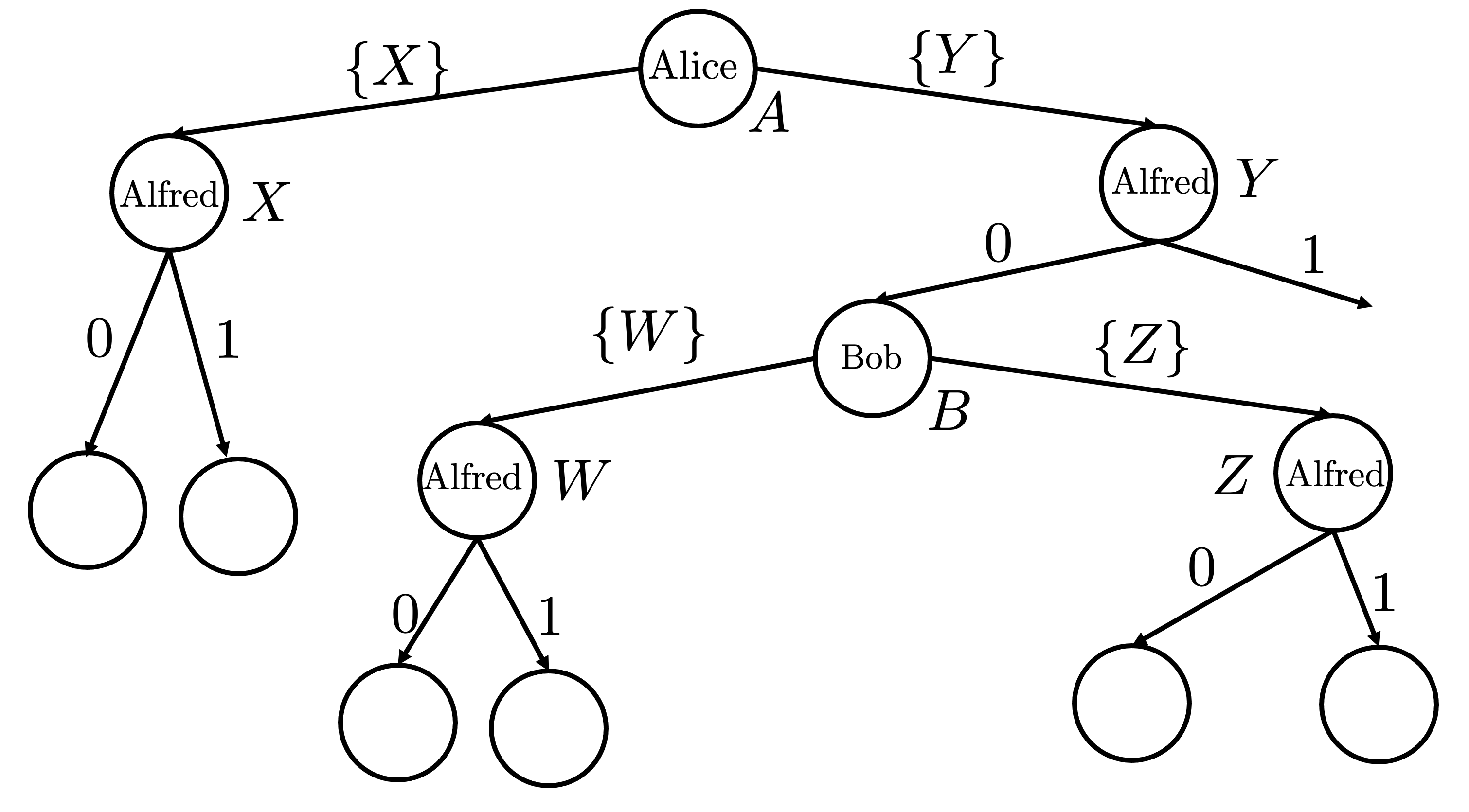}
\end{center}
\caption{A spacetime game with imperfect information that shows an example of causal bridge: Bob only carries out his experiment if Alice measured Y and obtained an outcome of 0.}
\label{fig5}
\end{figure}

\begin{figure}
\begin{center}
\includegraphics[width=\textwidth]{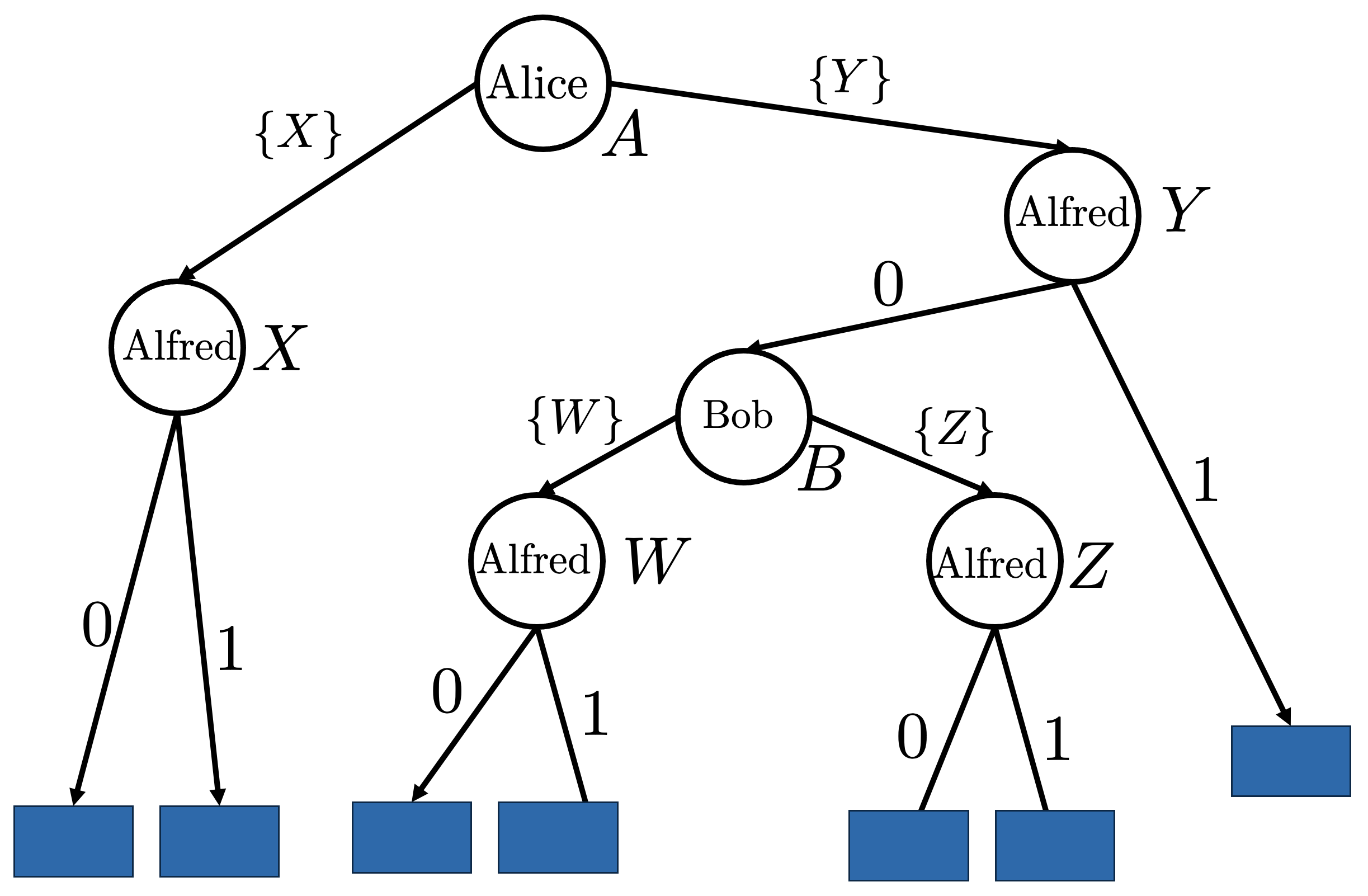}
\end{center}
\caption{The game in extensive form with, in this case, perfect information, that corresponds to the spacetime game shown in Figure \ref{fig5}. It happens to have the same tree shape because there is no spacelike separation. Such games correspond to temporal contextuality scenarios.}
\label{fig6}
\end{figure}

\clearpage
Since non-locality is a special case of contextuality, the Bell experiment can also be represented with a single player picking a context with two measurements, in a cyclic manner (XW, WY, YZ, ZX). The corresponding game is shown in Figure \ref{fig7}. The corresponding game in extensive form is shown in Figure \ref{fig8} and it is almost the same as the game shown in Figure \ref{fig2}, with the observer players merged into one, which amounts to looking at the Cartesian product of the Hilbert spaces.

\begin{figure}
\begin{center}
\includegraphics[width=\textwidth]{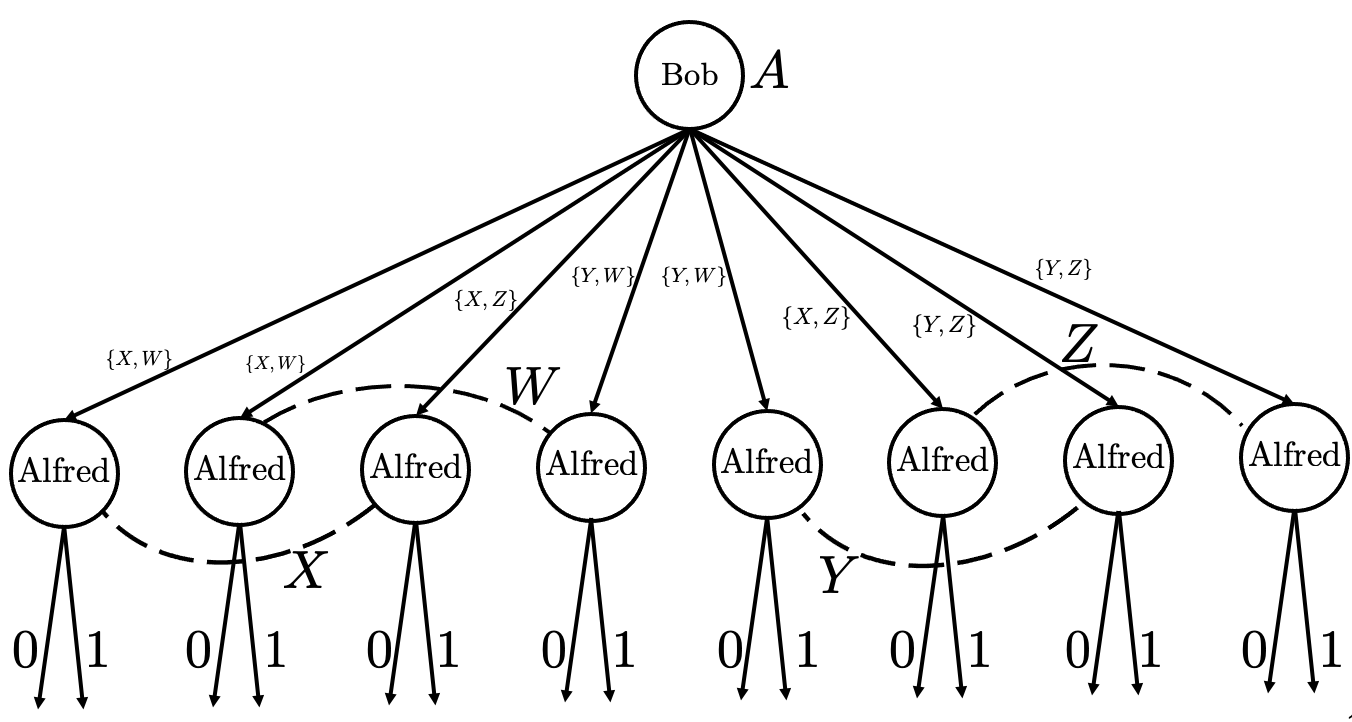}
\end{center}
\caption{A spacetime game with imperfect information that represents a Bell experiment as a non-contextuality experiment instead of a locality experiment, with a single observer player picking a context (A) of two measurement settings out of four possible contexts, and four information sets (X, Y, W, Z) played by nature. This is also known in the literature as a cyclic system of rank 4.}
\label{fig7}
\end{figure}

\begin{figure}
\begin{center}
\includegraphics[width=\textwidth]{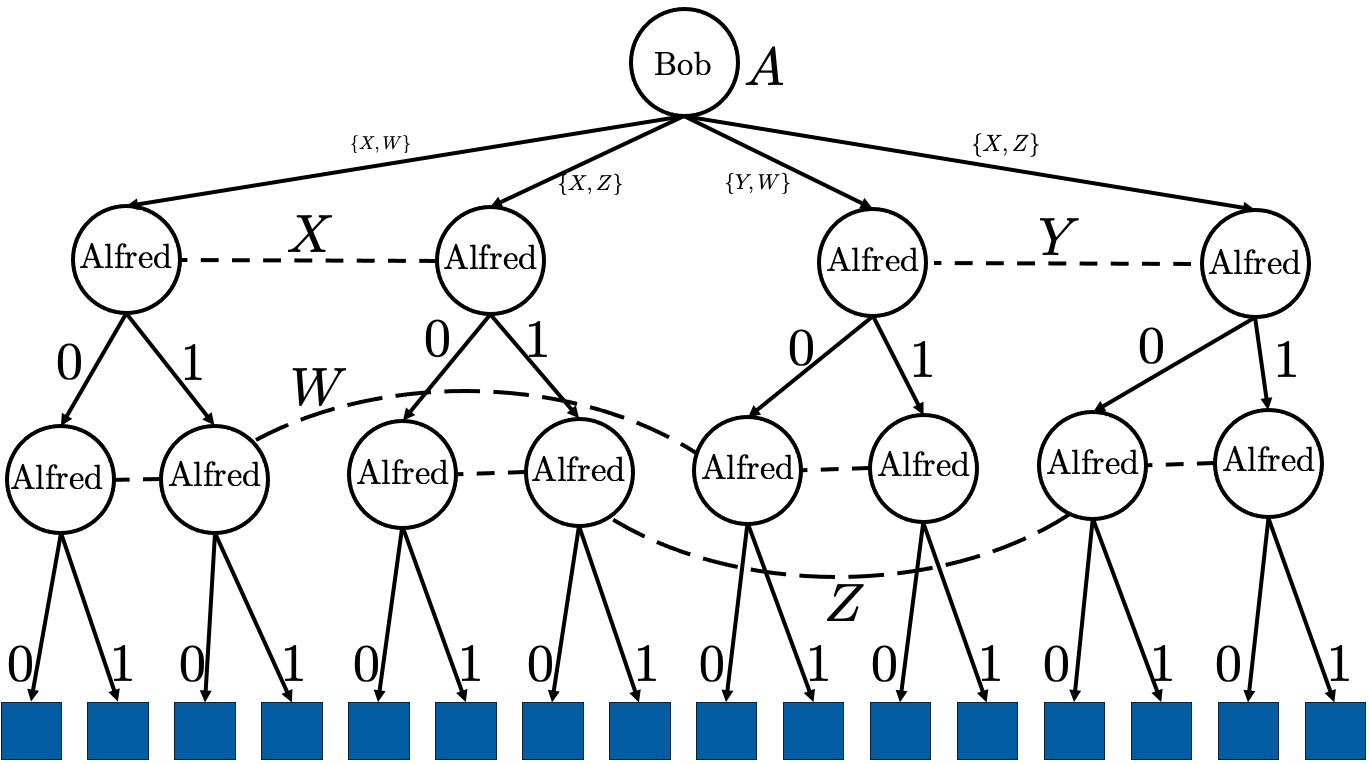}
\end{center}
\caption{The game in extensive form with imperfect information that corresponds to the spacetime game shown in Figure \ref{fig7}. It has the same information sets: A for Bob, and X, Y, W, and Z for Alfred.}
\label{fig8}
\end{figure}
\clearpage

Figure \ref{fig11} shows a game that corresponds to the GP example given in \cite{Abramsky2024}. The first observer Alice chooses a measurement X or Y and obtains a result. Then, the second observer chooses a measurement Z or W and obtains a result. On the spacetime form of the game (which in this case is also the extensive form because it is a temporal contextuality scenario), it looks as if the outcome of Bob's measurement (chosen by Alfred) depends not only on Alice's choice of measurement but also on the outcome obtained by Alice. However, if one computes the \emph{reduced} strategic form of the game \cite{Kuhn1950}, the inconsistent combinations (e.g., $X=0$ and $Z_{X,1}=0$), which are never activated, get eliminated so that Alfred's reduced strategy only needs to involve four choices, not eight. This is consistent with \cite{Abramsky2024}'s statement that Alfred's strategy need only each combination of Alice's and Bob's choices to 0 or 1.

\begin{figure}
\begin{center}
\includegraphics[width=\textwidth]{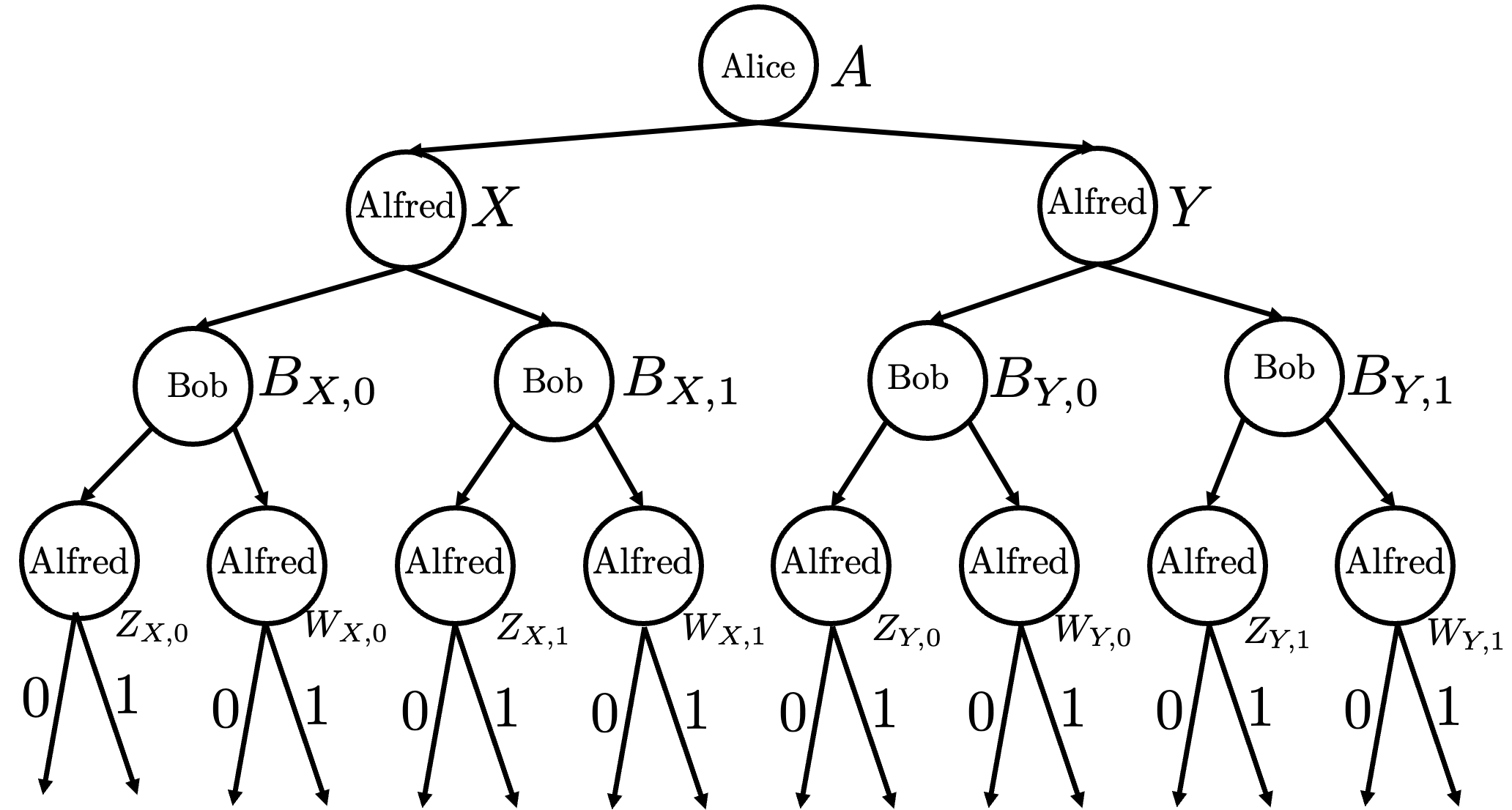}
\end{center}
\caption{A spacetime with perfect information corresponding to a GP scenario.}
\label{fig11}
\end{figure}

\clearpage

Figure \ref{fig12} shows a game\footnote{This scenario does not have unique causal bridges and is thus not included in this paper's categorical equivalence, however an equivalent spacetime game structure still exists as shown here. We plan to later show an extended categorial equivalence also for scenarios with multiple causal bridges.} corresponding to an other example given in in \cite{Abramsky2024}, which is the implementation of an OR gate on top of a GHZ scenario (Anders-Browne). The choice by Alice of the two inputs were flattened as a cyclic system of rank 4, but could also be expressed with two spacelike-separated agents as shown on Figure \ref{fig1} (which is equivalent to the cyclic system of rank 4 shown on Figure \ref{fig7}).

\begin{figure}
\begin{center}
\includegraphics[width=\textwidth]{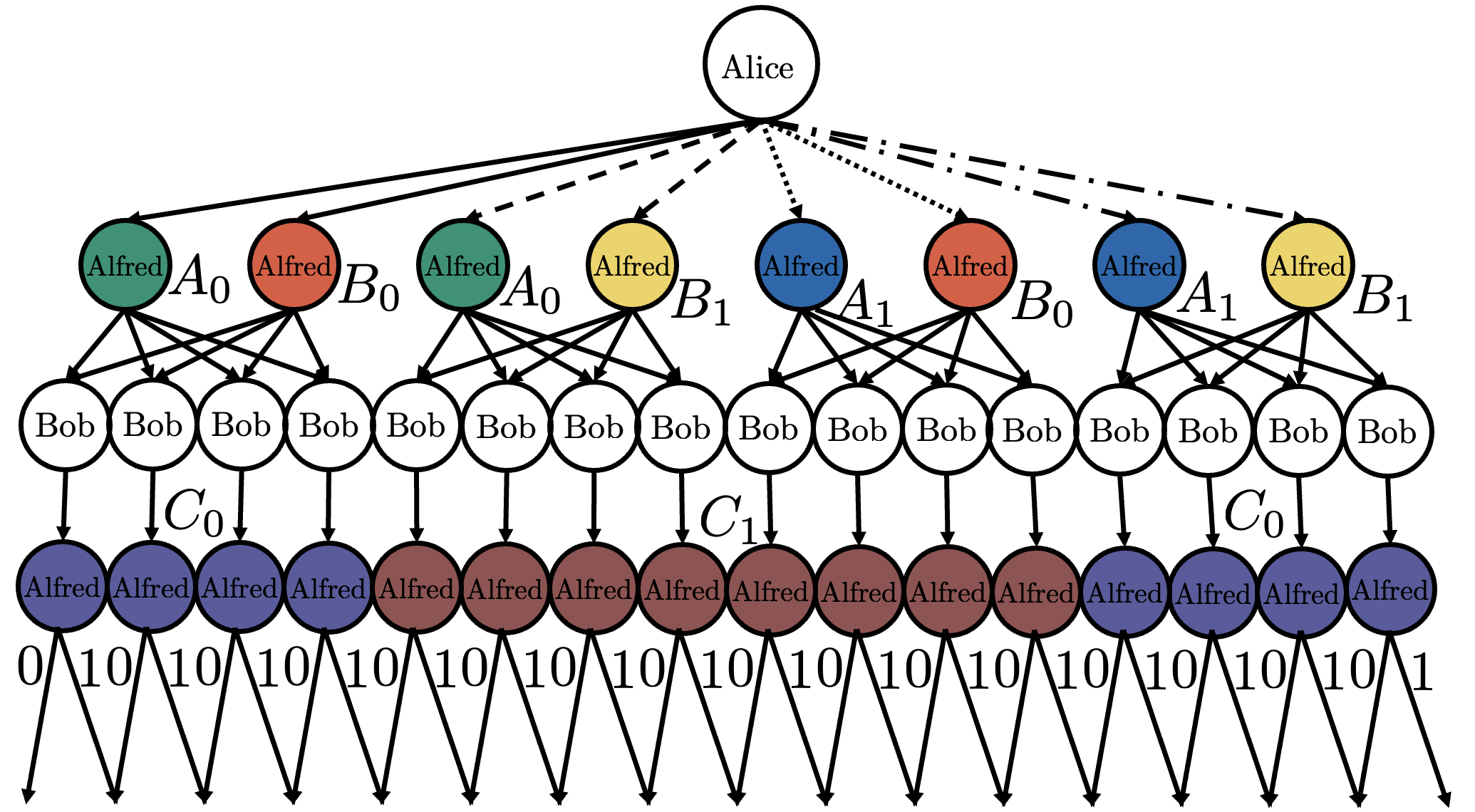}
\end{center}
\caption{A spacetime with perfect information corresponding to the implementation of an OR gate on top of a GHZ scenario. For ease of read, information sets are marked with colors instead of dashed lines. Also, also for the ease of read, the context labels were replaced with dashed lines. The labels can be deduced from the local simplicial complex that arises at each node played by Alice and Bob -- another nice visual feature of the spacetime game visualization.}
\label{fig12}
\end{figure}

\clearpage
\subsubsection{Notations}

In the remainder of this paper, whenever we refer to a game $\mathcal{G}$, we will also implicitly use the above notations ($\mathcal{N}, \mathcal{R}$, $\mathcal{P}$, $\mathcal{A}$, $\rho$, $\chi$, $\sigma$, $\mathcal{I}, u$) without explicitly defining them every time. This is because we only ever consider one game at a time, and it eases the read. When a second game is considered, we prime the letters: a game $\mathcal{G'}$, we will also implicitly use ($\mathcal{N'}, \mathcal{R'}$, $\mathcal{P'}$, $\mathcal{A'}$, $\rho'$, $\chi'$, $\sigma'$, $\mathcal{I}'$, u') without explicitly defining them every time. 

For any player $p\in\mathcal{P}$, we write $\mathcal{N}_p=\rho^{-1}(p)$. Thus, $\mathcal{N}$ can be organized in a family of disjoint partitions $(\mathcal{N}_p)_{p\in\mathcal{P}}$.

We write $N\smile M$ as a convenient shorthand for $(N,M)\in\mathcal{R}$.

We define for any node N $$\textsf{precedessors}(N)=\{M\in\mathcal{N}| M\smile N\}$$ $$\textsf{successors}(N)=\{M\in\mathcal{N}| N\smile M\}$$

We also define for any information set $i$ $$\textsf{precedessors}(i)=\bigcup_{N\in i}\textsf{predecessors}(N)$$ $$\textsf{successors}(i)=\bigcup_{N\in i} \textsf{successors}(N)$$.

We also use a slight abuse of notation whenever all nodes in the same information set share a common property: for example, $\rho(i)$ and $\chi(i)$ where $i$ is an information set would normally denote a singleton set, but we use them to denote the single element in that singleton set instead, for convenience.  Accordingly, for any player $p\in\mathcal{P}$, we write $\mathcal{I}_p=\{i\in\mathcal{I} | \rho(i)=p\}$. Thus, $\mathcal{I}$ can be organized in a family of disjoint partitions $(\mathcal{I}_p)_{p\in\mathcal{I}}$.

We also use, for convenience, a function $\iota\in\mathcal{I}^\mathcal{N}$ that maps each node to the information set it belongs to.

Conversely, we will write $z(N)$ for $z(\iota(N))$ where $z$ is an outcome (complete history) and $N$ is a node.

When we write the statement $\sigma(N,M)=a$ for some $a$, it implicitly implies $N\smile M$, which allows more concise formulas.

\subsubsection{Histories}
\label{section-histories}
We now give a compact definition of the histories of a spacetime game (which is a compact form of that given in \cite{Fourny2020}\cite{Baczyk2023}):

\begin{definition}[Histories]
Given a spacetime game in extensive form $\mathcal{G}$, a history is an assignment from $\mathcal{I}$ to $\mathcal{A}\cup\{\bot\}$ that only assigns an action to activated information sets. An information set is activated if it has a node such that the label of each one of its incoming edges is equal to the action assigned (by that same history) to the information set containing the source node of that edge:
$$\forall i\in\mathcal{I},(z(i)\neq\bot\implies \exists N\in i, \forall M\in \textsf{predecessors}(N), z(M)=\sigma(M,N))$$ 

The set of all histories is denoted $\mathcal{H}$.
\end{definition}

In \cite{Fourny2020}, we use the term ``contingency coordinates'' of a node $N$ to denote the minimal history not containing $N$ that coincides with any history in which $N$ is assigned an action (i.e., not $\bot$). In this paper, we use the term ``causal bridge'' \citep{Fourny2014}\footnote{We take the terminology from the work of Dupuy \citep{Dupuy1992}\citep{Dupuy2000}.} of a node $N$ to specifically designate the predecessors of $N$ and the labels on the edges from the predecessors to $N$. The two differ in that the contingency coordinates involve the entire ancestry, while the causal bridge only involves (directly connected, one level up) predecessors.

A history is complete if the implication becomes an equivalence. In a valid game, the set of outcomes is the set of complete histories.

\begin{definition}[Complete histories, and validity of a spacetime game]
\label{definition3}
A spacetime game in extensive form $\mathcal{G}$ is valid if the set of outcomes $\mathcal{Z}$ matches the set of complete histories. A complete history is defined as a history that assigns an action to all the activated information sets.
$$\forall i\in\mathcal{I}, (z(i)\neq\bot\iff \exists N\in i, \forall M\in \textsf{predecessors}(N), z(M)=\sigma(M,N))$$ 
\end{definition}

We always assume that spacetime games are valid, as we consider validity to be part of their definition. Given a complete history (outcome) $z$, we write:

$$\textsf{support}(z)=\{i\in \mathcal{I}, z(i)\neq\bot\}$$

Without loss of expressive power, we assume that games do not contain unused information sets, i.e., there is no information set that never gets activated; otherwise, unused information sets can be pruned from the game.\footnote{The same assumption will be made for causal contextuality scenarios: There is no measurement that never gets activated, otherwise such measurements can be pruned from the causal contextuality scenario.} Likewise, we assume there are no unused actions in $\mathcal{A}$.

\clearpage
\section{Results}

We will now build a category corresponding to a subclass of two-player spacetime games (which we call ``alternating'') as well as a category of causal contextuality scenarios with no cycles, unique causal bridges and causally secured covers, and show that these two categories are equivalent thanks to a witness functor.

Thus, spacetime games and causal contextuality scenarios are equivalent for a broad class of physically meaningful experimental protocols.

Complete histories under the two frameworks are also in bijection, and histories in causal contextuality scenarios correspond to those histories in spacetime games that have a domain with even cardinality (in other words, incomplete histories in causal contextuality scenarios always stop with nature giving an outcome for the chosen measurement, and never end abruptly with a measurement chosen but not carried out).

\subsection{Alternating spacetime games}

\subsubsection{Definition}

Spacetime games, but also games in normal or extensive form with or without perfect information, are very general and cover many different scenarios, from microeconomics to leisure (Chess, etc). Here, we are looking at specific kinds of spacetime games in which experimenters and the universe alternate in choosing settings and outcomes. We thus now give a characterization of a particular relevant subclass of spacetime games called ``alternating spacetime games'', for which we will show that there is a mapping with the corresponding causal contextuality scenarios.

Note that, since causal contextuality scenarios do not specify which observer performs which experiment, we call all our observers Bob for the mapping. If a causal contextuality scenario is extended with information on the identity of the observers (e.g. Alice, Bob) then they can be marked accordingly in the corresponding game.

In the definition of the alternating property, we interpret the game's assumed features in terms of quantum-experimental protocols, to facilitate the understanding of the mapping that will be given further down.

\begin{definition}[Alternating properties]
A spacetime game $\mathcal{G}$ is an alternating game if it is valid and it fulfills the following alternating properties:
\begin{enumerate}
\item[2-PLAYERS] It has two players (we call them Alfred, who is nature, and Bob, who is the observer).
\item[BIPARTITE] The graph $(\mathcal{N}, \mathcal{R})$ always connects a node played by Alfred to a node played by Bob, or a node played by Bob to a node not played by Alfred. In other words, the graph structure of the spacetime game is a bipartite graph if we ignore the direction of the edges.
\item[EVEN] All root nodes are played by Bob. All leaf nodes are played by Alfred.
\item[BOB-S] All information sets played by Bob are singletons. It means Bob is fully informed about decisions in his causal past, even in situations in which he can carry out the same experiment under different circumstances.
\item[BOB-A] At any of Bob's nodes, all available actions are used on at least one edge.\footnote{One can also consider allowing for at most one unused action in $\chi(N)$ when looking at causal contextuality scenarios with contexts in which the antichain property does not hold.}
$$\forall t\in\mathcal{N}_B, \forall C \in\chi(t),\exists n\in\mathcal{N}_A, \sigma(t,n)=C$$
\item[BA1] Each node played by Alfred has exactly one parent node:
$$\forall n\in\mathcal{N}_A, \exists! t\in\mathcal{N}_B, t\smile n$$
\item[BA2] Given a node $N$ played by Bob, two nodes (played by Alfred) connected to $N$ with the same label (it is a measurement context) must be in different information sets (these are measurement settings). In other words, distinct nodes for the same measurement in the same context would be superfluous.
$$\forall t\in\mathcal{N}_B, \forall n,m\in \textsf{successors}(t), \sigma(t,n)=\sigma(t,m)\implies \iota(n)\neq\iota(m)$$
\item[AB1] All nodes in the same information set played by Alfred\footnote{This trivially applies to Bob, too.} have the same outgoing edges: same labels, same destination nodes. In other words, the causal future of a measurement does not depend on the context in which it was carried out. 
$$\forall x\in\mathcal{I}_A, \forall n,m\in x, \textsf{successors}(n)=\textsf{successors}(m)$$
$$\forall x\in\mathcal{I}_A, \forall n,m\in x, \forall u\in\textsf{successors}(x), \sigma(n,u)=\sigma(m,u)$$
\item[AB2] Two distinct nodes played by Bob cannot have the same causal bridge.
$$\forall t, u\in \mathcal{N}_B,$$ $$(\textsf{predecessors}(t)=\textsf{predecessors}(m)\wedge \forall n\in\textsf{predecessors}(t), \sigma(n,t)=\sigma(n,u)) \implies t=u$$
\end{enumerate}
\end{definition}

\begin{remark}The height of an alternating spacetime game is always even, nodes at even depths (0, 2, ...) are played by Bob, and nodes at odd depths (1, 3, ...) are played by Alfred. 
\end{remark}

In the remainder of this paper, we standardize the two players of alternating games with Bob playing at the root and Alfred playing at the leaves, denote the partitions $\mathcal{N}_B,\mathcal{N}_A,\mathcal{I}_B,\mathcal{I}_A$, etc, and omit the set of players $\mathcal{P}$ from the definition. This is for simplicity and without loss of generality, as it would be straightforward to give the two players arbitrary names by including $\mathcal{P}$ explicitly.

We denote as $\text{parent}(n)$ the unique parent of $n$ for any $n\in\mathcal{N}_A$ (rule BA1).
For $x\in\mathcal{I}_A$ and $n\in\mathcal{N}_B$, we extend the $\smile$ notation to $x \smile n$ when $\forall m\in x, m\smile n$ as well as $\sigma(x,n)=\sigma(m,n)$ for any $m\in x$ and $\text{successors}(x)=\text{successors}(m)$ for any $m\in x$ (Rule AB1).

\subsubsection{Non-locality as a special case of contextuality}

AB2 amounts to saying that we assume that the measurements carried out by Bob under the same enabling circumstances are grouped into a unique experiment where Bob has multiple contexts to choose from. Note that this does not cause any loss of generality because this amounts to saying that non-locality is a special case of contextuality. See for example Figure \ref{fig7} (which satisfies AB2), which groups the local experiments of Figure \ref{fig1} (which does not satisfy AB2) into a single one.

The purpose of alternating spacetime games is to characterize a subset of spacetime games that is equivalent (in terms of category theory) to certain causal contextuality scenarios. When using the spacetime game framework for discussing non-locality, it is more desirable to consider the (factored) game in Figure \ref{fig1} rather than the (expanded) game in Figure \ref{fig7}, knowing that the factored game can be considered equivalent to the same causality contextuality scenario as the one associated with the expanded game by the mapping.

\subsubsection{Bob's actions as the facets of a simplicial complex}
\label{section-rename-bob-actions}
Without loss of generality, we can canonically rename Bob's actions in an alternating spacetime game to facilitate their interpretation as contexts, like so:

\begin{itemize}
\item Each edge from a node played by Bob to a node played by Alfred is labeled with a non-empty set (the context) of information sets (the settings) played by Alfred.
\item For any node $t$ played by Bob, there are as many outgoing edges from $t$ as there are context-setting pairs $(x,C)$ such that $x\in C$ and $C\in \chi(t)$. The edge corresponding to $(x,C)$ must go to a node $n\in x$ and must be labeled with $\sigma(t,n)=C$.
\end{itemize}

In other words, Bob's actions are contexts, and each context-action activates, via edges labeled with this context, the measurement settings in that (possibly empty) context.

It is always possible to rename Bob's actions such that this is the case: Given a node $t$ played by Bob and an action $\hat{C}\in\chi(t)$, we rename $\hat{C}$ to $C=\{\iota(n) |\sigma(t,n)=\hat{C}\}$. Because of BA2, the renaming is bijective.

Formally, it means that $\chi(t)$ can be seen as the set of the facets of an abstract simplicial complex, the vertices of which are the children information sets of $t$ and with the constraint $\forall n\in\textsf{successors}(t), \iota(n)\in\sigma(t,n)$.

\clearpage

\subsubsection{Category of alternating spacetime gamess}

We now define the category $\textbf{Game}$ of alternating spacetime games. For this purpose, we first define the morphisms. As the intent is to show it is equivalent to a category of causal contextuality scenarios, we define morphisms on the same lines as \cite{Abramsky2019} with the idea that the decisions of the universe player in a game can be simulated by another game into which the observer decisions are fed, in the same way a human can sit in front of two electronic chessboards and let the computers play against each other by alternatively feeding the output of one side on the other side.

\begin{definition}[Game morphism]
Consider two spacetime games $\mathcal{G}=( \mathcal{N}, \mathcal{R},\mathcal{A}, \rho, \chi, $ $\sigma, \mathcal{I}, \mathcal{Z}, u)$ and $\mathcal{G'}=( \mathcal{N'}, \mathcal{R'},\mathcal{A'}, \rho', \chi', $ $\sigma', \mathcal{I}', \mathcal{Z'}, u')$. A morphism $\gamma:\mathcal{G'}\rightarrow\mathcal{G}$ is a pair $(\nu',\beta)$ where:

\begin{itemize}
\item $\nu'$ is a function from $\mathcal{N}_{A}$ to $\mathcal{N'}_A$,
\item $\{\beta_x\}_{x\in\mathcal{I}_A}$ is a family of functions from $\chi'(\nu'(x))$ to $\chi(x)$.
\end{itemize}

With the following structural constraints:
\begin{enumerate}
\item $\nu'$ preserves information sets:
$$\forall x \in \mathcal{I}_A, \exists x' \in\mathcal{I'}_A, \forall n \in x, \nu'(n)\in x'$$
We write, with an abuse of notation, $x'=\nu'(x)$.

\item $\nu'$ preserves parents\footnote{The implication is in fact an equivalence because of the simplicial map rule}::
$$\forall n,m\in\mathcal{N}_A, \text{parent}(\nu'(n))=\text{parent}(\nu'(m))\implies\text{parent}(n)=\text{parent}(m)$$
This implies a function $\nu$ on $\mathcal{D}_{\nu}=\{\text{parent}(\nu'(n))|n\in\mathcal{N}_A\}\rightarrow\mathcal{N}_B$:
$$\nu: \text{parent}(\nu'(n))\mapsto\text{parent}(n)$$

\item Universe labels (measurement outcomes) are mapped consistently:
$$\forall n\in\mathcal{N}_A,\forall t\in'\mathcal{D}_{\nu}, \nu'(n)\smile t' \implies \beta_{\iota(n)}(\sigma'(\nu'(n),t'))=\sigma(n,\nu(t'))$$

\item For any $t' \in\mathcal{D}(\nu)$, $\nu'$ restricted to the children information sets of $\nu(t')$ forms a simplicial map from $\chi(\nu(t'))$ into $\chi'(t')$, and the individual nodes must be mapped consistently with this simplicial map:
$$\forall t'\in\mathcal{D}_{\nu},n\in\mathcal{N}_A, \nu(t')\smile n \implies \nu'(\sigma(\nu(t'),n))=\sigma'(t',\nu'(n))$$

\item All parents of observer nodes in the support of $\nu$ must be in the image of $\nu'$:
$$\forall x'\in \mathcal{I'}_A, \forall t'\in\mathcal{D}_{\nu}, x'\smile t' \implies \exists x\in\mathcal{I}_A, \nu'(x)=x'$$
\end{enumerate}
\end{definition}

There is an identity morphism that turns any game into itself:

\begin{definition}[Identity morphism for a game]
Consider one spacetime games $\mathcal{G}=( \mathcal{N}, \mathcal{R},\mathcal{P},\mathcal{A}, \rho, \chi, $ $\sigma, \mathcal{I}, \mathcal{Z}, u)$. The identity morphism $\text{Id}_\mathcal{G}:\mathcal{G}\rightarrow\mathcal{G}$ is the tuple $(\nu',\beta)$ where:

\begin{itemize}
\item $\nu'=\text{Id}_{\mathcal{N}_{A}}$.
\item $\beta_x=\text{Id}_{\chi(x)}$ for each $x\in\mathcal{I}_A$.
\end{itemize}
\end{definition}

\begin{remark}
It follows from the definition that:

\begin{itemize}
\item $\nu=\text{Id}_{\mathcal{N}_{B}}$.
\item The simplicial maps implied by $\nu'$ on the local contexts are the identity maps.
\end{itemize}
\end{remark}

\begin{lemma} The identity morphism for any game is a morphism.
\end{lemma}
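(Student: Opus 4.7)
The plan is a direct verification that the pair $(\nu', \beta) = (\text{Id}_{\mathcal{N}_A}, (\text{Id}_{\chi(x)})_{x \in \mathcal{I}_A})$ satisfies each of the five structural constraints in the morphism definition, under the identification $\mathcal{G}' = \mathcal{G}$. Because every map involved is an identity and the two games coincide, each constraint should reduce to a tautology; the task is simply to unfold the definitions enough to see this, and to record along the way that the induced $\nu$ and the induced simplicial maps on contexts are also identities, so that the Remark preceding the lemma is actually justified.

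First, I would check condition 1 by taking, for each $x \in \mathcal{I}_A$, the witness $x' = x$; then $\nu'(n) = n \in x'$ for every $n \in x$. For condition 2, the hypothesis $\text{parent}(\nu'(n)) = \text{parent}(\nu'(m))$ is already $\text{parent}(n) = \text{parent}(m)$, giving the conclusion for free. The implied domain $\mathcal{D}_\nu = \{\text{parent}(n) \mid n \in \mathcal{N}_A\}$ coincides with $\mathcal{N}_B$: by EVEN a Bob node cannot be a leaf, so it has at least one child, and by BIPARTITE that child lies in $\mathcal{N}_A$. Hence $\nu = \text{Id}_{\mathcal{N}_B}$, in line with the Remark. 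Condition 5 is then settled by taking $x = x'$ for any $x' \in \mathcal{I}'_A = \mathcal{I}_A$, since $\nu'(x) = x = x'$.

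For condition 3, with $\nu'(n) = n$, $\nu(t') = t'$, and $\beta_{\iota(n)} = \text{Id}$, the required equation $\beta_{\iota(n)}(\sigma'(\nu'(n), t')) = \sigma(n, \nu(t'))$ becomes $\sigma(n, t') = \sigma(n, t')$. For condition 4, the restriction of $\nu'$ to the children information sets of $t' = \nu(t')$ is the identity on those information sets, so the induced map on contexts (viewed as facets of an abstract simplicial complex per Section \ref{section-rename-bob-actions}) is the identity, which is trivially a simplicial map from $\chi(t')$ into itself; the node-level equation $\nu'(\sigma(\nu(t'), n)) = \sigma'(t', \nu'(n))$ then collapses to $\sigma(t', n) = \sigma(t', n)$.

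I do not expect a genuine obstacle. The only place requiring even momentary care is condition 4, where one must keep track of the distinction between $\nu'$ as a map on Alfred nodes and its induced action on contexts (facets of a simplicial complex): this induced action is not part of the primary data of a morphism, but is implicit in the simplicial-map requirement. Once this is unpacked, the required equality is manifest.
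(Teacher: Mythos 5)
Your proposal is correct and follows the same route as the paper, whose entire proof is the observation that all morphism constraints become tautologies under identity maps; you simply spell out that tautology check constraint by constraint, including the sensible extra remark that $\mathcal{D}_\nu=\mathcal{N}_B$ (via EVEN and BIPARTITE) so that $\nu=\text{Id}_{\mathcal{N}_B}$. No gaps.
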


\begin{proof}
All rules constraining a morphism become tautologies when replacing the functions with identity functions.
\end{proof}

\begin{lemma}[Left and right unit laws]
Consider two spacetime games $\mathcal{G}=( \mathcal{N}, \mathcal{R},\mathcal{A}, \rho, \chi, $ $\sigma, \mathcal{I}, \mathcal{Z}, u)$ and $\mathcal{G'}=( \mathcal{N'}, \mathcal{R'},\mathcal{A'}, \rho', \chi', $ $\sigma', \mathcal{I}', \mathcal{Z'}, u')$ as well as a morphism $\gamma:\mathcal{G'}\rightarrow\mathcal{G}$. Then $\gamma=\text{Id}_{G}\circ\gamma=\gamma\circ\text{Id}_{G'}$
\end{lemma}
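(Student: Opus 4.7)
The plan is to unpack the componentwise definition of morphism composition and observe that composing with an identity morphism on either side amounts to composing with identity functions in both the node-map component and the label-family component. A morphism $\gamma=(\nu'_\gamma,\beta_\gamma):\mathcal{G}'\to\mathcal{G}$ is determined by the node map $\nu'_\gamma:\mathcal{N}_A\to\mathcal{N'}_A$ (which runs opposite to $\gamma$ itself in the category) together with the family $\{\beta_{\gamma,x}:\chi'(\nu'_\gamma(x))\to\chi(x)\}_{x\in\mathcal{I}_A}$. Since equality of morphisms amounts to equality of these two data, it suffices to verify each component separately.

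First I would make the composition law explicit: the signatures of $\nu'$ and $\beta$ force the $\nu'$-component of a composite to be the contravariant composition of node maps, and its $x$-indexed label function to be obtained by composing the label function of the ``right'' morphism at $x$ with the label function of the ``left'' morphism at the image of $x$ under the ``right'' morphism's node map. This is the only composition consistent with the typing of $\nu'$ and of the $\beta_x$.

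For the left unit law $\text{Id}_\mathcal{G}\circ\gamma=\gamma$, the node-map component of the composite is $\nu'_\gamma\circ\text{Id}_{\mathcal{N}_A}=\nu'_\gamma$, and at each $x\in\mathcal{I}_A$ the label-family component is $\text{Id}_{\chi(x)}\circ\beta_{\gamma,x}=\beta_{\gamma,x}$. For the right unit law $\gamma\circ\text{Id}_{\mathcal{G}'}=\gamma$, the node-map component is $\text{Id}_{\mathcal{N'}_A}\circ\nu'_\gamma=\nu'_\gamma$, and at each $x$ the label-family component is $\beta_{\gamma,x}\circ\text{Id}_{\chi'(\nu'_\gamma(x))}=\beta_{\gamma,x}$, using the remark that the identity morphism's induced $\nu$ and simplicial maps are themselves identities on the relevant sets. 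In both cases the composite carries the same two data as $\gamma$, hence equals $\gamma$.

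There is no genuine mathematical obstacle here; the proof is pure bookkeeping. What takes a little care is keeping the contravariant direction of $\nu'$ straight, so that the correct identity function ($\text{Id}_{\mathcal{N}_A}$ versus $\text{Id}_{\mathcal{N'}_A}$, and similarly the $\beta$ family indexed at $x$ versus at $\nu'_\gamma(x)$) is composed on the appropriate side. Once this indexing is spelled out, both unit laws collapse to the identity $f\circ\text{Id}=f=\text{Id}\circ f$ at the level of ordinary functions.
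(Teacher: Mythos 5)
Your proof is correct and follows essentially the same route as the paper, which simply notes that the unit laws follow componentwise; you spell out the component computations ($\nu'_\gamma\circ\text{Id}_{\mathcal{N}_A}=\nu'_\gamma$, $\text{Id}_{\chi(x)}\circ\beta_{\gamma,x}=\beta_{\gamma,x}$, and the mirror-image cases) explicitly, and they match the paper's definitions of identity and composition. No gaps.
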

\begin{proof} It follows from applying the left and right unit laws on the individual components.
\end{proof}

And morphisms can be composed like so:

\begin{definition}[Composition of morphisms on games]
Consider three spacetime games $\mathcal{G}$, $\mathcal{G'}$, and $\mathcal{G''}$ and two morphisms $\gamma_1:\mathcal{G'}\rightarrow\mathcal{G}$ with components $(\nu_1',\beta_1)$ (with implicitly $\nu_1$) and $\gamma_2:\mathcal{G''}\rightarrow\mathcal{G'}$ with components $(\nu_2'',\beta_2')$ (with implicitly $\nu_2'$). The morphism $\gamma_3=\gamma_1 \circ \gamma_2:\mathcal{G''}\rightarrow\mathcal{G}$ is defined as the tuple $(\nu_3'',\beta_3)$ (with implicitly $\nu_3$) where:

\begin{itemize}
\item $\nu_3''=\nu_2'' \circ \nu_1'$.
\item $\beta_{3,x}=\beta_{1,x} \circ \beta_{2,\nu_1'(x)}'$
\end{itemize}
\end{definition}

\begin{remark}
It follows that $\nu_3=\nu_1 \circ \nu_2'$, and the simplicial maps compose in the same way as $\nu3''$.
\end{remark}
\begin{lemma}
The result of morphism decomposition as defined above is a morphism.
\end{lemma}

\begin{proof}

Let there be $x\in\mathcal{I}_A$. $\gamma_1$ is a morphism, thus we have:
$$\forall n \in x, \nu_1'(n)\in \nu_1'(x)$$
$\gamma_2$ is a morphism, thus we have:
$$\forall n' \in \nu_1'(x), \nu_2''(n')\in \nu_2''(\nu_1'(x))$$
From which it follows:
$$\forall n \in x, \gamma_3''(n)=\nu_2''(\nu_1'(n))\in  \nu_2''(\nu_1'(x))$$
which proves the first rule.

For the second rule, let there be $n, m\in\mathcal{N}_A$ such that

$$\text{parent}(\nu_3''(n))=\text{parent}(\nu_3''(m))$$

Then:

$$\text{parent}(\nu_2''(\nu_1'(n)))=\text{parent}(\nu_2''(\nu_1'(m)))$$

Since $\gamma_2$ is a morphism, we have

$$\text{parent}(\nu_1'(n))=\text{parent}(\nu_1'(m))$$

Since $\gamma_1$ is a morphism, we have

$$\text{parent}(n)=\text{parent}(m)$$

which proves the second rule.

For the third rule, let there be $n\in\mathcal{N}_A$ and $t''\in\mathcal{N}_B''$ such that $\nu_3''(n)\smile t''$.
Thus, $$\nu_2''(\nu_1'(n))\smile t''$$
Because $\gamma_2$ is a morphism, we have
$$\beta_{2,\nu_1'(n)}'(\sigma''(\nu_2''(\nu_1'(n)),t'')=\sigma'(\nu_1'(n),\nu_2'(t'')))$$
In particular we have $$\nu_1'(n)\smile\nu_2'(t'')$$
Because $\gamma_1$ is a morphism, we have
$$\beta_{1,n}(\sigma'(\nu_1'(n),\nu_2'(t''))=\sigma(n,\nu_1(\nu_2'(t'')))$$
Or if we combine:
$$\beta_{1,n}(\beta_{2,\nu_1'(n)}'(\sigma'(\nu_2''(\nu_1'(n)),t'')))=\sigma(n,\nu_1(\nu_2'(t'')))$$
and thus:
$$\beta_{3,n}(\sigma''(\nu_3''(n),t''))=\sigma(n,\nu_3(t''))$$

For the fourth rule, take some $t''\in\mathcal{D}_{\nu_3}$ and take $t=\nu_3(t'')=\nu_1(\nu_2'(t''))$ as well as $t'=\nu_2'(t'')$ which is in $\mathcal{D}_{\nu_1}$.

$\nu_1'$ implies a simplicial map from $\chi(t)\rightarrow\chi'(t')$. $\nu_2''$ implies a simplicial map from $\chi'(t)\rightarrow\chi''(t'')$. Thus $\nu_3''=\nu_2''\circ\nu_1'$ implies a simplicial map from $\chi(t)\rightarrow\chi''(t'')$ by composing the two simplicial maps.

Let there be $n\in\mathcal{N}_A$ and $t''\in\mathcal{D}_{\nu_3}$ such that $\nu_3(t'')\smile n$.
Thus, $$\nu_1(\nu_2'(t''))\smile n$$
Because $\gamma_1$ is a morphism, we have
$$\nu_1'(\sigma(\nu_1(\nu_2'(t'')),n)=\sigma'(\nu_2'(t''),\nu_1'(n)))$$
In particular we have $$\nu_2'(t'')\smile\nu_1'(n)$$
Because $\gamma_2$ is a morphism, we have
$$\nu_2''(\sigma'(\nu_2'(t''),\nu_1'(n))=\sigma''(t'',\nu_2''(\nu_1'(n)))$$
Or if we combine:
$$\nu_2''(\nu_1'(\sigma(\nu_1(\nu_2'(t'')),n))=\nu_2''(\sigma'(\nu_2'(t''),\nu_1'(n))))=\sigma''(t'',\nu_2''(\nu_1'(n))))$$
$$\nu_3''(\sigma(\nu_3(t''),n))=\sigma''(t'',\nu_3''(n)))$$
and thus:
$$\beta_3(\sigma''(\nu_3''(n),t''))=\sigma(n,\nu_3(t''))$$
which proves the fourth rule.

For the fifth rule, let there be $x''\in\mathcal{I''}_A$ and $t''\in\text{support}(\nu_3)$ such that $x''\smile t''$.

It follows that $t''\in\text{support}(\nu_2')$.

Because $\gamma_2$ is a morphism, by applying the fifth rule, there is some $x'\in\mathcal{I'}_A$ such that
$$\nu_2''(x')=x''$$

Furthermore, $t'=\nu_2'(t'')$ is in the support of $\nu_1$ so that applying the fifth rule to morphism $\gamma_1$, there is $x\in\mathcal{I}_A$ such that $$\nu_1'(x)=x'$$

By combining, it follows that $x$ also fulfills the conditions:

$$\nu_3''(x)=\nu_2''(\nu_1'(x))=\nu_2''(x')=x''$$

Thus, $\gamma_3$ is a morphism.

\end{proof}

\begin{lemma}[Associativity of morphism composition]
$\circ$ is associative
\end{lemma}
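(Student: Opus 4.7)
The plan is to reduce associativity of $\circ$ to componentwise associativity of ordinary function composition, applied once to the $\nu'$ slot and once, fiberwise in $x$, to the $\beta$ slot. Given three composable morphisms $\gamma_1,\gamma_2,\gamma_3$ with components $(\nu_1',\beta_1)$, $(\nu_2'',\beta_2')$, $(\nu_3''',\beta_3'')$, I would first observe that the Alfred-node component of both bracketings $(\gamma_1\circ\gamma_2)\circ\gamma_3$ and $\gamma_1\circ(\gamma_2\circ\gamma_3)$ unfolds, by applying the definition of composition twice, to the triple composition $\nu_3'''\circ\nu_2''\circ\nu_1'$ of functions on Alfred-node sets. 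This slot is therefore settled directly by associativity of ordinary function composition.

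For the $\beta$ slot I would fix $x\in\mathcal{I}_A$ and unfold both bracketings using the defining rule $\beta_{1,x}\circ\beta_{2,\nu_1'(x)}'$ for the $\beta$ component of a composite. One bracketing yields
\[
(\beta_{1,x}\circ\beta_{2,\nu_1'(x)}')\circ\beta_{3,\nu_2''(\nu_1'(x))}''
\]
while the other yields
\[
\beta_{1,x}\circ(\beta_{2,\nu_1'(x)}'\circ\beta_{3,\nu_2''(\nu_1'(x))}'').
\]
The key point is that in both bracketings the index feeding $\beta_2'$ is $\nu_1'(x)$ and the index feeding $\beta_3''$ is $\nu_2''(\nu_1'(x))$: in each case the innermost family is reindexed by pushing $x$ forward first through $\nu_1'$ and then through $\nu_2''$, and the order of these pushes does not depend on how the overall composition is bracketed. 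With the indices agreeing, equality of the two composites is precisely associativity of function composition of three maps between the fibers of $\chi$, $\chi'$, $\chi''$, $\chi'''$ over $x$ and its successive images.

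I do not anticipate a genuine obstacle: the proof is notational bookkeeping, and the apparent novelty relative to plain function composition is illusory. The main subtlety is purely cosmetic, namely tracking which primed version of each $\beta$ is reindexed by which image of $x$ under which composite of the $\nu$'s. The structural reason the argument works is that both components of a morphism are data indexed along $\nu'$, and reindexing along $\nu'$ commutes with composition, so associativity of $\circ$ reduces twice to the same elementary fact about set-theoretic functions.
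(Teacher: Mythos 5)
Your proposal is correct and follows essentially the same route as the paper's proof: both bracketings are unfolded componentwise, the $\nu'$ slot reduces to associativity of ordinary function composition, and the $\beta$ slot reduces to the same fact once one checks that the reindexing of $\beta_2'$ by $\nu_1'(x)$ and of $\beta_3''$ by $\nu_2''(\nu_1'(x))$ is identical under either bracketing. Your explicit remark that the indices agree independently of the bracketing is exactly the (implicit) content of the paper's computation, so nothing is missing.
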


\begin{proof}
If we take four games $\mathcal{G}$, $\mathcal{G'}$, $\mathcal{G''}$, and $\mathcal{G'''}$ and three morphisms $\gamma_1:\mathcal{G'}\rightarrow\mathcal{G}$ with components $(\nu_1',\beta_1)$, $\gamma_2:\mathcal{G''}\rightarrow\mathcal{G'}$ with components $(\nu_2'',\beta_2')$, and $\gamma_3:\mathcal{G'''}\rightarrow\mathcal{G''}$ with components $(\nu_1''',\beta_1'')$ then $(\gamma_3\circ\gamma_2)\circ\gamma_1$ is by definition:

\begin{itemize}
\item $\nu_3''=(\nu_3''' \circ \nu_2'' )\circ \nu_1'$.
\item $\beta_{3,x}=\beta_{1,x} \circ (\beta_{2,\nu_1'(x)}' \circ \beta_{3,\nu_2''(\nu_1'(x))}'')$
\end{itemize}

which is the same as $\gamma_3\circ(\gamma_2\circ\gamma_1)$:

\begin{itemize}
\item $\nu_3''=\nu_3''' \circ (\nu_2'' \circ \nu_1')$.
\item $\beta_{3,x}=(\beta_{1,x} \circ \beta_{2,\nu_1'(x)}') \circ \beta_{3,\nu_2''(\nu_1'(x))}''$
\end{itemize}

\end{proof}

\begin{remark} A morphism $\gamma:\mathcal{G'}\rightarrow\mathcal{G}$ with components $(\nu',\beta)$ is a section if $\nu'$ is a retraction and $\beta$ is a section. It is a retraction if $\nu'$ is a section and $\beta$ is a retraction.
\end{remark}

\begin{theorem}[Category of games]
The class of all alternating games, endowed with the morphisms as defined above, is a category. We denote it $\textbf{Game}$.
\end{theorem}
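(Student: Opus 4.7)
The plan is to verify that all the defining data and axioms of a category are present, by simply assembling the lemmas proved earlier in this section. Concretely, a category consists of (i) a class of objects, (ii) for each pair of objects a class of morphisms, (iii) an associative composition operation, and (iv) a two-sided identity for each object. Each of these requirements has already been handled in isolation, so the proof is essentially a bookkeeping argument that cites the right lemma in the right place.

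First, I would identify the objects as the class of alternating spacetime games, as characterized by the ``alternating properties'' definition, and identify the morphisms via the earlier \emph{Game morphism} definition. For each ordered pair $(\mathcal{G},\mathcal{G}')$ of such games, the collection $\mathrm{Hom}(\mathcal{G}',\mathcal{G})$ is the set of tuples $(\nu',\beta)$ satisfying the five structural constraints. Next, I would invoke the lemma asserting that composition $\gamma_1 \circ \gamma_2$ of morphisms is itself a morphism, which gives a well-defined map $\mathrm{Hom}(\mathcal{G}',\mathcal{G}) \times \mathrm{Hom}(\mathcal{G}'',\mathcal{G}') \to \mathrm{Hom}(\mathcal{G}'',\mathcal{G})$ whenever the source/target types line up.

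For the axiomatic part, I would then cite the associativity lemma to conclude $(\gamma_3 \circ \gamma_2) \circ \gamma_1 = \gamma_3 \circ (\gamma_2 \circ \gamma_1)$ whenever the composites are defined. For each alternating game $\mathcal{G}$, the identity morphism lemma guarantees that $\mathrm{Id}_\mathcal{G}$ exists as a bona fide morphism in $\mathrm{Hom}(\mathcal{G},\mathcal{G})$, and the left/right unit lemma shows that $\mathrm{Id}_\mathcal{G} \circ \gamma = \gamma = \gamma \circ \mathrm{Id}_{\mathcal{G}'}$ for every $\gamma:\mathcal{G}' \to \mathcal{G}$.

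There is no real mathematical obstacle here; the only thing to be careful about is ensuring that each axiom is being discharged by the correct lemma, with matching source and target types. Assembling the four cited lemmas --- identity existence, left/right unit laws, closure of $\circ$ under morphisms, and associativity of $\circ$ --- exhausts the axioms, so the statement follows at once. I would therefore write the proof as a short paragraph listing the four invocations, with no further computation required.
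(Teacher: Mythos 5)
Your proposal is correct and matches the paper's own argument, which likewise disposes of the theorem by appealing to the previously established lemmas (identity morphisms, unit laws, closure of composition, and associativity). Your version is merely a more explicit bookkeeping of the same one-line proof.
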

\begin{proof} This follows from the left and right unit laws and the associativity of morphism composition.
\end{proof}

\clearpage
\subsection{Causal contextuality scenarios}

We now turn to causal contextuality scenarios. The correspondence table between the two frameworks shown in Table \ref{tab-correspondence} may be useful to the reader to understand how they relate to each other. Note that, in particular, the existence of global section for a set of compatible local (probabilistic or possibilistic) distributions on the event presheaf corresponds to the applicability of the Nash game theory framework, i.e., of the existence of a mixed strategy that is consistent with the observed (probabilistic or possibilistic) distributions.

\begin{sidewaystable}
\noindent
\begin{tabular}{|l|l|}
\hline
\textbf{Spacetime games}&\textbf{Causal contextuality scenarios}\\
\hline
\hline
Alfred&The nature player\\
\hline
Bob&The experimenter\\
\hline
An information set played by Alfred&A measurement\\
\hline
A node played by Alfred&A measurement and its context\\
\hline
An information set played by Bob&A set of events, which is the left-hand-side of an enabling relation\\
\hline
An action by Alfred&A measurement outcome\\
\hline
An action by Bob&A measurement context\\
\hline
An edge from Bob to Alfred&Bob picks a measurement context and carries out a measurement in this context.\\
&The label is the context.\\
\hline
An edge from Alfred to Bob&An outcome $o$ is obtained for a specific measurement $x$, and is part of enabling
\\&conditions $t$ for another measurement. The label is $t(x)=o$\\
\hline
A complete history&A complete history\\
\hline
A history with an even number of defined decisions&A history\\
\hline
An alternating game&A causal contextuality scenario\\
\hline
An alternating game that is in normal form&A flat contextuality scenario\\
\hline
An alternating game whose extensive form has perfect information&A temporal contextuality scenario\\
\hline
The reduced strategic form of the spacetime game&Flattening construction\\
\hline
A pure strategy of the strategic form (under Nash semantics)&A global section of the (pure) strategy sheaf resp. event sheaf\\
\hline
A mixed strategy of the strategic form (under Nash semantics)&The global section of an empirical model (if it exists)\\
\hline
(Dupuysian) causal bridge&Causally secured\\
\hline
The projection of a complete history to Bob&A ``good') cover\\
\hline
\end{tabular}
\caption{The correspondence of terminology between alternating spacetime games and causal contextuality scenarios.}
\label{tab-correspondence}
\end{sidewaystable}

\subsubsection{Definition}

We recall the definition of a causal contextuality scenario with a cover, as given by Abramsky et al. \cite{Abramsky2024}:

\begin{definition}[Causal contextuality scenario]
A causal contextuality scenario with a cover $\Gamma$ is a quadruplet $(X, O, \vdash,\mathcal{C})$ where X is a set of measurement settings, $O=(O_x)_{x\in X}$ is a family of sets of measurement outcomes, indexed by settings, $\vdash$ is an enabling relation between a section on a subset of $X$ (a ``consistent set of events'') and settings, and $\mathcal{C}$ is the set of the facets\footnote{In other words, the cover is an antichain. It is, however, possible to extend the mapping naturally even if the antichain property does not hold (e.g., by allowing unused actions at Bob's nodes).} of a simplicial complex on $X$.
\end{definition}

We use the notation $\textsf{enabled}(t)=\{x\in X|t\vdash x\}$.

\begin{definition}[Unique causal bridges]
A causal contextuality scenario $\Gamma=(X, O, \vdash,\mathcal{C})$ is said to have unique causal bridges if
$$\forall x\in X, \forall t, u, (t\vdash x \wedge u \vdash x \implies t=u)$$
\end{definition}

For causal contextuality scenarios with unique causal bridges, we write $\tau(x)$ for the unique $t$ such that $t \vdash x$. We write $\bar{\tau}(x)$ for the transitive closure of $\tau(x)$ to the set of all past events that must happen to enable measuring $x$. It is obtained by recursively walking up through all enabling relations until one reaches those with an empty left-hand-side, and by merging all sets of events on the left-hand sides of the encountered enabling relations.

In this paper, we consider ``clean'' scenarios in the sense that there are no unused measurement settings, i.e., settings that cannot be enabled under any circumstances under the considered cover. This amounts to saying that $\bar{\tau}(x)$ is always a \emph{consistent} set of events. If a causal contextuality scenario contains an unused setting, then it can be removed without affecting its semantics. This is analogous to our assumption that spacetime games do not have unused information sets.

\begin{definition}[Local cover restriction]
We define the local cover restriction $\mathcal{C}_t$ at an enabling set of events $t$ the set of the facets of the restriction to $\textsf{enabled}(t)$ of the simplicial map associated with $\mathcal{C}$.

\end{definition}

\begin{remark}
We have
$$\forall x\in X, \exists C\in\mathcal{C}_{\tau(x)}, x\in C$$
\end{remark}

\cite{Abramsky2024} mentions that one can define ``good covers'' such that the strategy presheaf fulfills the sheaf property, but a definition thereof is not known to us as we could not find it in publicly available repositories. Since this nice property also characterizes those covers that make a causal contextuality scenario equivalent in structure to an alternating spacetime game, we give a definition of such covers that leads to the categorical equivalence of the two structures. We use the terminology ``causally-secured cover'' which was used in recent presentations by \cite{Abramsky2024}, with the warning that this definition may differ from future publications from the causal contextuality scenario research team.

\begin{definition}[Causally-secured cover]
\label{definition-causally-secured-cover}
A causal contextuality scenario $\Gamma=(X, O, \vdash,\mathcal{C})$ is said to have a causally secured cover if

\begin{itemize}
\item Facet membership propagates to all enabling measurements:
$$\forall C\in\mathcal{C}, \forall C\, \text{support}(\tau(x))\subseteq C$$
\item Two measurements with inconsistent enabling conditions never belong to the same facet:
$$\forall x,y \textsf{ such that } (\bar{\tau}(x)\cup\bar{\tau}(y)) \text{ is inconsistent }, \forall C\in\mathcal{C}, \{x,y\}\not\subseteq C$$
\item $\mathcal{C}$ is the maximum among all covers $\mathcal{C}'$ that have the same local cover restrictions ($\forall t, \mathcal{C}_t'=\mathcal{C}_t$) and that fulfill the other two criteria\footnote{This criterion is akin to some form of non-signaling in the available choices of local covers at any spacetime location.}.
\end{itemize}
\end{definition}

\begin{remark}
By definition, it means that a causally-secured cover is fully specified by all its local cover restrictions.
\end{remark}

\begin{remark}
It follows from the maximality of a causally secured cover that the local cover restrictions can be characterized with
$$\mathcal{C}_t=\{C\cap \textsf{enabled}(t)|C\in\mathcal{C}\}\setminus \{ \emptyset \}$$
that is, it is not necessary to explicitly eliminate non-maximal faces.
\end{remark}

\subsubsection{Category of causal contextuality scenarios}

We now extend the morphism of \cite{Abramsky2019}, which was defined on flat scenarios, to acyclic causal contextuality scenarios with causally secured covers and unique causal bridges (thereafter scenario), i.e., to make it compatible with the enabling relation. We are not aware of a follow-up to \cite{Abramsky2019} that defines the morphisms in the causal case.

\begin{definition}[Scenario morphism]
Consider two scenarios $\Gamma=(X, O, \vdash,\mathcal{C})$ and $\Gamma'=(X', O', \vdash',\mathcal{C}')$. A morphism $\gamma:\Gamma'\rightarrow\Gamma$ is a tuple $(\pi',\alpha=\{\alpha_x\}_{x\in X})$ where:

\begin{itemize}
\item $\pi'$ is a function from $\mathcal{X}$ to $\mathcal{X'}$ that entails a simplicial map from $\mathcal{C}$ to $\mathcal{C'}$.
\item for each $x\in X$, $\alpha_x$ is a function from $O'_{\pi(x)}$ to $O_x$.
\end{itemize}

With the following structural constraints:
\begin{enumerate}
\item $\pi'$ preserves causal bridges:
$$\forall x, y, \tau(\pi'(x))=\tau(\pi'(y))\implies \tau(x)=\tau(y)$$

\item Measurement enablement is not precluded by $\pi'$, i.e., for any $x\in X$, any $y'$ in the support of $\tau(\pi'(x))$ must be in the image of $\pi'$.
$$\forall x\in X, \forall y'\in\textsf{support}(\tau(\pi'(x)), \exists y\in X, \pi'(y)=y'$$

\item For any $x\in X$ and any $y\in X$ such that $\pi'(y)\in\textsf{support}(\tau(\pi'(x))$:
$$\alpha_y\left[\tau(\pi'(x))(\pi'(y)\right]=\tau(x)(y)$$

\end{enumerate}
\end{definition}

\begin{remark}
Rules 2 and 3 imply that, for any $x\in X$, if $\tau(x)=\{(x_i,o_i)\}_i$, then $\tau(\pi'(x))$ can be written in the form $\{(\pi'(x_i),o_i')\}_i$ where:
$$\forall i, \alpha_{x_i}(o_i')=o_i$$

Because all enabling relations must have a consistent set of events as their left-hand-side, the existence of an enabling relation of that form also implies in rule 1:
$$\forall i,j, \pi'(x_i)=\pi'(x_j)\implies o_i'=o_j'$$
\end{remark}

\begin{remark}
\label{remark-scenario-morphism}
Rule 3 also implies that if $t\vdash x$ and $t\vdash y$ then $\tau(\pi'(x))=\tau(\pi'(y))$ because the form of $\tau(\pi'(x))$ depends only on $\tau(x)$ and otherwise not on the particular choice of $x$. Then there is $t'$ such that $t'\vdash \pi'(x)$ and $t'\vdash\pi'(y)$.

This implies that the causal bridge conservation implication is in fact an equivalence:
$$\forall x, y, \tau(\pi'(x))=\tau(\pi'(y))\iff \tau(x)=\tau(y)$$

\end{remark}

There is a scenario morphism that turns any scenario into itself:

\begin{definition}[Identity morphism for a scenario with a causally secured cover]
Consider a scenario $\Gamma=(X, O, \vdash,\mathcal{C})$ . The identity morphism $\text{Id}_\Gamma:\Gamma\rightarrow\Gamma$ is the tuple $(\pi',\alpha)$ where:

\begin{itemize}
\item $\pi'=\text{Id}_{X}$.
\item $\alpha_x=\text{Id}_{O_x}$ for each $x\in X$.
\end{itemize}
\end{definition}
\begin{lemma} The identity morphism for any scenario is a morphism.
\end{lemma}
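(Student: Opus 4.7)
The plan is to verify the three structural constraints of a scenario morphism by direct substitution, reducing each to a tautology. First, I would observe that $\pi' = \text{Id}_X$ trivially entails the identity simplicial map from $\mathcal{C}$ to $\mathcal{C}$, since every face is sent to itself, so the top-level requirement on $\pi'$ is satisfied. Similarly, each $\alpha_x = \text{Id}_{O_x}$ is obviously a function from $O_{\pi'(x)} = O_x$ to $O_x$, so the typing of the morphism components is correct without any further work.

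Next, I would handle the three numbered rules in order. For rule 1 (causal bridge preservation), the hypothesis $\tau(\pi'(x)) = \tau(\pi'(y))$ becomes $\tau(x) = \tau(y)$ under the identity substitution, so the implication holds immediately. For rule 2 (enablement not precluded), given any $x \in X$ and any $y' \in \textsf{support}(\tau(\pi'(x))) = \textsf{support}(\tau(x))$, I would simply take $y = y'$, which lies in $X$ because the support of an enabling left-hand-side consists of measurement settings, and then $\pi'(y) = y = y'$. For rule 3 (outcome consistency), the equation $\alpha_y[\tau(\pi'(x))(\pi'(y))] = \tau(x)(y)$ collapses to $\text{Id}_{O_y}[\tau(x)(y)] = \tau(x)(y)$, which is trivially true.

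There is essentially no obstacle here, and the proof should fit in one or two lines. The only point that warrants a brief mention is checking that the implied simplicial map on $\mathcal{C}$ is the identity (which is immediate), so that the simplicial-map clause of the definition is not silently violated. This mirrors the argument the author already gave for the identity game morphism earlier in the excerpt, where the proof reduces to the observation that every constraint becomes a tautology once the component functions are replaced by identities; I would therefore present the scenario-case proof in the same succinct style.
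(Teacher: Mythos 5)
Your proof is correct and takes essentially the same route as the paper, which simply observes that all structural constraints become tautologies under the identity substitutions; your version just spells out the three rules (and the simplicial-map and typing checks) explicitly, which is fine but not a different argument.
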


\begin{proof}
It is straightforward to verify that the identity morphism follows the required structural constraints as they transform into tautologies.
\end{proof}

\begin{lemma}[Left and right unit laws]
Consider two scenarios with no cycles, unique causal bridges, and a causally secured cover $\Gamma=(X, O, \vdash,\mathcal{C})$ and $\Gamma'=(X', O', \vdash',\mathcal{C}')$ as well as a morphism $\gamma:\Gamma'\rightarrow\Gamma$ with components $(\pi',\alpha)$. Then $\gamma=\text{Id}_{\Gamma}\circ\gamma=\gamma\circ\text{Id}_{\Gamma'}$
\end{lemma}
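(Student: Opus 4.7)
The plan is to mirror the structure of the analogous lemma for game morphisms, since the scenario morphism is also a tuple of two components (a function $\pi'$ on settings and a family $\alpha$ of functions on outcomes) that compose coordinate-wise. Concretely, composition of scenario morphisms must be defined by analogy with $\textbf{Game}$, so that for $\gamma_1 : \Gamma' \to \Gamma$ with components $(\pi_1', \alpha_1)$ and $\gamma_2 : \Gamma'' \to \Gamma'$ with components $(\pi_2'', \alpha_2')$, the composite $\gamma_1 \circ \gamma_2$ has components $(\pi_2'' \circ \pi_1', \, x \mapsto \alpha_{1,x} \circ \alpha_{2,\pi_1'(x)}')$. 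The identity morphism on $\Gamma$ has $\pi' = \text{Id}_X$ and $\alpha_x = \text{Id}_{O_x}$ for every $x \in X$.

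I would then unfold $\text{Id}_\Gamma \circ \gamma$: its first component is $\pi' \circ \text{Id}_X = \pi'$ by the right unit law for functions, and its $x$-indexed outcome component is $\text{Id}_{O_x} \circ \alpha_x = \alpha_x$ by the left unit law for functions. Hence $\text{Id}_\Gamma \circ \gamma = (\pi', \alpha) = \gamma$. Symmetrically, for $\gamma \circ \text{Id}_{\Gamma'}$, the first component is $\text{Id}_{X'} \circ \pi' = \pi'$, and the $x$-indexed outcome component is $\alpha_x \circ \text{Id}_{O'_{\pi'(x)}} = \alpha_x$, so $\gamma \circ \text{Id}_{\Gamma'} = \gamma$.

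There is no real obstacle here; the lemma is entirely formal, and as in the game case the proof amounts to one sentence per side, invoking the unit laws for ordinary function composition on each of the two independent coordinates. The only subtlety worth a brief mention is to make explicit that the identity morphism's $\alpha$-component lives on exactly the set $O_x$ that matches the codomain of $\alpha_x$ (respectively the domain of $\alpha_x$ on the other side), so the type of the composition is as expected and no implicit coercion is needed.
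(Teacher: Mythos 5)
Your proposal is correct and follows essentially the same route as the paper, which simply invokes the left and right unit laws on the two components; you merely spell out the componentwise unfolding (first component $\pi'\circ\text{Id}_X=\pi'$, resp.\ $\text{Id}_{X'}\circ\pi'=\pi'$, and outcome component $\text{Id}_{O_x}\circ\alpha_x=\alpha_x$, resp.\ $\alpha_x\circ\text{Id}_{O'_{\pi'(x)}}=\alpha_x$) in more detail than the paper does. No gap.
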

\begin{proof} It follows from applying the left and right unit laws on the individual components.
\end{proof}

And morphisms can be composed like so:

\begin{definition}[Composition of morphisms on scenarios]
Consider three scenarios $\Gamma$, $\Gamma'$, and $\Gamma''$ and two morphisms $\mu_1:\Gamma'\rightarrow\Gamma$ with components $(\pi_1',\alpha_1)$ and $\mu_2:\Gamma''\rightarrow\Gamma'$ with components $(\pi_2'',\alpha_2')$. The morphism $\mu_3=\mu_1 \circ \mu_2:\Gamma''\rightarrow\Gamma$ is defined as the tuple $(\pi_3'',\alpha_3)$ where:

\begin{itemize}
\item $\pi_3''=\pi_2''\circ\pi_1'$.
\item $\alpha_{3,x}=\alpha_{1,x}\circ \alpha_{2,\pi_1'(x)}'$ for every $x\in X$.
\end{itemize}
\end{definition}

\begin{lemma}
The result of morphism decomposition as defined above is a morphism.
\end{lemma}

\begin{proof}
Let $x,y\in X$ such that $$\tau(\pi_3''(x))=\tau(\pi_3''(y))$$

Then we have

$$\tau(\pi_2''(\pi_1'(x)))=\tau(\pi_2''(\pi_1'(y)))$$

Since $\mu_2$ is a morphism:

$$\tau(\pi_1'(x))=\tau(\pi_1'(y))$$

Since $\mu_1$ is a morphism:

$$\tau(x)=\tau(y)$$

which proves the first rule.

Let $x\in X$ and $y''\in\textsf{support}(\tau(\pi_3''(x))$.

It follows that $y''$ is also in the support of $\tau(\pi_2''(\pi_1'(x))$).

Because $\mu_2$ is a morphism, there is some $y'\in X'$ such that $\pi_2''(y')=y''$.

Rule 3 applied $\mu_2$, $\pi_1'(x)$ and $y'$ implies that $y'$ is in the support of $\tau(\pi_1'(x))$.

Since $\mu_1$ is a morphism, there is some $y\in X$ such that $\pi_1'(y)=y'$.

It follows that $\pi_3''(y)=\pi_2''(\pi_1'(y))=\pi_2''(y')=y''$.

This proves the second rule.

Let there be $x\in X$ and $y\in X$ so that $\pi_3''(y)\in\textsf{support}(\tau(\pi_3''(x))$.

Denote $y'=\pi_1'(y)$ and $x'=\pi_1'(x)$.

Then $\pi_2''(y')=\pi_3''(y)\in\textsf{support}(\tau(\pi_3''(x))=\textsf{support}(\tau(\pi_2''(x'))$.

Because $\mu_2$ is a morphism, we have

$$\alpha_{2,y'}(\tau(\pi_2''(x'))(\pi_2''(y'))=\tau(x')(y')=\tau(\pi_1'(x))(\pi_1'(y))$$

Then $\pi_1'(y)$ is in the support of $\tau(\pi_1'(x))$ and because $\mu_1$ is a morphism, we have

$$\alpha_{1,y}(\tau(\pi_1'(x))(\pi_1'(y))=\tau(x)(y)$$

Combining the two, we get:

$$\alpha_{1,y}(\alpha_{2,\pi_1'(y)}(\tau(\pi_2''(\pi_1'(x)))(\pi_2''(\pi_1'(y))))=\tau(x)(y)$$

Which is to say:

$$\alpha_{3,y}(\tau(\pi_3''(x))(\pi_3''(y)))=\tau(x)(y)$$

which proves the third rule.

Thus, $\gamma_3$ is a morphism.
\end{proof}

\begin{lemma}[Associativity of morphism composition]
$\circ$ as defined on scenario morphisms is associative.
\end{lemma}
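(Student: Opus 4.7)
The plan is to follow exactly the same template as the associativity proof for game morphisms given earlier in the excerpt: take four scenarios $\Gamma$, $\Gamma'$, $\Gamma''$, $\Gamma'''$ and three scenario morphisms $\mu_1:\Gamma'\rightarrow\Gamma$ with components $(\pi_1',\alpha_1)$, $\mu_2:\Gamma''\rightarrow\Gamma'$ with components $(\pi_2'',\alpha_2')$, and $\mu_3:\Gamma'''\rightarrow\Gamma''$ with components $(\pi_3''',\alpha_3'')$, and then compute the components of $(\mu_1\circ\mu_2)\circ\mu_3$ and of $\mu_1\circ(\mu_2\circ\mu_3)$ separately and show they agree.

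The first step is to unfold $(\mu_1\circ\mu_2)\circ\mu_3$ using the definition of composition: the $\pi$ component equals $(\pi_2''\circ\pi_1')\circ\pi_3'''$ (once we are careful about direction --- here $\pi_k$ acts contravariantly on the measurement setting sets $X$, so compositions of $\pi$ go the opposite way of the morphisms themselves), and the $\alpha$ component at $x$ equals $(\alpha_{1,x}\circ\alpha_{2,\pi_1'(x)}')\circ\alpha_{3,\pi_2''(\pi_1'(x))}''$. Then unfold $\mu_1\circ(\mu_2\circ\mu_3)$ analogously to get $\pi_2''\circ(\pi_1'\circ\pi_3''')$ and, at $x$, $\alpha_{1,x}\circ(\alpha_{2,\pi_1'(x)}'\circ\alpha_{3,\pi_2''(\pi_1'(x))}'')$.

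The second step is to observe that both components are then equated directly by the associativity of ordinary function composition, applied once on the $\pi$-component and once at each index $x\in X$ on the $\alpha$-component. No structural constraints from the definition of scenario morphism (causal-bridge preservation, enablement support, outcome consistency) need to be invoked at this stage, because we have already established in the previous lemma that the composed pair is itself a scenario morphism; associativity concerns only the set-theoretic identity of the tuples.

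The only subtle bookkeeping point, and the step that could trip up a careless reader, is to check that the index used in $\alpha_{3,\cdot}''$ is the same on both sides, i.e.\ that $(\pi_2''\circ\pi_1')(x)=\pi_2''(\pi_1'(x))$; this is of course trivially true but it is the place where the indexing conventions of the outcome-family components most easily get mis-aligned, and it is therefore worth writing out explicitly in one line. With that verification in place, both tuples $(\pi,\alpha)$ coincide componentwise, and the lemma follows. I expect this to be the entire difficulty: there is no deep content, only the standard two-line argument that a "composition defined as composition in each component" inherits associativity from its components.
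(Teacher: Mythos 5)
Your proposal is correct and follows essentially the same route as the paper: unfold both bracketings of the triple composite via the definition of composition and observe that the $\pi$- and $\alpha$-components coincide by associativity of ordinary function composition, the only care point being the index $\pi_2''(\pi_1'(x))$ in the $\alpha$-component, exactly as in the paper's proof. One purely notational slip: under the paper's convention the $\pi$-component of $(\mu_1\circ\mu_2)\circ\mu_3$ is $\pi_3'''\circ(\pi_2''\circ\pi_1')$ (with $\pi_1'$ applied first, consistent with the contravariance you yourself note), not $(\pi_2''\circ\pi_1')\circ\pi_3'''$, but this does not affect the argument.
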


\begin{proof}
If we take four games $\Gamma$, $\Gamma'$, $\Gamma''$, and $\Gamma'''$ and three morphisms $\gamma_1:\Gamma'\rightarrow\Gamma$ with components $(\pi_1',\alpha_1)$, $\gamma_2:\Gamma''\rightarrow\Gamma'$ with components $(\pi_2'',\alpha_2')$, and $\gamma_3:\Gamma'''\rightarrow\Gamma''$ with components $(\pi_3''',\alpha_3'')$ then $(\gamma_3\circ\gamma_2)\circ\gamma_1$ is by definition:

\begin{itemize}
\item $\pi_4''=(\pi_3''' \circ \pi_2'' )\circ \pi_1'$.
\item $\alpha_{4,x}=\alpha_{1,x}\circ(\alpha_{2,\pi_1'(x)}'\circ \alpha_{3,\pi_2''(\pi_1'(x))}'')$.
\end{itemize}

which is the same as $\gamma_3\circ(\gamma_2\circ\gamma_1)$:

\begin{itemize}
\item $\pi_4''=\pi_3''' \circ (\pi_2'' \circ \pi_1')$.
\item $\alpha_{4,x}=(\alpha_{1,x}\circ\alpha_{2,\pi_1'(x)})'\circ \alpha_{3,\pi_2''(\pi_1'(x))}''$.
\end{itemize}

\end{proof}

\begin{remark} A morphism $\gamma:\Gamma'\rightarrow\Gamma$ with components $(\pi',\alpha)$ is a section if $\pi'$ is a section and $\alpha$ is a retraction. It is a retraction if $\pi'$ is a retraction and $\alpha$ is a section.
\end{remark}

\begin{theorem}[Category of scenarios]
The class of all causal contextuality scenarios with no cycles, unique causal bridges, and a causally-secured cover, endowed with the morphisms as defined above, is a category. We denote it $\textbf{Scenario}$.
\end{theorem}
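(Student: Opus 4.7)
The plan is to package the four preceding lemmas into the verification that the specified data satisfies the axioms of a category. Recall that to exhibit a category one must supply a class of objects, a class of morphisms (with assigned source and target objects), an identity morphism for each object, a composition law, and then check the left and right unit laws together with associativity. Here the objects are causal contextuality scenarios $(X,O,\vdash,\mathcal{C})$ with no cycles, unique causal bridges, and a causally-secured cover, and the morphisms are the pairs $(\pi',\alpha)$ defined above.

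First I would invoke the lemma that for each such $\Gamma$, the pair $(\mathrm{Id}_X,\{\mathrm{Id}_{O_x}\}_{x\in X})$ is a scenario morphism $\mathrm{Id}_\Gamma:\Gamma\to\Gamma$; this discharges the existence of identities. Next I would invoke the composition lemma to conclude that for any composable morphisms $\mu_1:\Gamma'\to\Gamma$ and $\mu_2:\Gamma''\to\Gamma'$, the componentwise pair $(\pi_2''\circ\pi_1',\{\alpha_{1,x}\circ\alpha_{2,\pi_1'(x)}'\}_{x\in X})$ is again a scenario morphism from $\Gamma''$ to $\Gamma$, so composition is well-defined and the morphism class is closed under it.

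The left and right unit laws are then supplied by the corresponding lemma, which reduces the identities $\mathrm{Id}_\Gamma\circ\gamma=\gamma=\gamma\circ\mathrm{Id}_{\Gamma'}$ to the unit laws for ordinary function composition applied componentwise on $\pi'$ and on each $\alpha_x$. Associativity is supplied analogously by the associativity lemma, reducing $(\mu_3\circ\mu_2)\circ\mu_1=\mu_3\circ(\mu_2\circ\mu_1)$ to the associativity of function composition on $\pi'$ and on each $\alpha_x$ separately.

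There is essentially no obstacle to overcome at this stage: the substantive content—that the three structural constraints in the scenario-morphism definition (causal-bridge preservation, image condition on enabling supports, and compatibility with $\alpha$) are preserved by identities and compositions—was already discharged in the preceding lemmas. The theorem is therefore a packaging corollary of those lemmas, perfectly parallel to the earlier theorem establishing \textbf{Game} as a category, and the proof can be written as a single sentence citing the four lemmas above.
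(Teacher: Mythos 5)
Your proposal is correct and follows essentially the same route as the paper, which likewise proves the theorem by citing the preceding lemmas (identity morphism, closure of composition, left/right unit laws, associativity) as the verification of the category axioms. The paper's proof is just a terser version of your packaging argument, so there is nothing to add.
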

\begin{proof} This follows from the left and right unit laws and the associativity of morphism composition.
\end{proof}

\clearpage
\subsection{Categorical equivalence}

Our goal is to prove the following theorem:

\begin{theorem}
\label{theorem-equivalence}
The categories $\textbf{Game}$ and $\textbf{Scenario}$ are equivalent:
$$\textbf{Game}\simeq\textbf{Scenario}$$
\end{theorem}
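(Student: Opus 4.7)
The plan is to construct functors $F:\textbf{Game}\to\textbf{Scenario}$ and $G:\textbf{Scenario}\to\textbf{Game}$ guided by the correspondence of Table~\ref{tab-correspondence}, and to exhibit natural isomorphisms $GF\cong\mathrm{Id}_{\textbf{Game}}$ and $FG\cong\mathrm{Id}_{\textbf{Scenario}}$. Because both sides carry auxiliary labellings (node identities on one side, measurement identities on the other; Bob nodes versus enabling sections), the compositions cannot be strictly equal to the identity functors, so genuine natural isomorphisms are needed rather than equalities of functors.

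For $F$, given an alternating spacetime game $\mathcal{G}$, I would set $X:=\mathcal{I}_A$, $O_x:=\chi(x)$, and for each Bob node $t$ define the enabling section $\text{events}(t):=\{(\iota(M),\sigma(M,t)) : M\in\textsf{predecessors}(t)\}$, declaring $\text{events}(t)\vdash y$ whenever some $n\in y$ satisfies $t\smile n$. After the canonical renaming of Section~\ref{section-rename-bob-actions}, the global cover $\mathcal{C}$ is obtained as the unique maximal causally-secured cover whose local restrictions are the $\chi(t)$, using the remark after Definition~\ref{definition-causally-secured-cover}. Acyclicity is inherited from $(\mathcal{N},\mathcal{R})$, unique causal bridges follow from AB2, and causally-securedness follows from EVEN, BOB-S, BOB-A, and maximality. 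On morphisms, $F:(\nu',\beta)\mapsto(\pi',\alpha)$ with $\pi'$ the map induced on Alfred information sets by $\nu'$ and $\alpha_x:=\beta_x$; the three scenario-morphism axioms then follow directly from game-morphism rules~2,~5, and~3 respectively. Conversely, $G$ on objects introduces one Bob node $t_T$ per section $T$ arising as $\tau(x)$ for some $x\in X$ (with an initial empty bridge at the roots) and one Alfred node $(x,C)$ per pair of a measurement $x\in X$ and a facet $C\in\mathcal{C}_{\tau(x)}$ containing $x$, grouping all $(x,C)$ with fixed $x$ into one information set; edges and their labels follow the enabling relation. The nine alternating axioms (2-PLAYERS, BIPARTITE, EVEN, BOB-S, BOB-A, BA1, BA2, AB1, AB2) are then checked in turn, with the unique causal bridge assumption doing most of the work. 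On morphisms, $G$ sends $(\pi',\alpha)$ to $(\nu',\beta)$ with $\nu'(x,C):=(\pi'(x),\pi'(C))$ and $\beta_x:=\alpha_x$, and the five game-morphism rules follow from the three scenario-morphism rules together with the simplicial-map structure of $\pi'$.

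The natural isomorphisms are canonical: $GF(\mathcal{G})$ has its Bob nodes in bijection with causal bridges of $\mathcal{G}$ and its Alfred nodes in bijection with (information set, context) pairs, yielding an identity-like relabelling that is an isomorphism in $\textbf{Game}$, natural in $\mathcal{G}$; $FG(\Gamma)\cong\Gamma$ is analogous on the scenario side. Functoriality of $F$ and $G$ reduces to componentwise checks mirroring the composition lemmas already established for each category. I expect the main obstacle to be the verification of properties BOB-A and AB1 for $G(\Gamma)$: BOB-A demands that every locally-available context at a Bob node be realised by an outgoing edge, which is exactly where the maximality clause of the causally-secured cover definition does real work via the characterisation $\mathcal{C}_t=\{C\cap\textsf{enabled}(t)\mid C\in\mathcal{C}\}\setminus\{\emptyset\}$, and AB1 demands that all Alfred nodes in one information set share the same outgoing edges, which in the scenario language amounts to showing that the enabling behaviour of a measurement depends only on the measurement itself and not on the facet through which it was reached---again an essentially maximality-driven argument.
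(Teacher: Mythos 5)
Your route is genuinely different from the paper's. You propose to build an explicit quasi-inverse $G$ \emph{as a functor} and exhibit natural isomorphisms $GF\cong\mathrm{Id}_{\textbf{Game}}$ and $FG\cong\mathrm{Id}_{\textbf{Scenario}}$, whereas the paper only ever uses $G$ on objects: it proves that $F$ is essentially surjective (by showing $F(G(\Gamma))\cong\Gamma$), full, and faithful, and never defines $G$ on morphisms nor checks any naturality squares. Your object-level constructions essentially coincide with the paper's ($X=\mathcal{I}_A$, $O_x=\chi(x)$, enabling sections read off causal bridges; Bob nodes from enabling sections, Alfred nodes from measurement/local-facet pairs), and the morphism-level translation $\alpha=\beta$, $\pi'=\nu'$ on information sets is the same. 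What your route buys is a reusable inverse functor; what it costs is that functoriality of $G$ and the naturality of the two isomorphisms now carry exactly the content that the paper discharges in its fullness and faithfulness proofs, and in your proposal these are asserted (``canonical'', ``identity-like relabelling'') rather than argued.

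Beyond that redistribution of labour, there are two concrete gaps. First, $F$ is not actually defined on objects: you take the cover of $F(\mathcal{G})$ to be ``the unique maximal causally-secured cover whose local restrictions are the $\chi(t)$'', but the existence of such a maximum is precisely what has to be proved -- Definition \ref{definition-causally-secured-cover} does not guarantee that a greatest cover with prescribed local restrictions exists, and the remark you cite presupposes the cover. The paper supplies the missing ingredient explicitly: the \emph{natural cover}, the facets of $\{\mathcal{I}_A\cap\textsf{support}(z)\mid z\in\mathcal{Z}\}$ built from complete histories, and then proves the three causal-security criteria, the maximality part being a nontrivial argument along a linearization of the game. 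This construction is absent from your plan. Second, your definition of $G$ on morphisms, $\nu'(x,C):=(\pi'(x),\pi'(C))$, is only well-defined if $\pi'(C)$ is again a \emph{facet} of the local cover restriction at $\tau'(\pi'(x))$; a simplicial map sends faces to faces, not facets to facets, so this needs an argument (the paper faces the analogous point inside its fullness proof, where for each image of a facet it identifies the unique outgoing edge of $t'$ carrying that label). Finally, a smaller misdiagnosis: BOB-A and AB1 for $G(\Gamma)$ are immediate from the construction of the edge sets and labels; the maximality clause of causal security instead does its real work where you need $FG(\Gamma)\cong\Gamma$ -- two causally-secured covers with the same local restrictions must coincide -- and in showing that the natural cover of $F(\mathcal{G})$ is causally secured.
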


For this, we need to define a functor that is essential surjective and fully faithful.

\subsubsection{Game-Scenario functor}

We now define the functor $F:\textbf{Game}\rightarrow\textbf{Scenario}$.

First, we map the objects. Any alternating spacetime game can be associated with a scenario as follows. As explained in the previous tabular correspondence, the general idea of the overall mapping is that:

\begin{itemize}
\item Alfred's information sets are measurements;
\item Bob's (singleton) information sets are the left-hand sides of the enabling relations of the scenario;
\item Bob's actions (which label edges from Bob to Alfred) are contexts. Alfred's nodes within the same information set are measurements with the same measurement setting but in different contexts and possibly in different causal pasts (i.e., different parent nodes);
\item Alfred's actions (which label edges to Bob) are the measurement outcomes.
\end{itemize}

In order to define the cover of the associated scenario, we first show that each spacetime game (not necessarily alternating) has a natural cover that corresponds to the facets of all combinations of measurements that are performed in one of the complete histories:

\begin{definition}[Natural cover associated with a game]
Given a spacetime game $\mathcal{G}$ with nature player Alfred, the natural cover $\mathcal{C}$ is defined as follows: For each outcome $z\in \mathcal{Z}$, we take the subset $C_z=\mathcal{I}_A\cap\textsf{support}(z)$ of its domain of definition with only and exactly the information sets that are played by Alfred. Then the natural cover is $\mathcal{C}=\textsf{facets}(\{C_z | z\in \mathcal{Z}\}$) where $\textsf{facets}$ eliminates the non-maximal elements, i.e., returns the facets of the simplicial complex obtained via the downward closure of its argument. Note that duplicates are implicitly eliminated because of the definition of a set.
\end{definition}

The natural cover associated with a game thus corresponds to the facets of a simplicial complex whose vertices are the universe information sets.

For example, the natural cover of the game shown in Figure \ref{fig1} (or \ref{fig7}) and Figure \ref{fig2} (or \ref{fig8}) is $\{\{XW\},\{XZ\},\{YW\},\{YZ\}\}$, which is the cycle of rank 4.

The natural cover of the spacetime game shown in Figure \ref{fig3} and Figure \ref{fig4} is $\{\{XY\},\{XZ\},\{YZ\}\}$, which is the cycle of rank 3.

The natural cover of the spacetime game shown in Figure \ref{fig5} and Figure \ref{fig6} is $\{\{X\}, \{YW\}, \{YZ\}\}$ (with $\{Y\}$ eliminated by $\textsf{facets}$).

This natural cover will now be used to define the corresponding measurement scenario for any alternating spacetime game.

We now give a formal definition of the above:

\begin{definition}[Scenario associated with an alternating spacetime game]
\label{ref-onedirection}
Given an alternating spacetime game $\mathcal{G}$ which is an object in the \textbf{Game} category, we define the equivalent scenario $F(\mathcal{G})=(X,O,\vdash, \mathcal{C})$ as follows:
\begin{itemize}
\item $X=\mathcal{I}_A$
\item for each $x\in X$, $O_x=\chi(x)$
\item for each $x\in X$, with $t=\textsf{parent}(x)$, we define that $\hat{t}\vdash x$ where $$\hat{t}=\{(\iota(m),  \sigma(m, t))|m\in\mathcal{N}_A, m\smile t\}$$

\item additionally, for each $x\in X$, if there is a node in $x$ with no parent, we also define $\{\}\vdash x$. \item  $\mathcal{C}$ is the natural cover of $\mathcal{G}$.
\end{itemize}
\end{definition}

For it to be a valid scenario, we must prove that the enabling relations have consistent left-hand sides.

\begin{lemma}
The scenario associated with an alternating spacetime game always has consistent sets of events on the left-hand sides of enabling relations.
\end{lemma}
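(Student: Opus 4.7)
The plan is to show that for every $t \in \mathcal{N}_B$, the set $\hat{t} = \{(\iota(m), \sigma(m, t)) \mid m \in \mathcal{N}_A,\ m \smile t\}$ is a well-defined partial function from $\mathcal{I}_A$ to $\mathcal{A}$, since that is precisely what ``consistent set of events'' amounts to: each measurement occurring in the left-hand side is assigned at most one outcome. The empty case $\{\} \vdash x$ is trivially consistent, so the real content is in the first bullet of Definition \ref{ref-onedirection}.

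First I would unpack what failure of consistency would mean: there exist two Alfred-nodes $m_1, m_2 \in \textsf{predecessors}(t)$ with $\iota(m_1) = \iota(m_2) = x$ but $\sigma(m_1, t) \neq \sigma(m_2, t)$. To rule this out I would invoke the alternating property AB1, which states that all nodes in the same Alfred information set share the same successors and edge labels: $\textsf{successors}(m_1) = \textsf{successors}(m_2)$ and $\sigma(m_1, u) = \sigma(m_2, u)$ for every $u$ in that common successor set. Since $t \in \textsf{successors}(m_1)$ by assumption, we get $t \in \textsf{successors}(m_2)$ and $\sigma(m_1, t) = \sigma(m_2, t)$, contradicting the hypothesis.

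Therefore $\hat{t}$ contains at most one pair of the form $(x, \cdot)$ for each $x$, so $\hat{t}$ defines a partial function, i.e., a consistent set of events. I do not anticipate any real obstacle here: the whole statement is essentially a direct translation of AB1 into the scenario language, confirming that the alternating axioms were precisely chosen to guarantee that the induced enabling relation makes sense. The proof is a one-line application of AB1 dressed up in the scenario vocabulary.
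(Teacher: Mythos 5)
Your proof is correct and takes essentially the same route as the paper: both arguments reduce consistency of $\hat{t}$ to the rule AB1, which forces any two nodes of the same Alfred information set that precede $t$ to carry the same label on their edges into $t$, so each measurement gets a unique outcome assignment.
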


\begin{proof}
AB1 implies that two nodes in $y$, which is played by Alfred, have the same labels and destination nodes. As a consequence, if we have two nodes $m,n$ such that $\iota(m)=\iota(n)=y, m\smile x, n\smile x$, i.e., $m,n\in y$, it follows that $\sigma(m,t)=\sigma(n,t)$, leading to only one assignment for $t(y)=\sigma(y,t)$.
\end{proof}

\begin{remark}
It follows from the lemma, given some $x\in X$ and its parent $t$, that $\sigma(m, t)=\hat{t}=\sigma(y,t)$ for any $m\in y$.
This means in particular that we can write $\hat{t}$ in this equivalent form:
$$\hat{t}=\{(y,  \sigma(y, t))|y\in\mathcal{I}_A, y\smile t\}$$
See also Figure \ref{fig9} for a visual of this pattern.
\end{remark}

We must also prove that the functor maps games with scenarios that have a causally secured cover.

\begin{lemma}
If $\mathcal{G}\in ob(\textbf{Game})$, then $F(\mathcal{G})$ has a causally secured cover.
\end{lemma}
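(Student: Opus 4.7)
The plan is to verify, one by one, the three conditions of Definition \ref{definition-causally-secured-cover} for the natural cover $\mathcal{C}$ of $\mathcal{G}$ produced by Definition \ref{ref-onedirection}. The first two conditions should drop out directly from the fact that each facet of $\mathcal{C}$ is witnessed by a complete history of $\mathcal{G}$, whereas the maximality condition will be the technical core.

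For propagation, I would fix a facet $C\in\mathcal{C}$, choose a complete history $z\in\mathcal{Z}$ with $C=C_z=\mathcal{I}_A\cap\textsf{support}(z)$, and take any $x\in C$. The activation of $x$ under $z$ forces, via Definition \ref{definition3}, the activation of the unique Bob parent $t$ of some $n\in x$ (unique by BA1), which in turn forces $z(\iota(m))=\sigma(m,t)$ for every Alfred predecessor $m$ of $t$. By AB1 combined with Definition \ref{ref-onedirection}, these $\iota(m)$ are precisely the elements of $\textsf{support}(\tau(x))$, so $\textsf{support}(\tau(x))\subseteq C_z=C$. For the consistency axiom, I iterate the same argument inductively up the DAG: any complete history $z$ activating two measurements $x,y$ must assign outcomes to all information sets in $\bar{\tau}(x)\cup\bar{\tau}(y)$, and since $z$ is a function these assignments cannot conflict. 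Contraposition yields the second condition.

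For maximality, I would take another cover $\mathcal{C}'$ sharing all local restrictions with $\mathcal{C}$ and satisfying the first two axioms, pick a facet $C'\in\mathcal{C}'$, and construct a complete history $z\in\mathcal{Z}$ with $C'\subseteq C_z$. The construction is a top-down traversal of the DAG: at each activated Bob node $t$, I pick a context in $\chi(t)$ whose renaming (Section \ref{section-rename-bob-actions}) dominates the local face $C'\cap\{x\in\mathcal{I}_A : x\smile t\}$; at each Alfred information set $x$ already constrained by $\bar{\tau}(y)$ for some $y\in C'$, I assign the prescribed outcome, which is well-defined and non-conflicting by the second axiom applied to $\mathcal{C}'$. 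The existence of a suitable context at each $t$ is exactly the content of $\mathcal{C}'_{\hat{t}}=\mathcal{C}_{\hat{t}}$ together with the remark following Definition \ref{definition-causally-secured-cover}: any local face compatible with $\mathcal{C}'$ arises as the restriction to $\textsf{enabled}(\hat{t})$ of some facet of $\mathcal{C}$, and therefore of some action in $\chi(t)$. Once $z$ is built and verified valid, $C'\subseteq C_z$ shows $C'$ is a face of the simplicial complex generated by $\mathcal{C}$, hence $\mathcal{C}'$ is dominated by $\mathcal{C}$.

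The main obstacle is the bookkeeping in this recursive construction: one must ensure that local choices made at distinct Bob nodes remain mutually compatible with the enabling data along the DAG, and that every $x\in C'$ ends up genuinely activated by $z$. The crucial inputs are the consistency of $\bigcup_{x\in C'}\bar{\tau}(x)$ (from the second axiom applied to $\mathcal{C}'$), rule AB2 which forbids redundant Bob nodes, and the matching of local cover restrictions that turns each required local face into an available action in $\chi(t)$.
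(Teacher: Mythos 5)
Your proposal is correct and, on the first two axioms of Definition \ref{definition-causally-secured-cover}, coincides with the paper's argument: a facet of the natural cover is witnessed by a complete history $z$, BA1 gives the unique Bob parent $t$ of (a node of) any $x$ in the facet, activation of $t$ forces $z(y)=\sigma(y,t)$ on every predecessor information set $y$, hence $\textsf{support}(\tau(x))\subseteq C_z$; and the functionality of $z$ rules out two measurements with inconsistent $\bar{\tau}$'s sharing a facet. Where you genuinely diverge is the maximality axiom. The paper argues by contradiction: it linearizes the offending facet of the competing cover, takes the first measurement $x$ whose predecessors in that facet still sit inside some $C_z$, and contradicts the completeness of $z$ at the parent node; you instead build, top-down along the DAG, a single complete history $z$ with $C'\subseteq C_z$ and conclude directly that $C'$ is dominated by a facet of the natural cover. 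Your constructive route is longer but makes explicit exactly what the paper's terse reductio leaves implicit, namely that the required local face can be realized by an available context at each Bob node and that the forced assignments glue into a complete history (axiom 1 for $\mathcal{C}'$ keeps $\textsf{support}(\bar{\tau}(x))$ inside $C'$, and axiom 2 gives joint consistency of $\bigcup_{x\in C'}\bar{\tau}(x)$, since pairwise-compatible partial functions have a functional union). Two small repairs: the local face you need at a Bob node $t$ is $C'\cap\textsf{enabled}(\hat{t})$, the measurements \emph{enabled} at $\hat{t}$, whereas the set $\{x\in\mathcal{I}_A : x\smile t\}$ you wrote denotes, in the paper's notation, the \emph{predecessor} information sets of $t$; and you should not lean on the remark following Definition \ref{definition-causally-secured-cover}, which presupposes a causally secured cover and would be circular here --- the fact you need follows directly from the natural-cover construction, since if a complete history $z$ activates $t$ then, by BA1, uniqueness of causal bridges, and the renaming of Section \ref{section-rename-bob-actions}, $C_z\cap\textsf{enabled}(\hat{t})=z(t)\in\chi(t)$, so every face of the local restriction $\mathcal{C}_{\hat{t}}$ is contained in an action available at $t$.
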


\begin{proof}
Let us start with the first criterion.
Let $F(\mathcal{G})=(X,O,\vdash, \mathcal{C})$.
Let $C\in\mathcal{C}$.
Since $\mathcal{C}$ is a natural cover of $\mathcal{G}$, it follows that there exists a complete history $z$ such that $C_z=C$.
By definition, it follows that $C\subseteq\{x\in\textsf{support}(z) | \rho(x)=\text{Alfred}\}=\textsf{support}(z)\cap\mathcal{I}_A$.
Let $x\in C$. It follows that $x\in\textsf{support}(z)$ and that $\rho(x)=\text{Alfred}$. 
By BA1, it follows that $x$ has exactly one parent node t and thus there is, by definition of $F(\mathcal{G})$, we have
$$\hat{t}=\{(y,  \sigma(y, t))|y\smile t\}$$
and $\hat{t}\vdash x$.
Furthermore, $t$ must be in support of the same history $z$ by definition of the history.
Let $y$ be in the support of $\hat{t}$. Thus, we have $z(y)=\sigma(y,t)=\hat{t}(y)$. By definition of the history, $y$ must also be in the support of the history. Since $\rho(y)=\text{Alfred}$, it follows that $y\in C_z$. This is true for any $y$, so that $\textsf{support}(\tau(x))\subseteq C_z=C$.
This is true for any C, so that $\mathcal{C}$ is a causally secured cover and $F(\mathcal{G})\in \text{obj}(\textbf{Scenario)}$.

For the second criterion: if two measurements have incompatible enabling relations, then they cannot both appear in the support the same history because this would cause an information set in the past to be associated with two distinct values, which is not possible by definition of a history.

For the third criterion: assume the natural cover $\mathcal{C}$ is not a maximum. Then there is another cover $\mathcal{D}$ that has the same local cover restrictions and that has a facet not contained in any facet of $\mathcal{C}$. Let us order the information sets in this facet as a linearization of the game (that is, in a linear order compatible with the DAG of the game) and consider the first information set $x$ in this facet such that the set of previous information sets is still included in a facet $C$ of $\mathcal{C}$ that is the support of some history $z$. This information set has a parent information set $t$ played by Bob corresponding to some enabling relations. But $x$ is a vertex in the local cover restriction of $\mathcal{C}$ at $t$ (which is the same as the local cover restriction of $\mathcal{D}$ at $t$ by assumption). This implies that $z$ is an incomplete history as it contains the support of $t$ but not $x$. This contradicts the assumption and thus, $\mathcal{C}$ is a maximum among all causally-secured covers with the same local cover restrictions.
\end{proof}

\begin{figure}
\begin{center}
\includegraphics[width=0.5\textwidth]{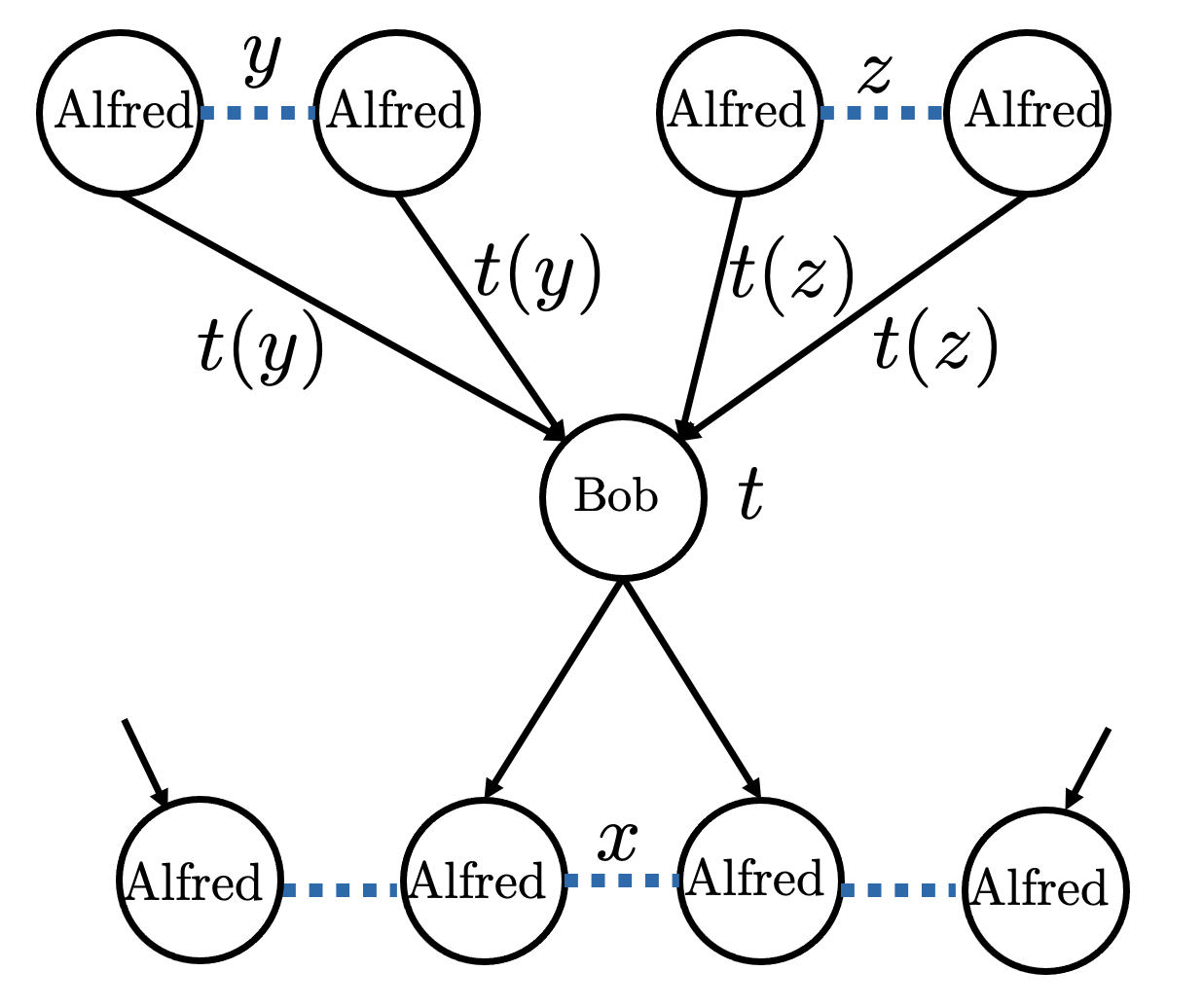}
\end{center}
\caption{The graph pattern implied by the definition of $t\vdash x$ in the mapping from games to scenarios. We show the pattern for two events in $t=\{(y,t(y)),(z,t(z))\}$ two contexts for $y$ and $z$, and four contexts (among which two under causal bridge $t$) for $x$, but this extends to any number of events and any domain size.}
\label{fig9}
\end{figure}

\begin{lemma}
\label{lemma-bijection-tp}
$t\mapsto \hat{t}$ is a bijection between the set of nodes played by Bob and the left-hand sides of the mapped enabling relations. More precisely, for any $x\in X$, the left-hand sides of the enabling relations to $x$ in the scenario are in bijection with the predecessors of $x$ in the alternating spacetime game. When it is clear from the context, we will write $t=\hat{t}$.
\end{lemma}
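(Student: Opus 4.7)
The plan is to establish the bijection by proving surjectivity and injectivity of $t \mapsto \hat{t}$ separately, then refine this to the per-$x$ statement by restricting the map to the predecessors of a given $x \in X$.

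First I would handle surjectivity by unwinding Definition \ref{ref-onedirection}. Every enabling relation in $F(\mathcal{G})$ is introduced either as $\hat{t} \vdash x$ for some $t = \textsf{parent}(x)$ (which is a well-defined Bob node by BA1 together with BIPARTITE) or as the degenerate $\{\} \vdash x$ whenever $x$ contains a root node. The EVEN rule forces every root node to be played by Bob, so no Alfred information set $x \in \mathcal{I}_A$ contains a root, and the degenerate clause is vacuous for alternating spacetime games. Consequently every left-hand side in the image of $F$ really is of the form $\hat{t}$ for some $t \in \mathcal{N}_B$. Surjectivity of $t \mapsto \hat{t}$ onto the set of LHS of mapped enabling relations follows.

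Next I would prove injectivity, which is where AB2 does all the work. Suppose $\hat{t} = \hat{u}$ for two Bob nodes $t,u \in \mathcal{N}_B$. By the equivalent form of $\hat{t}$ written as $\{(y,\sigma(y,t)) \mid y \in \mathcal{I}_A,\ y \smile t\}$ (and similarly for $\hat{u}$), equality of the two sets of pairs gives both that $\{y \in \mathcal{I}_A : y \smile t\} = \{y \in \mathcal{I}_A : y \smile u\}$ and that the edge labels into $t$ and $u$ agree on every common predecessor information set. By BIPARTITE all predecessors of a Bob node are Alfred nodes (and hence lie in some $y \in \mathcal{I}_A$), so these two conditions translate exactly into $\textsf{predecessors}(t) = \textsf{predecessors}(u)$ together with $\sigma(n,t) = \sigma(n,u)$ for each $n$ in this common set of predecessors. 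AB2 then forces $t = u$, which establishes injectivity.

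Finally I would derive the refined per-$x$ statement by restricting the global bijection to predecessors of a fixed $x \in X = \mathcal{I}_A$. By BA1 each node of $x$ has a unique parent in $\mathcal{N}_B$, so $\textsf{predecessors}(x) = \{\textsf{parent}(n) \mid n \in x\} \subseteq \mathcal{N}_B$; conversely, each such $t$ produces an enabling relation $\hat{t} \vdash x$ by Definition \ref{ref-onedirection}, and all enabling relations to $x$ arise this way by the surjectivity argument above. The restriction of the already-established bijection therefore yields the desired correspondence. The only subtle step in the whole argument is injectivity, and the difficulty there is entirely absorbed by AB2; once one recognizes that $\hat{t}$ is just a faithful encoding of the causal bridge (predecessors of $t$ together with their edge labels), there is no further obstacle.
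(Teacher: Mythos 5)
Your proposal is correct and follows essentially the same route as the paper's own (very terse) proof: injectivity is exactly the AB2 argument, and surjectivity comes from unwinding the construction of the codomain, with your handling of the degenerate $\{\}\vdash x$ clause being a welcome extra detail. The only line of the paper's proof you leave implicit is that every node $t$ played by Bob actually yields at least one enabling relation with left-hand side $\hat{t}$ --- guaranteed because EVEN (all leaves are Alfred's) together with BIPARTITE gives $t$ an Alfred child --- which is needed so that the map $t\mapsto\hat{t}$ really lands onto the set of left-hand sides rather than merely containing it in its image.
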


\begin{proof}
Every $t$ leads to at least one enabling relation with left-hand-side $\hat{t}$ because it has at least one child (property EVEN).
It follows from property AB2 that $t\mapsto\hat{t}$ is injective.
Since the codomain is defined as the range, the mapping is also surjective.
\end{proof}

We now move on to how the functor maps the morphisms.

\begin{definition}
Let $\mathcal{G},\mathcal{G'}\in\textsf{ob}(\textbf{Game})$ (with their components having the usual notations) and let $\gamma:\mathcal{G'}\rightarrow{G}$ be a morphism with components $(\nu',\beta)$. Denoting the components $F(\mathcal{G})$ as $(X,O,\vdash, \mathcal{C})$ and those of $F(\mathcal{G'})$ as $(X',O',\vdash', \mathcal{C}')$,
we define the morphism $F(\gamma):F(\mathcal{G'})\rightarrow F(\mathcal{G})$ as follows.
\begin{itemize}
\item $\pi'=\nu'$ seeing $\nu'$ as acting on information sets. This is mathematically consistent because $X=\mathcal{I}_A$ and $X'=\mathcal{I}_A'$, so that $\pi'$ and $\nu'$ have the same domain and codomain.
\item $\alpha=\beta$. This is also consistent because $\beta$ is a family of maps indexed by $X$  which is the same as $\mathcal{I}_A$, and $\beta_x$ for $x\in\mathcal{I}_A$ is a map from $\chi(\pi'(x))\rightarrow\chi(x)$ which is the same as from $O_{\pi'(x)}'\rightarrow O_x$ by definition of F.
\end{itemize}
\end{definition}

\begin{lemma}
$F(\gamma)$ as defined is a scenario morphism.
\end{lemma}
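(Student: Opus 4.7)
The plan is to verify the three structural constraints of a scenario morphism for $F(\gamma)=(\nu',\beta)$, as well as the simplicial-map condition on $\pi'$, using the corresponding rules on the game morphism $\gamma$ together with the bijection $t\mapsto \hat{t}$ of Lemma \ref{lemma-bijection-tp}.

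First, I would verify that $\pi'=\nu'$ induces a simplicial map on the natural covers. A facet $C\in\mathcal{C}$ arises from some complete history $z$ as $C=C_z=\mathcal{I}_A\cap\mathrm{support}(z)$. The game-morphism rule 4 states that at every $t'\in\mathcal{D}_\nu$, $\nu'$ restricted to the children information sets of $\nu(t')$ is a simplicial map into $\chi'(t')$; gluing these local simplicial maps along the DAG (using BA1 and the bijection of Lemma \ref{lemma-bijection-tp}) yields a global simplicial map $\pi'(C)\subseteq C'$ for some $C'\in\mathcal{C}'$, which is exactly the required condition.

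Next, for rule 1 (preservation of causal bridges), take $x,y\in X$ with $\tau(\pi'(x))=\tau(\pi'(y))$. By Lemma \ref{lemma-bijection-tp} applied to $\mathcal{G}'$, left-hand sides of enabling relations are in bijection with Bob nodes, so this equality translates to $\mathrm{parent}(\nu'(x))=\mathrm{parent}(\nu'(y))$. Rule 2 of the game morphism then gives $\mathrm{parent}(x)=\mathrm{parent}(y)$, and Lemma \ref{lemma-bijection-tp} applied to $\mathcal{G}$ yields $\hat{\mathrm{parent}(x)}=\hat{\mathrm{parent}(y)}$, i.e.\ $\tau(x)=\tau(y)$. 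For rule 2 (enabling not precluded), suppose $y'\in\mathrm{support}(\tau(\pi'(x)))$; by the definition of $\hat{\cdot}$ this means $y'\smile t'$ in $\mathcal{G}'$ where $t'=\mathrm{parent}(\nu'(x))\in\mathcal{D}_\nu$. Applying rule 5 of the game morphism furnishes some $y\in\mathcal{I}_A$ with $\nu'(y)=y'$, and then $\pi'(y)=y'$ as required.

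For rule 3 (outcome consistency), assume $\pi'(y)\in\mathrm{support}(\tau(\pi'(x)))$, so $\nu'(y)\smile t'$ where $t'=\mathrm{parent}(\nu'(x))$. By the remark following the definition of the scenario associated with a game, $\tau(\pi'(x))(\pi'(y))=\sigma'(\nu'(y),t')$, computed via any representative. Pick a node $n\in y$ with $n\smile\nu(t')=\mathrm{parent}(x)$ (such $n$ exists because $\nu$ is compatible with parenthood on information sets). Rule 3 of the game morphism applied at $n$ and $t'$ gives $\beta_{\iota(n)}(\sigma'(\nu'(n),t'))=\sigma(n,\nu(t'))$; since $\iota(n)=y$ and, by AB1, $\sigma'(\nu'(n),t')=\sigma'(\nu'(y),t')$ and $\sigma(n,\nu(t'))=\hat{\mathrm{parent}(x)}(y)=\tau(x)(y)$, we conclude $\alpha_y(\tau(\pi'(x))(\pi'(y)))=\tau(x)(y)$.

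The main obstacle I expect is bookkeeping around the node/information-set abuse of notation: one needs to pass carefully between a specific node $n\in y$ and the Alfred information set $y$, using AB1 to ensure that $\sigma(\cdot,t)$ is constant on $y$, and BA1 together with rule 2 of the game morphism to guarantee that a Bob node of $\mathcal{G}$ with the right causal bridge actually exists under $\nu$. Once this translation is made explicit, each clause of the scenario-morphism definition reduces by substitution to the corresponding clause in the game-morphism definition.
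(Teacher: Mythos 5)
Your proof is correct and follows essentially the same route as the paper's: each of the three scenario-morphism constraints is obtained by translating $\tau$ into $\textsf{parent}$ via Lemma \ref{lemma-bijection-tp} and then invoking game-morphism rules 2, 5, and 3 respectively, with the node/information-set bookkeeping handled by AB1. The only difference is that you also sketch a verification that $\pi'$ entails a simplicial map on the natural covers, a condition the paper's proof leaves implicit; that extra check is reasonable but not a departure in method.
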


\begin{proof}
We need to check that each one of the three constraints applies.

Let there be $x,y\in X$ such that $\tau(\pi'(x))=\tau(\pi'(y))$.

By Lemma \ref{lemma-bijection-tp}, $\tau(\pi'(x))=\tau(\pi'(y))$ correspond to one same node $t'\in\mathcal{N}_B$ and by definition of $F$, this node is the parent of both $\pi'(x)$ and of $\pi'(y)$.

Thus $$\textsf{parent}(\pi'(x))=\textsf{parent}(\pi'(y))$$

By definition of F, $\pi'(x)=\nu'(x)$ and $\pi'(y)=\nu'(y)$. Thus, $$\textsf{parent}(\nu'(x))=\textsf{parent}(\nu'(y))$$

Because $\gamma$ is a morphism it follows that $$\textsf{parent}(x)=\textsf{parent}(y)$$.

It follows that $\tau(x)=\tau(y)$ by definition of $F$.

This proves that the first constraint holds.

Let there be $x\in X$ and $y'\in\textsf{support}(\tau(\pi'(x)))$.

It follows that $y'\smile\textsf{parent}(\pi'(x))$ by definition of $F$ and thus $y'\smile\textsf{parent}(\nu'(x))$ also by definition of $F$.

$\textsf{parent}(\nu'(x))\in\mathcal{D}_{\nu}$ by definition of $\mathcal{D}_{\nu}$. Because $\gamma$ is a morphism,

$$\exists y\in\mathcal{I}_A, \nu'(y)=y'$$

from which it follows by definition of $F$:

$$\exists y\in\mathcal{I}_A, \pi'(y)=y'$$

This proves the second constraint.

Let there be $x,y\in X$ such that $\pi'(y)\in\textsf{support}(\tau(\pi'(x)))$.

It follows that $\pi'(y)\smile\textsf{parent}(\pi'(x))$ by definition of $F$ and thus $\nu'(y)\smile\textsf{parent}(\nu'(x))$ also by definition of $F$.

Because $\gamma$ is a morphism,

$$\beta_y(\sigma'(\nu'(y),\textsf{parent}(\nu'(x)))=\sigma(y,\textsf{parent}(x))$$

By definition of $F$ on scenario morphisms:

$$\alpha_y(\sigma'(\pi'(y),\textsf{parent}(\pi'(x)))=\sigma(y,\textsf{parent}(x))$$

By definition of $F$ on games:

$$\alpha_y\left[\tau(\pi'(x))(\pi'(y))\right]=\tau(x)(y)$$

which proves that the third constraint holds.

Thus, $F(\gamma)$ is a scenario morphism.
\end{proof}

\begin{lemma} F preserves the identity morphism.
\end{lemma}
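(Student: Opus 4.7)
The plan is to unpack both sides of the equation $F(\mathrm{Id}_{\mathcal{G}}) = \mathrm{Id}_{F(\mathcal{G})}$ directly from the definitions and verify that each component matches. Since $F$ is defined componentwise on morphisms and the identities on both sides are also defined componentwise, this reduces to checking equality of two pairs of functions, which should be essentially tautological.

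Concretely, I would write $\mathrm{Id}_{\mathcal{G}} = (\nu', \beta)$ with $\nu' = \mathrm{Id}_{\mathcal{N}_A}$ and $\beta_x = \mathrm{Id}_{\chi(x)}$ for each $x \in \mathcal{I}_A$. Applying $F$ then gives $F(\mathrm{Id}_{\mathcal{G}}) = (\pi', \alpha)$ where $\pi' = \nu' = \mathrm{Id}_{\mathcal{N}_A}$ viewed as acting on information sets, and $\alpha_x = \beta_x = \mathrm{Id}_{\chi(x)}$. On the other side, $\mathrm{Id}_{F(\mathcal{G})} = (\mathrm{Id}_X, \{\mathrm{Id}_{O_x}\}_{x \in X})$. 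Since $F$ sets $X = \mathcal{I}_A$ and $O_x = \chi(x)$, the equalities $\alpha_x = \mathrm{Id}_{O_x}$ are immediate.

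The only step requiring a brief argument is that $\nu' = \mathrm{Id}_{\mathcal{N}_A}$, once lifted via the abuse of notation $x' = \nu'(x)$ for $x \in \mathcal{I}_A$, coincides with $\mathrm{Id}_X = \mathrm{Id}_{\mathcal{I}_A}$. By the first structural constraint of a game morphism, for every $x \in \mathcal{I}_A$ there is a unique $x' \in \mathcal{I}'_A$ with $\nu'(n) \in x'$ for all $n \in x$; when $\nu'$ is the identity on $\mathcal{N}_A$, we have $\nu'(n) = n \in x$, so $x' = x$, i.e., the lifted map is the identity on $\mathcal{I}_A = X$. This is the only point that is not literally a matching of symbols, and I expect it to be the main (though minor) obstacle; I would state it in one sentence and then conclude $F(\mathrm{Id}_{\mathcal{G}}) = \mathrm{Id}_{F(\mathcal{G})}$.
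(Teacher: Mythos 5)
Your proposal is correct and follows essentially the same route as the paper: unpack the components of $F(\mathrm{Id}_{\mathcal{G}})$ and $\mathrm{Id}_{F(\mathcal{G})}$ and match them using $X=\mathcal{I}_A$ and $O_x=\chi(x)$. Your extra sentence justifying that $\mathrm{Id}_{\mathcal{N}_A}$ lifted to information sets is $\mathrm{Id}_{\mathcal{I}_A}$ is a slightly more explicit version of what the paper dismisses as an abuse of notation, and is a welcome clarification rather than a deviation.
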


\begin{proof}
$\text{Id}_{\mathcal{G}}$ is defined as:
\begin{itemize}
\item $\nu'=\text{Id}_{\mathcal{N}_A}$ (identified with $\text{Id}_{\mathcal{I}_A}$ with the abuse of notation)
\item $\beta_x=\text{Id}_{\chi(x)}$ for each $x\in\mathcal{N}_A$
\end{itemize}

If we now consider $F(\text{Id}_{\mathcal{G}})$ defined as above and with components $(\pi', \alpha)$ then $\pi'=\nu'=\text{Id}_{\mathcal{I}_A}=\text{Id}_X$ and for each $x\in X=\mathcal{I}_{A}$, $\alpha_x=\beta_x=\text{Id}_{\chi(x)}=\text{Id}_{O_x}$.

Thus, $F(\text{Id}_{\mathcal{G}})=\text{Id}_{F(\mathcal{G})}$
\end{proof}

\begin{lemma} F preserves morphism composition.
\end{lemma}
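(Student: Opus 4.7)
The plan is to unwind the definitions on both sides and observe that the two composition rules are already identical up to the renaming $(\nu', \beta) \leftrightarrow (\pi', \alpha)$ imposed by $F$ on morphisms. There is no deep content here; the lemma is essentially a bookkeeping check, since $F$ leaves the underlying data of a morphism unchanged and only reinterprets $\nu'$ as a map on information sets $\mathcal{I}_A = X$ and $\beta$ as a family of maps on outcome sets $\chi(x) = O_x$.

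First I would fix two composable game morphisms $\gamma_1: \mathcal{G}' \to \mathcal{G}$ with components $(\nu_1', \beta_1)$ and $\gamma_2: \mathcal{G}'' \to \mathcal{G}'$ with components $(\nu_2'', \beta_2')$. By the definition of composition in \textbf{Game}, the composite $\gamma_1 \circ \gamma_2$ has components $(\nu_2'' \circ \nu_1', \{\beta_{1,x} \circ \beta_{2,\nu_1'(x)}'\}_{x \in \mathcal{I}_A})$. Applying $F$ to this composite, by the definition of $F$ on morphisms, yields a scenario morphism whose first component is $\nu_2'' \circ \nu_1'$ viewed as a map $X \to X''$ and whose second component is the family $\{\beta_{1,x} \circ \beta_{2,\nu_1'(x)}'\}_{x \in X}$.

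Next I would compute $F(\gamma_1) \circ F(\gamma_2)$ directly. By definition of $F$, we have $F(\gamma_1) = (\nu_1', \beta_1)$ and $F(\gamma_2) = (\nu_2'', \beta_2')$ (relabeled as $(\pi_1', \alpha_1)$ and $(\pi_2'', \alpha_2')$). Applying the composition rule in \textbf{Scenario}, the first component of $F(\gamma_1) \circ F(\gamma_2)$ is $\pi_2'' \circ \pi_1' = \nu_2'' \circ \nu_1'$, and the second component is the family $\{\alpha_{1,x} \circ \alpha_{2,\pi_1'(x)}'\}_{x \in X} = \{\beta_{1,x} \circ \beta_{2,\nu_1'(x)}'\}_{x \in X}$. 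Comparing termwise with the expression obtained above for $F(\gamma_1 \circ \gamma_2)$ gives equality of both components, hence $F(\gamma_1 \circ \gamma_2) = F(\gamma_1) \circ F(\gamma_2)$.

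There is no real obstacle here: the only subtle point to verify is that the two indexing conventions agree, i.e., that the family indexed by $\mathcal{I}_A$ on the game side is genuinely the same family as the one indexed by $X$ on the scenario side, which holds tautologically because $F$ identifies $X$ with $\mathcal{I}_A$ and $O_x$ with $\chi(x)$. Together with the previous lemma showing $F(\mathrm{Id}_{\mathcal{G}}) = \mathrm{Id}_{F(\mathcal{G})}$, this completes the verification that $F$ is a functor from \textbf{Game} to \textbf{Scenario}.
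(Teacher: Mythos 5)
Your proposal is correct and follows essentially the same route as the paper's own proof: both unwind the two composition rules componentwise and observe that, since $F$ merely relabels $(\nu',\beta)$ as $(\pi',\alpha)$ under the identifications $X=\mathcal{I}_A$ and $O_x=\chi(x)$, the components of $F(\gamma_1\circ\gamma_2)$ and $F(\gamma_1)\circ F(\gamma_2)$ coincide termwise. Nothing is missing.
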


\begin{proof}
Let us consider three games $\mathcal{G}$, $\mathcal{G'}$, and $\mathcal{G''}$ in $\text{ob}(\textbf{Game})$ and two morphism $\gamma_1:\mathcal{G'}\rightarrow\mathcal{G}$ and $\gamma_2:\mathcal{G''}\rightarrow\mathcal{G'}$. We need to show that $F(\gamma_1 \circ \gamma_2)=F(\gamma_1)\circ F(\gamma_2)$.

Recall that if $\gamma_1$ has components $(\nu_1',\beta_1)$ and $\gamma_2$ has components $(\nu_2'',\beta_2')$, then the morphism $\gamma_3=\gamma_1 \circ \gamma_2:\mathcal{G''}\rightarrow\mathcal{G}$ is defined as the tuple $(\nu_3'',\beta_3)$ where:

\begin{itemize}
\item $\nu_3''=\nu_2'' \circ \nu_1'$.
\item $\beta_{3,x}=\beta_{1,x} \circ \beta_{2,\nu_1'(x)}'$ for every $x\in\mathcal{I}_A$
\end{itemize}

$F(\gamma_1)$ has components $(\pi_1', \alpha_1)$ where:
\begin{itemize}
\item $\pi_1''=\nu_1'$
\item $\alpha_1=\beta_1$
\end{itemize}

$F(\gamma_2)$ has components ($\pi_2'', \alpha_2'$) where:
\begin{itemize}
\item $\pi_2''=\nu_2''$.
\item $\alpha_2'=\beta_2'$.
\end{itemize}

Thus, by definition of the composition of scenario morphisms,  $F(\gamma_1)\circ F(\gamma_2)$ has components ($\pi_3'', \alpha_3$) where:
\begin{itemize}
\item $\pi_3''=\pi_2'' \circ \pi_1'=\nu_2'' \circ \nu_1'=\nu_3''$.
\item for any $x\in X=\mathcal{I}_A$, $\alpha_{3,x}=\alpha_{1,x}\circ \alpha_{2,\pi_1'(x)}=\beta_{1,x}\circ \beta_{2,\nu_1'(x)}=\beta_{3,x}$
\end{itemize}

This matches exactly the definition of $F(\gamma_3)$.

Thus $F(\gamma_3)=F(\gamma_1 \circ \gamma_2)=F(\gamma_1)\circ F(\gamma_2)$ and F preserves morphism composition.

\end{proof}

\begin{theorem} F is a functor.
\end{theorem}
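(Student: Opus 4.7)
The plan is to assemble the five preceding lemmas into the four axioms that define a functor; there is no new mathematical content to develop at this stage beyond bookkeeping. First I would verify that $F$ is well-defined on objects: Definition \ref{ref-onedirection} produces the tuple $F(\mathcal{G}) = (X, O, \vdash, \mathcal{C})$ for every $\mathcal{G} \in \text{ob}(\textbf{Game})$, and the two lemmas immediately following that definition establish respectively that the enabling relations have consistent left-hand sides (so $F(\mathcal{G})$ is a genuine causal contextuality scenario) and that the natural cover $\mathcal{C}$ is causally secured. Acyclicity of the enabling relation is inherited from the DAG structure of $\mathcal{G}$, and uniqueness of causal bridges follows from property BA1 combined with the bijection $t \mapsto \hat{t}$ of Lemma \ref{lemma-bijection-tp}. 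Together these show $F(\mathcal{G}) \in \text{ob}(\textbf{Scenario})$.

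Second, I would check the morphism assignment: the lemma proving that $F(\gamma)$ is a scenario morphism shows that $F$ sends any arrow $\gamma : \mathcal{G}' \rightarrow \mathcal{G}$ in \textbf{Game} to an arrow $F(\gamma) : F(\mathcal{G}') \rightarrow F(\mathcal{G})$ in \textbf{Scenario}, with the correct source and target, since $\pi' = \nu'$ and $\alpha = \beta$ automatically inherit the domain/codomain structure from the correspondences $X = \mathcal{I}_A$ and $O_x = \chi(x)$.

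Third, the two remaining functor axioms are exactly the content of the last two lemmas before the theorem: $F(\text{Id}_{\mathcal{G}}) = \text{Id}_{F(\mathcal{G})}$ and $F(\gamma_1 \circ \gamma_2) = F(\gamma_1) \circ F(\gamma_2)$. Collecting all four points, $F$ satisfies the definition of a functor from \textbf{Game} to \textbf{Scenario}, so the proof should consist of a single short paragraph that cites each of these results in turn.

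The main obstacle is not in this concluding step at all; it has already been absorbed by the preceding lemmas, most notably the verification that the natural cover is causally secured (which required the maximality argument via a linearization of the game's DAG) and that $F(\gamma)$ preserves causal bridges (which depends crucially on property BA2 together with the bijection of Lemma \ref{lemma-bijection-tp}). Because each functor axiom now matches one completed lemma essentially verbatim, I expect the theorem itself to reduce to a line-by-line citation, with the only real risk being notational sloppiness in aligning $\nu'$ with $\pi'$ and $\beta$ with $\alpha$ under the identifications made by $F$.
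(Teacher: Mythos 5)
Your proposal is correct and follows essentially the same route as the paper, whose own proof simply cites that $F$ preserves identities and composition (the content of the preceding lemmas). If anything, you are slightly more thorough than the paper, since you also spell out why $F(\mathcal{G})$ lands in $\text{ob}(\textbf{Scenario})$ (acyclicity and unique causal bridges), points the paper leaves implicit.
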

\begin{proof}
F preserves the identity and composition of morphisms. It is thus a functor from $\textbf{Game}$ to $\textbf{Scenario}$.
\end{proof}

\subsubsection{Essential surjectivity}

In order to show that $F$ is essentially surjective, we build a functor $G:\textbf{Scenario}\rightarrow \textbf{Game}$. This functor will allow us to show that any scenario is isomorphic to a scenario in the image of $\textbf{Game}$.

We now give the mapping in the opposite direction, i.e., how to associate a spacetime game with a scenario.

\begin{definition}[Game associated with a scenario]
\label{ref-otherdirection}
Given a scenario $\Gamma=(X,O,\vdash,\mathcal{C})\in\text{ob}(\textbf{Scenario})$, we define the associated spacetime game $\mathcal{G}=G(\Gamma)$ obtained like so:
\begin{itemize}
\item $\mathcal{P}=\{\text{Alfred},\text{Bob}\}$
\item Nodes played by Bob: $\mathcal{N}_B=\{t|\exists x\in X, t\vdash x\}$
\item Nodes played by Alfred: $\mathcal{N}_A=\{(t,x,c)|t\vdash x, c\in \mathcal{C}_t, x\in c\}$
\item $\mathcal{N}=\mathcal{N}_A\cup\mathcal{N}_B$
\item $\forall (t,x,c)\in\mathcal{N}_A, \rho((t,x,c))=\text{Alfred}$
\item $\forall t\in\mathcal{N}_B, \rho(t)=\text{Bob}$
\item Actions by Bob: $\mathcal{A}_B= \bigcup_{t\in\mathcal{N}_B} \mathcal{C}_t$
\item Actions by Alfred: $\mathcal{A}_A=\bigcup_{x\in X} O_x $
\item $\mathcal{A}=\mathcal{A}_A\cup\mathcal{A}_B$
\item $\forall (t,x,c)\in\mathcal{N}_A, \chi(t,x,c)=O_x$
\item $\forall t\in\mathcal{N}_B, \chi(t)=\mathcal{C}_t$
\item $\mathcal{R}_A=\{(t, (t,x,c))|t\vdash x\}$
\item $\mathcal{R}_B=\{((u,x,c),t)|x\in \textsf{support}(t)\}$
\item $\mathcal{R}=\mathcal{R}_A\cup\mathcal{R}_B$
\item $\forall (t,(t,x,c))\in\mathcal{R}_A, \sigma(t, (t,x,c))=c$
\item $\forall ((u,x,c), t)\in\mathcal{R}_B, \sigma((u,x,c),t))=t(x)$
\item Alfred's information sets: $\mathcal{I}_A=\{(t,x,c)|x\in X\}\equiv X$
\item Bob's information sets: $\mathcal{I}_B=\{\{t\}|t\in \mathcal{N}_B\}\equiv\mathcal{N}_B$
\item $\mathcal{I}=\mathcal{I}_A\cup\mathcal{I}_B$
\item $\mathcal{Z}$ is as defined in Definition \ref{definition3} to make the game valid.
\end{itemize}
\end{definition}

\begin{lemma}
The spacetime game associated with a scenario is alternating: $$\forall \Gamma\in\text{ob}(\textbf{Game}), G(\Gamma)\in\text{ob}(\textbf{Scenario})$$
\end{lemma}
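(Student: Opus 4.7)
The plan is to verify each of the nine alternating properties in turn by reading them off the construction in Definition \ref{ref-otherdirection}, together with a brief check that the graph is acyclic and that Definition \ref{definition3} can be applied to obtain a valid game. Almost every check reduces to unfolding the set-theoretic definitions of $\mathcal{N}_A$, $\mathcal{N}_B$, $\mathcal{R}_A$, $\mathcal{R}_B$, and the labellings $\sigma$, $\chi$, $\iota$. I will note in passing that the statement as written contains a harmless typo (the quantifier should range over $\text{ob}(\textbf{Scenario})$), and I prove the intended claim.

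The structural properties come essentially for free. 2-PLAYERS holds by the stipulation $\mathcal{P}=\{\text{Alfred},\text{Bob}\}$. BIPARTITE follows because $\mathcal{R}_A$ only contains edges from a Bob node $t$ to an Alfred node $(t,x,c)$, while $\mathcal{R}_B$ only contains edges from Alfred to Bob. EVEN holds since every Alfred node $(t,x,c)$ has $t$ as incoming neighbour (so no Alfred node is a root), and every Bob node has at least one Alfred child (because $t\in\mathcal{N}_B$ forces $\text{enabled}(t)\neq\emptyset$, whence $\mathcal{C}_t\neq\emptyset$ and any facet contains some $x$); root Bob nodes are precisely those with $t=\emptyset$. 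BOB-S is immediate from $\mathcal{I}_B=\{\{t\}\mid t\in\mathcal{N}_B\}$. BOB-A holds because any $c\in\chi(t)=\mathcal{C}_t$ is a non-empty facet and so contains some $x$, witnessing $(t,x,c)\in\mathcal{N}_A$ with $\sigma(t,(t,x,c))=c$. BA1 is immediate from $\mathcal{R}_A$: the sole predecessor of $(t,x,c)$ is $t$.

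The remaining four properties are the substantive ones. BA2 holds because two distinct children of $t$ carrying the same label $c$ must have the form $(t,x,c)$ and $(t,y,c)$ with $x\neq y$, and under $\mathcal{I}_A\equiv X$ (indexed by the second coordinate) they lie in different information sets. AB1 holds because the outgoing edges of an Alfred node $(u,x,c)\in\mathcal{N}_A$ go, via $\mathcal{R}_B$, to every Bob node $t$ with $x\in\text{support}(t)$ and carry label $t(x)$; both the destinations and the labels depend only on $x$, that is, only on the enclosing information set. AB2 is where unique causal bridges enters: given two Bob nodes $t_1,t_2$ with identical causal bridges, the common predecessor set fixes $\text{support}(t_1)=\text{support}(t_2)$ (the second coordinates of the predecessors, recalling that the first coordinate is forced to be $\tau(x)$ by uniqueness), and the common edge labels give $t_1(x)=t_2(x)$ on that support, so $t_1=t_2$ as assignments.

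Finally, the DAG property transfers from the scenario: any directed cycle in $(\mathcal{N},\mathcal{R})$ would, after contracting each Bob--Alfred--Bob segment, induce a cyclic chain in the enabling relation $\vdash$, contradicting the no-cycles hypothesis on $\Gamma$. Validity is then automatic: $\mathcal{Z}$ is, by Definition \ref{definition3}, defined to be the set of complete histories of the DAG just assembled. I expect AB2 and the DAG check to be the only places where the scenario-level assumptions (unique causal bridges, no cycles) are genuinely used; the remaining items are bookkeeping on the construction of $G(\Gamma)$.
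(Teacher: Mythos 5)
Your proposal is correct and follows essentially the same route as the paper: a property-by-property verification of the nine alternating conditions by unfolding the construction of $G(\Gamma)$, with the same arguments for the substantive cases BA2, AB1, and AB2 (the paper likewise proves AB2 by noting that equal predecessor sets and edge labels force the two assignments $t$ and $s$ to coincide, without actually needing unique causal bridges there). Your explicit acyclicity check and the remark about the swapped quantifiers in the statement are small additions the paper omits, but they do not change the approach.
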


\begin{proof}
The game is valid by construction of its set of histories. Thus, we need to show that it is alternating.
\begin{description}
\item{2-PLAYERS} This is true by construction since $\rho$ can only take two values.

\item{BIPARTITE} This holds by construction, as edges always connect a node in $\mathcal{N}_A$ to a node in $\mathcal{N}_B$ or vice versa.

\item{EVEN} This holds by construction: any node $(t,x,c)$ played by Alfred has $t$ as its parent node, so all root notes must be played by Bob. Likewise, given a node $t$ played by Bob, by construction there is at least one enabling relation $t\vdash x$ that it was taken from. Since $\mathcal{C}_t$ is never empty, $t$ has at least one child node.

\item{BOB-S} Bob's information sets are singletons by construction since $\mathcal{I}_B\equiv\mathcal{N}_B$.

\item{BOB-A} No action at Bob's nodes is unused: given a node $t$ played by Bob and an action at this node, i.e., a context $c$ in $\mathcal{C}_t$, $c$ is used on the edge from $t$ to $x$ for any $x\in c$.

\text{BA1} Nodes played by Alfred have exactly one parent by the construction of $\mathcal{R}_A$ because the parent of $(t,x,c)$ is necessarily $t$.

\text{BA2} Let us take a node played by Bob, which must be the left-hand side $t$ of some enabling relation, and which is connected to two different nodes played by Alfred. These must be of the form $(t,x,c_1)$ and $(t,y,c_2)$ with $t\vdash x$ and $t \vdash y$. Let us also assume that the labels are identical. These labels correspond to the contexts in which $x$ and $y$ are measured, meaning that $x$ and $y$ are performed in the same context $c=c_1=c_2$. But then, $x$ and $y$ must be different, otherwise the nodes would be the same $(t,x,c)=(t,y,c)$. Thus, $(t,x,c_1)$ and $(t,y,c_2)$ are in different information sets $x\neq y$.

\text{AB1} requires that all outgoing edges of the nodes $((u,x,c))_{u,c}$ in one of Alfred's information sets $x$ have the same outgoing edges. This is true from the definition $\mathcal{R}_B=\{((u,x,c),t)|x\in \textsf{support}(t)\}$ as well as of $\forall ((u,x,c), t)\in\mathcal{R}_B, \sigma((u,x,c),t))=t(x)$. Indeed, there is no dependency of any edge and label on the context $c$ or on the parent $u$.

\text{AB2} requires that two distinct nodes played by Bob cannot have the same causal bridges. If we take two of these nodes, they correspond to two left-hand sides of enabling relations $t$ and $s$. Let us further assume (reductio ad absurdum) that the nodes $t$ and $s$ have the same predecessors and the corresponding edges have the same labels. This means by construction of $\mathcal{R}_B$ that $t$ and $s$ have the same domains. Furthermore, $t$ and $s$ exactly match on their full domain, also by construction of the labels, which are of the form $t(x)$ and $s(x)$ for each predecessor $x$. Thus, $t=s$ and BA2 is fulfilled.

\end{description}

Thus, the spacetime game associated with a scenario is always alternating.
\end{proof}

\begin{theorem}
The functor $F:\textbf{Game}\rightarrow\textbf{Scenario}$ is essentially surjective.
\end{theorem}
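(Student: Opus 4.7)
The plan is to take $\mathcal{G} := G(\Gamma)$ from Definition \ref{ref-otherdirection} as the candidate preimage for each $\Gamma \in \textsf{ob}(\textbf{Scenario})$, and to exhibit a scenario isomorphism $\eta_\Gamma : F(G(\Gamma)) \to \Gamma$ whose components are the canonical identifications already baked into the definitions of $F$ and $G$. Writing $F(G(\Gamma)) = (X'', O'', \vdash'', \mathcal{C}'')$ and $\Gamma = (X, O, \vdash, \mathcal{C})$, the task reduces to matching these four data componentwise and verifying that the resulting $\eta_\Gamma$ (together with its inverse, which will also be componentwise the identity) is a valid scenario morphism.

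For the first three components the check is essentially bookkeeping. By Definition \ref{ref-otherdirection}, $\mathcal{I}_A(G(\Gamma))$ is $\{(t,x,c) \mid x \in X\}$ identified with $X$ via the projection to the second coordinate, so $X'' = X$; for each such $x$ we have $O''_x = \chi((t,x,c)) = O_x$; and the unique parent of $x$ in $G(\Gamma)$ is the Bob-node $t$ with $t \vdash x$, which exists and is unique by the unique-causal-bridges assumption on $\Gamma$. Unwinding the definition of $\hat{t}$ in Definition \ref{ref-onedirection} with the edges from $\mathcal{R}_B$ and the labels $\sigma((u,y,c), t) = t(y)$ gives $\hat{t} = \{(y, t(y)) \mid y \in \textsf{support}(t)\} = t$, so $\vdash''$ literally equals $\vdash$ under the identification $X'' \equiv X$.

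The genuinely nontrivial step is showing $\mathcal{C}'' = \mathcal{C}$, where $\mathcal{C}''$ is the natural cover of $G(\Gamma)$. For the forward inclusion, a complete history $z$ of $G(\Gamma)$ walks down the DAG, and at each Bob-node $t$ in $\textsf{support}(z)$ picks a context $c \in \mathcal{C}_t$, which by the maximality remark following Definition \ref{definition-causally-secured-cover} is of the form $C \cap \textsf{enabled}(t)$ for some $C \in \mathcal{C}$; the first causally-secured condition (facet membership propagates to enabling measurements) lets these local choices be glued into a single facet of $\mathcal{C}$, and the second condition (no inconsistent pair in a facet) rules out clashes between branches, so $C_z = \mathcal{I}_A \cap \textsf{support}(z)$ is contained in some facet of $\mathcal{C}$. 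For the reverse inclusion, given a facet $C \in \mathcal{C}$ one constructs a complete history by walking through the DAG of $G(\Gamma)$, selecting at each Bob-node $t$ encountered the context $C \cap \textsf{enabled}(t) \in \mathcal{C}_t$ and assigning arbitrary outcomes at the Alfred-nodes; maximality of $\mathcal{C}$ is precisely what ensures that $C$ itself (rather than only a proper subset) is realized as the support of such a history, so $C$ appears as a facet of $\mathcal{C}''$.

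The main obstacle is this last matching of covers: one has to show that the complete-history supports in $G(\Gamma)$ are in bijection with the facets of $\mathcal{C}$, and each of the three clauses of Definition \ref{definition-causally-secured-cover} is used for a different part of that correspondence. Once $\mathcal{C}'' = \mathcal{C}$ is established, $\eta_\Gamma$ defined by the identity on $X$ and the identity on each $O_x$ trivially satisfies the three morphism constraints (they all reduce to the equalities proved above) and is self-inverse, giving $F(G(\Gamma)) \cong \Gamma$ and hence essential surjectivity of $F$.
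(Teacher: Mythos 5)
Your overall route is the same as the paper's: take $G(\Gamma)$ as the preimage, identify $X''$ with $X$, $O''_x$ with $O_x$, unwind $\hat{t}$ to see that $\vdash''$ coincides with $\vdash$, and then let identity components furnish the isomorphism. The place where your argument has a genuine gap is exactly the step you single out as the crux, the identification of the covers. Your forward inclusion claims that conditions 1 and 2 of Definition \ref{definition-causally-secured-cover} let the local context choices of a complete history be ``glued into a single facet of $\mathcal{C}$''; they do not. Take $\{\}\vdash a$, $\{\}\vdash b$ measured jointly, $\{(a,0)\}\vdash c$, $\{(b,0)\}\vdash d$, and compare the cover with single facet $\{a,b,c,d\}$ to the cover with facets $\{a,b,c\}$ and $\{a,b,d\}$: both satisfy conditions 1 and 2 and have identical local cover restrictions, yet the complete history with $a=0$, $b=0$ and both $c$ and $d$ measured has support contained in a facet only of the first. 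It is precisely the maximality clause that rules out the second cover and makes the gluing work, so the forward inclusion cannot be discharged with clauses 1--2 alone. Symmetrically, your reverse construction ``assigning arbitrary outcomes at the Alfred-nodes'' need not realize $C$: to activate every $x\in C$ you must choose outcomes agreeing with $\bigcup_{x\in C}\bar{\tau}(x)$, whose consistency is what clause 2 guarantees (and clause 1 guarantees the needed enabling measurements lie in $C$); maximality is not what ensures $C$ is fully reached.

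The paper avoids this facet-by-facet bookkeeping altogether: it first proves (as a separate lemma) that the natural cover of any alternating game is causally secured, observes that the natural cover of $G(\Gamma)$ has by construction the same local cover restrictions as $\mathcal{C}$, and then concludes the two covers are equal because a causally secured cover is the \emph{maximum} -- hence unique -- among covers with given local restrictions satisfying clauses 1--2. If you repair your argument, either adopt that one-stroke maximality argument, or redo your two inclusions with the correct attribution of clauses (maximality for the forward direction, clauses 1--2 plus the local-restriction characterization and outcome choices along $\bar{\tau}$ for the reverse); as written, the forward gluing step would fail.
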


\begin{proof}
Let $\Gamma$ be a scenario in $\text{ob}(\textbf{Scenario})$ and let us consider $\Gamma'=F(G(\Gamma))$.

Let us also denote $\mathcal{G}=G(\Gamma)$.

We now show that $\Gamma'$ and $\Gamma$ are isomorphic.

First, by construction, $X=\mathcal{I}_A\equiv X'$ so we can identify $X$ and $X'$ and we can define $\pi'=\pi=\text{Id}_X$.

By construction, all local cover restrictions of $\mathcal{C}$ and $\mathcal{C}$ are the same and they are causally secured. Thus, they are both maximal and both identical. It follows that $\pi'$ entails a simplicial map from $\mathcal{C}$ to $\mathcal{C}'$ and that $\pi$ entails a simplicial map from $\mathcal{C}'$ to $\mathcal{C}$.

Second, for any $x'\in X'=X$, $O'_{x'}=\chi(x')=O_{x'}$ so we can define $\alpha_x=\alpha_x'=\text{Id}_{O_x}$ for any $x\in X$.

We now consider the scenario morphism $\gamma:\Gamma'\rightarrow\Gamma$ with components $(\pi', \alpha)$, and the morphism $\gamma':\Gamma\rightarrow\Gamma'$ with the same components $(\pi, \alpha')$. Note that $\gamma$ and $\gamma'$ have in fact the same components, but swapped domain and codomain.

If we show that $\gamma$ and $\gamma'$ are indeed scenario morphisms, and that $\gamma\circ\gamma'=\text{Id}_{\Gamma}$ and $\gamma'\circ\gamma=\text{Id}_{\Gamma'}$, then we have shown that $\Gamma$ and $\Gamma'$ are isomorphic.

First, let us remark that by definition of $F$ and $G$, $\Gamma$ and $\Gamma'$ have the same set of measurements $X=X'$, which correspond to $\mathcal{I}_A$ in $\mathcal{G}$.

Let there be $x,y\in X$ such that\footnote{We use $\tau'$ to clarify that this refers to the enabling relations in $\Gamma'$ to avoid any ambiguity, as $\Gamma$ and $\Gamma'$ may have different enabling relations a priori.} $$\tau'(\pi'(x))=\tau'(\pi'(y))$$

Since $\pi'$ is the identity, it follows that $$\tau'(x)=\tau'(y)$$

By definition of $\Gamma'$ as $F(\mathcal{G})$, $\tau'(x)$ corresponds to some node $t$ in $\mathcal{G}$ that is a parent of $x$, its support is given by the parent information sets of $t$ in $\Gamma'$, and its values $\tau'(x)(x_i)$ by the labels $\sigma(x_i, t)$ on the edges from each information set parent $x_i$ of $t$.

By definition of $\mathcal{G}$ as $G(\Gamma)$, the parent information sets of $t$ in $\mathcal{G}$ are all the measurements in the support of $t$ in $\Gamma$, so the support of $\tau'(x)$ is exactly the support of $\tau(x)$. Furthermore, the label between any parent information set $x_i$ of $t$ is, which is $\tau'(x)(x_i)$, is also by definition of $\mathcal{G}$, $\tau(x)(x_i)$. Thus, $\tau'(x)$ and $\tau(x)$ also have the same values and are thus identical. This shows that $\tau'(x)=\tau(x)$: the enabling relations of $\Gamma$ and $\Gamma'$ are the same. With the same argument $\tau'(y)=\tau(y)$ and by transitivity, $\tau(x)=\tau(y)$.

This proves the first rule.

The second rule follows directly from the fact that $\pi'$ is the identity, thus, any measurement is in the image of $\pi'$.

Let there be $x,y\in X$ such that $\pi'(y)\in\textsf{support}(\tau'(\pi'(x)))$ which is to say $y\in\textsf{support}(\tau'(x))$.

Then, $$\alpha_y\left[\tau'(\pi'(x))(\pi'(y)\right]=\alpha_y\left[\tau'(x)(y)\right]=\alpha_y\left[\tau(x)(y)\right]=\tau(x)(y)$$

This fulfills the third constract. Thus, $\gamma$ is a scenario morphism.

The proof that $\gamma'$ is a morphism proceeds similarly.

Thus, $\Gamma$ and $\Gamma'$ are isomorphic and $F$ is essentially surjective.

\end{proof}

\subsubsection{Fullness}

We now turn to the fullness of the functor. This means that it maps morphisms, for any pair of games, surjectively.

\begin{theorem}
The functor $F:\textbf{Game}\rightarrow\textbf{Scenario}$ is full.
\end{theorem}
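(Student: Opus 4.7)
The plan is to construct, for any scenario morphism $\delta = (\pi', \alpha)$ from $F(\mathcal{G}')$ to $F(\mathcal{G})$, a game morphism $\gamma = (\nu', \beta)$ with $F(\gamma) = \delta$. The second component is immediate: since $F$ identifies $\beta$ with $\alpha$ component-wise, I would set $\beta_x := \alpha_x$ for every $x \in \mathcal{I}_A$. Under the identifications $X = \mathcal{I}_A$ and $O_x = \chi(x)$ that are built into the functor, $\alpha_x : O'_{\pi'(x)} \to O_x$ is already of the correct type to serve as $\beta_x : \chi'(\pi'(x)) \to \chi(x)$.

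Building $\nu'$ is the real work. On information sets, $\nu'$ must agree with $\pi'$. To extend it to nodes, take any $n \in \mathcal{N}_A$ and write $x = \iota(n)$, $t = \textsf{parent}(n)$, $C = \sigma(t,n)$. By Lemma \ref{lemma-bijection-tp} applied in $\mathcal{G}'$, the scenario enabling relation $\tau'(\pi'(x))$ identifies a unique Bob node $t' \in \mathcal{N}_B'$, and $\pi'(x)$ is enabled by $t'$, so $t'$ has children in information set $\pi'(x)$. I would define $\nu'(n)$ to be the child of $t'$ lying in $\pi'(x)$ whose incoming edge from $t'$ carries the context $\pi'(C) = \{\pi'(y) \mid y \in C\}$. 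Uniqueness of such a child is granted by BA2 in $\mathcal{G}'$. Existence requires that $\pi'(C)$ actually belongs to $\chi'(t') = \mathcal{C}'_{t'}$, and I expect this to be the main technical obstacle. The key point one must establish is that the simplicial map induced by $\pi'$ carries each local facet at $t$ in $F(\mathcal{G})$ to a local facet at $t'$ in $F(\mathcal{G}')$. My plan here is to combine the maximality clause of the causally secured cover (Definition \ref{definition-causally-secured-cover}) with Remark \ref{remark-scenario-morphism} (the equivalence form of causal bridge preservation) and scenario morphism rule 2 (surjectivity onto supports of enabling relations), and verify that $\pi'(C \cap \textsf{enabled}(t)) = \pi'(C) \cap \textsf{enabled}(t')$ and that this set is maximal among local intersections $D' \cap \textsf{enabled}(t')$ with $D' \in \mathcal{C}'$.

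With $\nu'$ in hand, the five rules of a game morphism then reduce to bookkeeping. Rule 1 (information-set preservation) holds by construction. Rule 2 (parent preservation) reduces, via Lemma \ref{lemma-bijection-tp}, to scenario morphism rule 1: if $\textsf{parent}(\nu'(n)) = \textsf{parent}(\nu'(m))$, then the associated Bob nodes $t'$ coincide, so $\tau'(\pi'(\iota(n))) = \tau'(\pi'(\iota(m)))$, and hence $\tau(\iota(n)) = \tau(\iota(m))$, giving $\textsf{parent}(n) = \textsf{parent}(m)$. Rule 3 (outcome-label preservation) translates directly from scenario morphism rule 3 together with $\beta = \alpha$ and the definitions $O_x = \chi(x)$. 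Rule 4 (simplicial compatibility at Bob's nodes) holds by construction, since $\nu'(n)$ is chosen precisely so that $\sigma'(t', \nu'(n)) = \pi'(C)$. Rule 5 (support of enabling relations in the image of $\nu'$) follows from scenario morphism rule 2 via Lemma \ref{lemma-bijection-tp}. Finally, $F(\gamma) = \delta$ is immediate: by the definition of $F$ on morphisms, $F(\gamma)$ has first component $\nu'$ seen on information sets, which is $\pi'$, and second component $\beta = \alpha$, so its components literally coincide with those of $\delta$.
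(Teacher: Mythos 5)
Your proposal follows essentially the same route as the paper's proof: set $\beta=\alpha$ under the identifications $X=\mathcal{I}_A$, $O_x=\chi(x)$, extend $\pi'$ to universe nodes by sending $n$ to the unique child of the Bob node $t'$ (obtained via the bijection $t\mapsto\hat{t}$) lying in $\pi'(\iota(n))$ whose incoming edge is labeled with the image of the context $\sigma(t,n)$, and then verify the five game-morphism rules by translating scenario-morphism rules 1--3 through $F$. The step you single out as the main obstacle --- that the image context actually lies in $\chi'(t')$ --- is exactly the point the paper settles (rather briskly) by asserting that $\pi'$ restricted to the children information sets of $t$ extends to a simplicial map carrying the facets in $\chi(t)$ to facets in $\chi'(t')$, so your plan matches the paper's argument.
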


\begin{proof}
We need to show that F maps morphisms between two given games surjectively.

Let us consider two games $\mathcal{G}\in\textsf{ob}(\textbf{Game})$ and $\mathcal{G'}\in\textsf{ob}(\textbf{Game})$ and their corresponding scenarios $\Gamma=F(\mathcal{G})=(X,O,\vdash,\mathcal{C})$ and $\Gamma'=F(\mathcal{G'})=(X',O',\vdash',\mathcal{C'})$.
Let us consider a morphism $\mu:\Gamma'\rightarrow\Gamma$ with components $(\pi',\alpha)$. We need to show that there exists some $\gamma:\mathcal{G'}\rightarrow\mathcal{G}$ such that $F(\gamma)=\mu$. 

By definition of $F$, $X=\mathcal{I}_A$, $O_x=\chi(x)$ for every $x\in X$, $X'=\mathcal{I}_A'$, $O_{x'}=\chi'(x')$ for every $x'\in X'$. Because of these equalities, we can soundly define the second component of our game morphism as $\beta=\alpha$.

$\pi'$ naturally maps $\mathcal{I}_A$ to $\mathcal{I}_A'$, but we need to extend it to map $\mathcal{N}_A$ to $\mathcal{N}_A'$ in order to define $\nu'$.

Because of Lemma \ref{lemma-bijection-tp}, there is a bijection between the left-hand-sides of the enabling relations in $\Gamma$ and $\mathcal{N}_B$. There is also a bijection between the left-hand-sides of the enabling relations in $\Gamma'$ and $\mathcal{N'}_B$. The definition of $F$ also implies that $\tau$ is identified with $\textsf{parent}$.

Let $n\in\mathcal{N}_A$, and let us define $t=\text{parent}(n)$, $x=\iota(n)$, $x'=\pi'(x)=\iota(\pi'(n))$, $t'=\tau(x')$.

Recall that the structure of the edges and children below $t$ structurally form a simplicial complex whose vertices are $t$'s children information sets, and whose facets are the actions in $\chi(t$). Likewise the structure of the edges and children below $t'$ structurally form a simplicial complex whose vertices are $t'$'s children information sets, and whose facets are the actions in $\chi'(t')$.

If we apply $\pi'$ to the vertices of the simplicial complex at $t$, it can be extended to a simplicial map that in particular maps the facets in $\chi(t)$ to the facets in $\chi'(t')$.

Thus $\sigma(t,n)$ is mapped to some element $\pi'(\sigma(t,n))\in\chi'(t')$ and since $n\in \sigma(t,n)$ we have $x'=\pi'(n)\in\pi'(\sigma(t,n))$ by definition of a simplicial map.

Since $\pi'(\sigma(t,n))\in\chi'(t')$ and because of the structure of the simplicial complex at $t'$, $\pi'(\sigma(t,n))$ is a facet thereof, and there is exactly one node $n'\in x'$ such that $t'\smile n'$ and $\sigma'(t', n')= \pi'(\sigma(t,n))$. We define $\nu'(n)=n'$.

The morphism $\gamma=(\nu',\alpha)$ is indeed a game morphism:
\begin{itemize}
\item The universe nodes are mapped by $\nu'$ in a way consistent with the information sets by construction since it maps them according to $\pi'$.
\item Because $\mu$ is a scenario morphism, $\pi'$ preserves the causal bridge, and thus:

$$\forall x, y\in X, \tau'(\pi'(x))=\tau'(\pi'(y))\implies \tau(x)=\tau(y)$$

Since $\tau$ is identified with $\textsf{parent}$ by functor $F$, this implies:

$$\forall x, y\in X, \textsf{parent}(\nu'(x))=\textsf{parent}(\nu'(y))\implies \textsf{parent}(x)=\textsf{parent}(y)$$

If we now take $n,m\in \mathcal{N}_A$ and we have $\textsf{parent}(\nu'(m))=\textsf{parent}(\nu'(n))$, this implies that $\textsf{parent}(\nu'(\iota(m)))=\textsf{parent}(\nu'(\iota(n)))$ which implies $\textsf{parent}(\iota(m))=\textsf{parent}(\iota(n))$ which implies $\textsf{parent}(m)=\textsf{parent}(n)$.

This also defines $\nu$ as an implied morphism component.

\item Let n be a universe node in $\mathcal{G}$ with $\iota(n)=y$ and $t'\in\mathcal{D}_{\nu}$ an observer node in $\mathcal{G}$ such that $\nu'(n)\smile t'$. Since $t'\in\mathcal{D}_{\nu}$ there exists some $x\in X$ such that $\textsf{parent}(\nu'(x))=t'$. This in turn implies $\tau(\pi'(x))=\hat{t}'$

Since $\nu'(n)\smile t'$ we also have $\pi'(y)\smile t'$ and thus $\pi'(y)\in\textsf{support}(\hat{t'})=\textsf{support}(\tau(\pi'(x)))$. 

Because $\mu$ is a scenario morphism, we have successively:

$$\alpha_y\left[\tau(\pi'(x))(\pi'(y)\right]=\tau(x)(y)$$

$$\beta_y\left[\tau(\pi'(x))(\pi'(y)\right]=\tau(x)(y)$$

$$\beta_y\left[\sigma(\pi'(y),\textsf{parent}(\pi'(x))\right]=\sigma(y,\textsf{parent}(x))$$

$$\beta_y\left[\sigma(\pi'(y),t')\right]=\sigma(y,\nu(t))$$

$$\beta_y\left[\sigma(\pi'(n),t')\right]=\sigma(n,\nu(t))$$

which proves the third rule.

\item Let t' be an observer node in $\mathcal{D}_{\nu}$. Let n be a universe node in $\mathcal{G}$ such that $\nu(t')\smile n$. $\pi'$ restricted to the children of $\nu(t')$ forms a simplicial map from $\chi(\nu(t'))$ into $\chi(t')$ and by definition of $\nu'(n)$:
$$\pi'(\sigma(\nu(t'), n))=\sigma'(t',\nu'(n))$$

Seeing $\nu'$ as a simplicial map, we also have

$$\nu'(\sigma(\nu(t'), n))=\sigma'(t',\nu'(n))$$
which is the required equality.
\item Let $x'$ be a universe information set in $\mathcal{G'}$ (i.e., a measurement in X') and $t'$ an observer node in $\mathcal{G'}$ such that $x'\smile t'$ with $t'$ corresponds to the left hand side of some enabling relation $t' \vdash \nu'(x)=\pi'(x)$ in $\Gamma'$. It implies that $x'\in\textsf{support}(\tau(\pi'(x))$. Since $\mu$ is a morphism, then there exists some $x\in X=\mathcal{I}_A$ so that $\pi'(x)=\nu'(x)=x'$. This proves rule 5.
\end{itemize}
Thus, $\gamma$ is a morphism and $F(\gamma)=\mu$. Thus, F maps morphisms surjectively.
\end{proof}

\subsubsection{Faithfulness}

We finally turn to the faithfulness of the functor. This means that it maps morphisms, for any pair of games, injectively.

\begin{theorem}
The functor $F:\textbf{Game}\rightarrow\textbf{Scenario}$ is faithful.
\end{theorem}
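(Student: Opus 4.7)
My plan is to exploit the rigidity of the alternating-game axioms to show that once the information-set-level map $\pi'$ and the outcome relabellings $\alpha$ are fixed, the node-level map $\nu'$ is uniquely determined. I start by unfolding $F$ on morphisms: for any game morphism $\gamma = (\nu', \beta)$, the scenario morphism $F(\gamma) = (\pi', \alpha)$ has $\pi' = \nu'$ viewed on $\mathcal{I}_A$ (well-defined by rule 1) and $\alpha = \beta$. Thus from $F(\gamma_1) = F(\gamma_2)$ I immediately obtain $\beta_1 = \beta_2$ and that $\nu_1', \nu_2'$ map every $n \in \mathcal{N}_A$ into the same information set $x' := \pi'(\iota(n)) \in \mathcal{I'}_A$. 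The rest of the proof is the node-level refinement: show that $\nu_1'(n) = \nu_2'(n)$ as a specific element of $x'$.

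To pin down $\nu'(n)$ I fix $n \in \mathcal{N}_A$, let $x = \iota(n)$ and $t = \textsf{parent}(n)$ (unique by BA1), and argue that both the parent $t' = \textsf{parent}(\nu_k'(n))$ and the label $\sigma'(t',\nu_k'(n))$ on the edge $t' \smile \nu_k'(n)$ are the same for $k = 1, 2$. For the parent, Lemma \ref{lemma-bijection-tp} identifies Bob-nodes in $\mathcal{G}'$ with left-hand sides of enabling relations in $F(\mathcal{G}')$; since $\pi'$ is fixed, the enabling set $\tau'(\pi'(x))$ is fixed, so $t'$ (the Bob-node corresponding to it via the bijection) is the same on either side and equals the unique $\nu(t') \mapsto t$ witnessing rule 2. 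For the label, rule 4 of a game morphism yields $\sigma'(t', \nu_k'(n)) = \nu_k'(\sigma(t,n))$, but the right-hand side is the simplicial-map image, which is determined vertex-wise by $\pi'$ alone; hence both edges carry the same label $\pi'(\sigma(t,n))$.

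With $t'$ and the edge label both forced, axiom BA2 closes the argument: at the Bob-node $t'$, two successors sharing both a context label and an information set must coincide. Therefore $\nu_1'(n) = \nu_2'(n)$ for every $n \in \mathcal{N}_A$, so $\nu_1' = \nu_2'$ and, combined with $\beta_1 = \beta_2$, we conclude $\gamma_1 = \gamma_2$. I expect the only real obstacle to be the conceptual step of bridging from information-set-level agreement, which is all that $\pi'$ directly records, to node-level agreement; this is exactly the place where BA2 (together with BA1 and the simplicial-map content of rule 4) is indispensable, and the proof essentially identifies these axioms as what earns faithfulness of $F$.
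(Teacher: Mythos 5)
Your proposal is correct and takes essentially the same route as the paper's proof: both reduce immediately to $\beta_1=\beta_2$ and agreement of $\nu_1',\nu_2'$ on information sets, and then force node-level agreement from rule 4's simplicial-map consistency (the face image being determined by $\pi'$ on vertices) together with the BA2-style uniqueness of a child of a Bob node given its label and information set. The only difference is bookkeeping: you argue per Alfred node, fixing the common parent via Lemma~\ref{lemma-bijection-tp} and uniqueness of causal bridges in $F(\mathcal{G}')$, whereas the paper argues per Bob node $t'$ (after first noting $\nu_1=\nu_2$) by identifying the children of $t'$ with face--vertex pairs, which amounts to the same argument.
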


\begin{proof}

We need to show that F maps morphisms between two given games injectively.

Let us consider two games $\mathcal{G}\in\textsf{ob}(\textbf{Game})$ and $\mathcal{G'}\in\textsf{ob}(\textbf{Game})$ and their corresponding scenarios $\Gamma=F(\mathcal{G})=(X,O,\vdash,\mathcal{C})$ and $\Gamma'=F(\mathcal{G'})=(X',O',\vdash',\mathcal{C'})$.

Let us consider two morphisms $\gamma_1:\mathcal{G'}\rightarrow\mathcal{G}$ with components $(\nu_1',\beta_1)$ (and implicitly $\nu_1$) and $\gamma_2:\mathcal{G'}\rightarrow\mathcal{G}$ with components $(\nu_2',\beta_2)$ (and implicitly $\nu_2$) such that $F(\gamma_1)=f(\gamma_2)=\mu$. We now want to prove that $\gamma_1=\gamma_2$.

Let us denote the components of $\mu$ as $(\pi',\alpha)$.

First, by definition of $F$, $\nu_1'$ and $\nu_2'$ when considered as functions on information sets are both equal to $\pi'$ and thus coincide on information sets. In particular:

$$\{\textsf{parent}(\nu_1'(x))|x\in\mathcal{I}_A\}=\{\textsf{parent}(\nu_2'(x))|x\in\mathcal{I}_A\}$$

which implies:

$$\{\textsf{parent}(\nu_1'(n))|n\in\mathcal{N}_A\}=\{\textsf{parent}(\nu_2'(n))|n\in\mathcal{N}_A\}$$

Thus, $\nu_1$ and $\nu_2$ have the same domain of definition.

Let $x\in\mathcal{I}_A$. Then because $\gamma_1$ is a morphism, we have

$$\nu_1(\text{parent}(\nu'_1(x)))=\text{parent}(x)$$

And thus, by definition of F:

$$\nu_1(\text{parent}(\pi'(x)))=\text{parent}(x)$$

The same reasoning for $\gamma_2$, which is also a morphism, leads to:

$$\nu_2(\text{parent}(\nu'_2(x)))=\text{parent}(x)$$

And thus, by definition of F:

$$\nu_2(\text{parent}(\pi'(x)))=\text{parent}(x)$$

Thus 

$$\forall x\in\mathcal{I}_A, \nu_1(\text{parent}(\pi'(x)))=\nu_2(\text{parent}(\pi'(x)))$$

This means that $\nu_1=\nu_2$. We denote $\nu=\nu_1=\nu_2$ in the remainder of this proof.

Let $x\in X=\mathcal{I}_A$. By definition of F applied to $\gamma_1$, we have $\alpha_x(o')=\beta_{1,x}(o')$ for any o' in $\mathcal{O'}_{\pi'(x)}$. By definition of F applied to $\gamma_2$, we have $\alpha_x(o')=\beta_{x,2}(o')$ for any o' in $\mathcal{O}_{\pi'(x)}$. Thus $\beta_{1,x}(o')=\beta_{2,x}(o')$ for any $o'\in\mathcal{O'}_{\pi'(x)}$ that is for any $o'\in \chi'(\pi'(x))$. Thus, $\beta_1$ and $\beta_2$ coincide for all $x$ and $o'$:

$$\beta_1=\beta_2$$

Consider some $t'$ in $\mathcal{N}_B'$. We saw that the labels in $\chi'(t')$ form a simplicial complex whose vertices are the information sets that are the children of $t'$. To each face $a'=\chi'(t')$ and each vertex $x'\in a'$ corresponds exactly one node $n'\in x'$ such that $\sigma'(t',n')=a'$.

If we now consider $t=\nu(t')$ in $\mathcal{N}_B$, we also have that the labels in $\chi(t)$ form a simplicial complex whose vertices are the information sets that are the children of $t$. To each face $a=\chi(t)$ and each vertex $x\in a$ corresponds exactly one node $n\in x$ such that $\sigma(t,n)=a$.

Furthermore, because $\gamma_1$ is a morphism, $\nu_1'$ seen as a map between information sets forms a simplicial map from $\chi(t)$ to $\chi'(t')$. Likewise, because $\gamma_1$ is a morphism, $\nu_2'$ seen as a map between information sets forms a simplicial map from $\chi(t)$ to $\chi'(t')$.

Since $\nu'_1$ and $\nu_2'$ coincide on all vertices (which are the information sets), it follows that $\nu'_1$ and $\nu'_2$ (seen as simplicial maps $\chi(t)$ to $\chi'(t')$) coincide fully, i.e., they map the faces (actions) in the same way. Consequently, they also map the nodes (which are face-vertex pairs) in the same way. Thus, $\nu_1'$ and $\nu_2'$ coincide on $\mathcal{N}_A$.

\end{proof}

We can now conclude with the proof of Theorem \ref{theorem-equivalence}.

\begin{proof}
F is a functor from $\textbf{Game}$ to $\textbf{Scenario}$ that is essentially surjective and fully faithful. It is thus a witness to the equivalence of $\textbf{Game}$ and $\textbf{Scenario}$.
\end{proof}
\clearpage

\section{Discussion}

We now turn to a few examples showing that looking at a scenario via its associated spacetime game is directly useful and simplifies certain constructs, reasonings, and proofs.

\subsection{Good covers}

In the paper introducing causal contextuality scenarios \cite{Abramsky2024}, it is conjectured that certain covers dubbed ``good covers'' (or ``causally secured covers'' in subsequent talks) characterize certain causal contextuality scenarios for which desirable properties apply---such as the sheaf property. We are not aware, at the time of writing, of a subsequent publication giving a formal definition of these good covers, but show further down that if we define causally secured covers as in this paper, they provide such a candidate in which the sheaf property holds for the (pure) strategy presheaf\footnote{But not to empirical models, which correspond to mixed strategies.}.

\subsection{The pure strategy sheaf of a spacetime game}
\label{strategy-sheaf}

The paper introducing causal contextuality scenarios \cite{Abramsky2024} defines strategies of nature and of the experimenter for a given causal contextuality scenario. Follow-up work \cite{Searle2024} also mentions that these strategies can be expressed as global strategies that map measurements to outcomes, a construct also referred to---in our understanding--- as the ``flattening construction'' in an IQSA keynote speech.

If we view a causal contextuality scenario as its corresponding spacetime game, this game has a corresponding strategic form as we explained in 2020 \citep{Fourny2020}: it is the strategic form of the associated game in extensive form with imperfect information, which was constructed 75 years ago by Harold Kuhn \citep{Kuhn1950}\cite{Rubinstein1994}. A pure strategy in this strategic form provides the desired global strategy function.

Worded differently, Kuhn's strategic form can be used to obtain the presheaf of (pure) strategies similar to a presheaf of events whose global sections are the pure strategies of the (non-reduced) strategic form. The reduced strategic form eliminate some redundancies by grouping pure strategies in equivalence classes, however, the reduced pure strategies would then no longer be global sections.

In the flat case, it was shown by Abramsky and Brandenburger \cite{Abramsky2011} that the presheaf of events (which is the flat counterpart to the presheaf of strategies) has the sheaf property because it involves the gluing of partial functions on discrete space that agree on their overlaps. Thanks to the spacetime game framework, it is straightforward to translate this argument to any acyclic causal contextuality scenarios with unique causal bridges and a causally-secured cover.

If we define the pure strategy presheaf as indicated above, then we note that these strategies are partial functions on a discrete space ($\mathcal{I}_A=X$) to actions. Thus, we can generally consider the restrictions of these partial functions to smaller domains, from which it follows as in shown by Abramsky and Brandenburger \cite{Abramsky2011} that the suchly defined presheaf of strategies has the sheaf property.

This was left as an open question by \cite{Abramsky2024} to determine ``good covers'' for which this is the case, and we have thus shown that Definition \ref{definition-causally-secured-cover} successfully addresses this open question at least for acyclic scenarios with unique causal bridges. The simplicity of the construction of the strategy sheaf of Nature demonstrates the expressive power of the spacetime game framework, itself building on 75 years of game theory literature.

Since the categories $\textsf{Game}$ and $\textsf{Scenario}$ are equivalent, one can also instead define the strategy presheaf of a spacetime game by applying the definition given in \cite{Abramsky2024} to the equivalent scenario. We conjecture that this construction is structurally equivalent to the pure strategy presheaf defined above.

\subsection{The mixed strategy presheaf of a spacetime game}

An original contribution of \cite{Abramsky2024} is the generalization of empirical models to scenarios involving causality. Empirical models are obtained through the composition of a distributions functor with the (pure) strategy sheaf. This can be done on different kinds of semirings leading to (signed or unsigned) probabilistic or possibilistic semantics.

Nash equilibria \citep{Nash1951} and other concepts apply naturally to spacetime games \citep{Fourny2020} through their associated extensive form. The strategic form constructed by Kuhn has pure strategies but also mixed strategies, which are probabilistic mixtures of pure strategies, i.e., a probability distribution over all pure strategies of a given player.

Thus, a corollary insight of this paper is that applying the distribution functor to the (pure) strategy sheaf \cite{Abramsky2024} is akin to introducing mixed strategies as distributions over pure strategies as is classically done in game theory. The (pre)sheaf structure then additionally involves restrictions of pure or mixed strategies to smaller supports or, the other way around, asks whether a compatible set of distributions can be glued back into a global section. The existence of a global section corresponds to the case in which a mixed strategy of Nature successfully explains the obtained correlations with various choices of contexts.

Thus, a natural synonym for the distribution presheaf obtained by composing the distribution functor with the sheaf of pure strategies would be \emph{presheaf of mixed strategies}.

A mixed strategy of nature thus corresponds to a deterministic hidden variable model where $\lambda$ is a pure strategy. Free choice, as commonly assumed in the quantum foundations' literature, translates to nature and the observers independently picking their mixed strategies -- as is done when considering Nash equilibria in classical game theory.

Let us take the game in Figure \ref{fig1} (Figure \ref{fig2} for its extensive form with imperfect information) as an example. It corresponds to the 2-2-2 EPR experiment. If we observe that X and Y have disjoint supports (because Alice either picks basis x or y) and we merge them into a single one (call it X) -- and we do the same for Bob (W and Z are merged into a single variable Z) -- then the formula for computing the probability for the complete history resulting from a combination of a mixed strategy of nature (a distribution over the values of $\Lambda$) and the mixed strategies of observers (distributions over the values of $A$ and $B$) happens to \emph{exactly correspond} to the formula used in the Bell inequality literature \citep{Colbeck2017} for local-realist theories with measurement independence:

$$P_{XZ|AB}=\sum_\lambda P_\Lambda(\lambda) P_{X|A\Lambda} P_{Z|B\Lambda}$$

Figure \ref{fig10} shows how this can also be interpreted as a causal network, with all variables depending deterministically on the mixed strategies, leading to a distribution of the complete histories for each combination of mixed strategies of Alice, Bob, and Alfred. Since in our non-Nashian paradigm \citep{Fourny2019}\citep{Baczyk2023} we posit a symmetry between nature and observers (or equivalently, between settings and outcomes), we add explicit lambdas also for the observers and not only for nature.

\begin{figure}
\begin{center}
\includegraphics[width=0.5\textwidth]{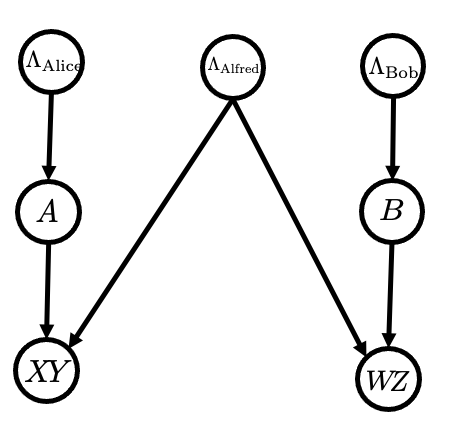}
\end{center}
\caption{A visualization of the Nash mixed strategies as exogenous nodes in a causal network. We merged XY and WZ into single random variables, by observing that the supports are disjoint, to keep the visual simple to understand.}
\label{fig10}
\end{figure}

In the special case of flat games (no adaptive measurements), it was shown by Abramsky and Brandenburger \cite{Abramsky2011} that the existence of a deterministic hidden variable model realizing an empirical model is implied by the existence of a global section for it. With spacetime games, this is intuitive to understand: we observe that a mixed strategy of nature is the same mathematical object as a global section of the event distribution presheaf and we remember that spacetime games have a natural cover given by the facets of the projection of the support of their complete histories to nature.

Then, the following result follows from Harold Kuhn's strategic form:

\begin{theorem}
Given an alternating spacetime game and an empirical model, the existence of a mixed strategy (global section) that is consistent with an empirical model implies the existence of a deterministic hidden variable model for this empirical model.
\end{theorem}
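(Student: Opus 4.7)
The plan is to observe that, once the alternating spacetime game is viewed through Kuhn's strategic form (as in Section \ref{strategy-sheaf}), a mixed strategy of Nature is by definition a probability distribution over its pure strategies, and each pure strategy of Nature is already the data of a deterministic hidden variable $\lambda$: a total function assigning an outcome to every measurement of Nature that could be activated. The theorem then amounts to exhibiting the hidden variable model canonically out of this data.

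First I would make the setup explicit. Kuhn's strategic form associates with the alternating spacetime game $\mathcal{G}$ a set $\Sigma_A$ of pure strategies of Nature, namely the global sections of the pure strategy sheaf constructed in Section \ref{strategy-sheaf}. A mixed strategy of Nature is a distribution $\mu\in\Delta(\Sigma_A)$. By the sheaf property of the pure strategy sheaf and the definition of the distribution presheaf obtained by composing with the distribution functor of \cite{Abramsky2024}, $\mu$ being a global section of the empirical model is equivalent to: for every facet $C$ of the natural cover of $\mathcal{G}$, the marginal $\mu|_{C}$ coincides with the prescribed local empirical distribution $e_C$ on the outcomes of the measurements in $C$.

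Second I would construct the hidden variable model directly. Set $\Lambda:=\Sigma_A$ with prior $P_\Lambda:=\mu$, and define the deterministic response by $\lambda\mapsto\lambda(x)$ for each measurement $x\in X=\mathcal{I}_A$ that is activated under a context $C$; this is well-defined and deterministic because each $\lambda$, being a pure strategy, is a total function on the information sets of Nature compatible with the game structure. The joint distribution over outcomes in any facet $C$ induced by sampling $\lambda\sim P_\Lambda$ and reading off $(\lambda(x))_{x\in C}$ is exactly the marginal $\mu|_{C}$, which by hypothesis equals $e_C$. Hence $(\Lambda,P_\Lambda)$ together with this deterministic response realizes the empirical model.

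The main subtlety, and the step to handle with care, is verifying that the marginalization built into the distribution presheaf genuinely corresponds to evaluating a pure strategy of Nature on the measurements of a context. This is where the flat-case argument of \cite{Abramsky2011} is transported through the categorical equivalence $\textbf{Game}\simeq\textbf{Scenario}$ of Theorem \ref{theorem-equivalence}: the restriction maps of the pure strategy sheaf are literally restrictions of partial functions, so pushing $\mu$ forward along these restrictions agrees with marginalizing the random vector $(\lambda(x))_{x\in C}$. The factorization of Nature's mixing from Bob's choices of contexts, which is what makes the resulting model \emph{local-realist} in the sense of the formula displayed for the 2-2-2 case, is ensured by BIPARTITE and BOB-S, which give Kuhn's strategic form the product structure $\Sigma_B\times\Sigma_A$ and let Nature's $\lambda$ be drawn independently of Bob's chosen context.
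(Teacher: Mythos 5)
Your proposal is correct and follows essentially the same route as the paper: the paper's proof likewise identifies the global section with a mixed strategy of Nature via Kuhn's strategic form, i.e.\ a distribution over pure strategies, each of which is a deterministic assignment of outcomes and hence serves directly as the hidden variable $\lambda$. Your additional detail (taking $\Lambda=\Sigma_A$, $P_\Lambda=\mu$, and checking that restriction in the strategy sheaf matches marginalization) is simply an expanded version of the same argument.
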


\begin{proof}
If we consider the associated spacetime game, the existence of a global section consistent with the empirical model is equivalent to the existence of a mixed strategy of nature, i.e., a distribution on nature's pure strategies, that is consistent with the empirical model. This mixed strategy provides a deterministic hidden variable model for this empirical model.
\end{proof}

We have just generalized the result of Abramsky and Brandenburger \cite{Abramsky2011} from the flat case to acyclic causal contextuality scenarios with a causally-secured cover and unique causal bridges. The reason for the short proof is not that it is trivial; rather, it builds on previous work including Kuhn's construct. This is consistent with our qualitative argument given by Baczyk and Fourny \cite{Baczyk2023} that Nashian game theory is not compatible with quantum physics. The formulation in terms of presheaves and the sheaf property \cite{Abramsky2011} strengthens this argument.

\subsection{Perfect information}

We observed \cite{Fourny2020}  that imperfect information in the extensive form of the game corresponding to a spacetime game with perfect information can be interpreted in terms of the preservation of locality: information sets are grouped in the way they are because of the finite speed of light, and spacelike separation.

If we do not require perfect information for spacetime games, this introduces another physical justification for imperfect information in the resulting extensive form, namely, noncontextuality because non-singleton information sets in the spacetime game (which lead to the fusion of the corresponding information sets in the extensive form) arise from the different contexts in which a measurement can be carried out, with the result independent of the context. This is also consistent with the fact that non-locality is a special case of contextuality.

\subsection{Temporal contextuality scenarios}

Temporal contextuality scenarios were introduced by Searle et al.  \cite{Searle2024} as a special case of causal contextuality scenario in which the measurements are timelike-separated. If we consider the associated spacetime games, temporal contextuality scenarios correspond to games in extensive form with perfect information. Furthermore, the lookbacks---corresponding to limited memory on past experiments--- can be modeled in the spacetime game framework with imperfect information, i.e., by connecting identical measurements carried out in distinct pasts together. All the above simplified constructs apply: causally-secured covers, pure strategies, the (pure) strategy sheaf, and the link between the existence of a global section of an empirical model (a mixed strategy) and the existence of a deterministic hidden variable model.

\subsection{Information matrices}

Thanks to the equivalence of categories, and since causal contextuality scenarios adapt and repurpose concrete domains and information matrices by Kahn and Plotkin \cite{Kahn1993}, we gain the interesting insight that spacetime games are directly related to this earlier contribution, where cells correspond to nodes of a spacetime game, that is, $\Gamma$ corresponds to $\mathcal{N}$, the set of values $V$ corresponds to the set of actions $\mathcal{A}$, $\mathcal{V}$ corresponds to $\chi$, and the enabling function $\mathcal{E}$ corresponds to the labeled incoming edges of a node ($\mathcal{R}$ and $\sigma$). Further even earlier related work to information matrices includes that of Nielsen, Plotkin and Winskel \cite{Nielsen1981}. Thus, the equivalence of categories connects our work to a dense body of literature, which could lead to further future insights combining the two research directions.

\section{Related work}

Hardy models \citep{Hardy1993}, PR boxes \citep{Popescu1994}, and GHZ models \citep{Greenberger1989} are representable with spacetime games. The framework also maps to contextuality by default \citep{Dzhafarov2016}, which corresponds to having perfect information. Forcing random variables corresponding to measurement outcomes to have the same value formally corresponds to grouping the corresponding nodes into the same information set.

Sheaves of sections as constructed by Gogioso and Pinzani \citep{Gogioso2021} correspond to a special case of spacetime games with perfect information in which measurements are performed independently of previous measurement outcomes, without merging the information sets (see also Section 6.b. in \citep{Abramsky2024}), and with unique causal bridges. The fact that the sheaf property obtains is consistent with our results, which generalize the case covered by Gogioso and Pinzani \citep{Gogioso2021} to all alternating spacetime games, and the causal contextuality scenarios with no cycles, unique causal bridges, and a causally-secured cover.

A definition of free choice is given by Renner and Colbeck \cite{Renner2011} and Sengupta \cite{Sengupta2025}, as a decision (called spacetime variable) being independent of anything it could not have caused. A deterministic resolution algorithm for all spacetime games with perfect or imperfect information, and which circumvents impossibility theorems (such as the sheaf property of the strategy distribution presheaf not generally holding) by weakening free choice, is given in \citep{Fourny2019}, with the special case of games in normal form \citep{Fourny2017} and of games in extensive form with perfect information \citep{Fourny2014}. A high-level discussion of the non-Nashian paradigm in the context of quantum physics interpreted as a game between the observer and nature is also available \citep{Fourny2019q}.

The seeding philosophy of the non-Nashian approach as well as the notion of causal bridge is attributable to Dupuy \cite{Dupuy1992} (counterfactual decision theory) and the game theoretic seed and initial conjectures were made by \cite{Dupuy2000}. The work of Petri \cite{Petri1996} on causally secured events around the same time is independently related to Dupuy's work, and Petri nets, dating back to as early as 1939 in their first form, can also be seen as an early precursor to the concept of causal bridge in a special-relativistic sense.

There is also subsequent work by Kadnikov and Wichardt \cite{Kadnikov2022}\citep{Kadnikov2020} that provides a strategic decision-making model generalizing the normal form and extensive form with graph structures related to spacetime games. This work defines the outcomes of the model based on sets of leaves of the graph structure and nicely connects it to the set-theoretic form by Bonanno \cite{Bonanno1992}\citep{Bonanno1991}. These sets of leaves can be put in correspondence with the complete histories, and the outcomes of the game in extensive form with imperfect information, giving additional and complementary insights.

\cite{Hance2022} give an excellent argument of why impossibility theorems do not disprove the existence of local theories. Hance and Hossenfelder \cite{Hance2022s} also are of the opinion that free choice can be weakened. Wharton and Argaman \cite{Wharton2020} refer to this possibility as (non-)future-input-dependent parameters, and it is also known as measurement independence (see for example the work of Storz et al. \citep{Storz2023} for the explicit mention of this assumption in Bell experiments) or setting independence \citep{Mueller2023}.

\section{Conclusion and future work}

We would like to conclude by sharing our scientific belief that \emph{only acyclic causal contextuality scenarios with unique causal bridges and causally-secured covers correspond to a physical reality}, and that more general scenarios (with cycles, or non-unique causal bridges, or more general covers) are, above all, a useful mathematical tool for reductio ad absurdum proofs that nature is contextual.

Whether scenarios with cycles can exist is the subject of a broad debate in the quantum foundations community known as indefinite causal order. For example, see an argument against the existence of such scenarios in \cite{Vilasini2024}.

Non-unique causal bridges physically mean that a measurement carried out under specific enabling conditions is physically the same measurement as a measurement carried out under different enabling conditions. What this means is closely related to the difference between the classical, Nashian view of free choice as opposed to the non-Nashian approach. Indeed, the core difference between the two approaches is to what extent the past light cone of a decision deterministically determines this decision: the link is more loose in the Nashian view, while it is tight in the non-Nashian view. In the non-Nashian view, which is the direction of our research programme, the tight link between a decision (by an observer or the universe) and the past light cone is consistent with unique causal bridges: a measurement carried out in specific enabling conditions is a not the same as a measurement (counterfactually) carried out in different conditions.

Finally, causally-secured cover pertain to locality as can be seen from the criteria used in our definition: saying that the support of the set of events on the left-hand side of an enabling condition must be included in every context in which the right-hand side appears is consistent with the semantics of enabling conditions. The maximality of a causally-secured cover can be interpreted in terms of a form of non-signaling, in the sense that choices made at some location cannot influence the available contexts at a remote location, and the local cover restrictions are also a feature of locality.

This discussion is aligned with work by Dzhafarov \cite{Dzhafarov2016} and on keeping locality as a fundamental principle of nature, provided we weaken measurement independence and assume non-Nashian free choice instead \cite{Baczyk2023}.

Future work involves a generalization of the categorical equivalence between games and causal contextuality scenarios to a more limited set of assumptions. It also involves further exploration on the consequences of the 

\section{Acknowledgements}

I thank my students at ETH who contributed to the game framework in various projects over the years: Felipe Sulser, Emilien Pilloud, Ramon Gomm, Luc Stoffer, Martin Kucera, Michael Baczyk, Florian Pauschitz. I thank Ben Moseley, Renato Renner, Kuntal Sengupta, Ognyan Oreshkov, Jacopo Surace, Beatrix Hiesmayr, Marek Szopa, Kim Vallée, Stefan Weigert for stimulating discussions and my co-author Michael Baczyk for his valuable feedback on the mapping between games and scenarios.

\bibliography{mapping}

\section*{Declarations}

\subsection*{Funding}
We thank Gustavo Alonso for financing the trip to the IQSA 2024 conference in Brussels.
\subsection*{Competing interests}
The authors have no relevant financial or non-financial interests to disclose.
\subsection*{Author contribution}
The author wrote, read and approved the final manuscript.

\end{document}